\newcommand{\R}{\mathbb{R}}
\newcommand{\N}{\mathbb{N}}
\newcommand{\T}{\top}
\newcommand{\I}{\mathbf{I}}
\newcommand{\0}{\mathbf{0}}
\newcommand{\E}{\mathcal{E}}
\newcommand{\diag}{\text{diag}}
\newcommand{\tsup}[1]{\textsuperscript{#1}}
\newcommand{\mb}[1]{\mathbf{#1}}
\newcommand{\bm}[1]{\begin{bmatrix}#1\end{bmatrix}}
\def\BibTeX{{\rm B\kern-.05em{\sc i\kern-.025em b}\kern-.08em
    T\kern-.1667em\lower.7ex\hbox{E}\kern-.125emX}}
\newtheorem{assumption}{\textbf{Assumption}}
\newtheorem{theorem}{\textbf{Theorem}}
\newtheorem{corollary}{\textbf{Corollary}}
\newtheorem{lemma}{\textbf{Lemma}}
\newtheorem{proposition}{\textbf{Proposition}}
\newtheorem{remark}{\textbf{Remark}}
\newtheorem{definition}{\textbf{Definition}}
\begin{document}

\title{
{\LARGE \textbf{
Dissipativity-Based Distributed Control and Communication Topology Co-Design for Voltage Regulation and Current Sharing in DC Microgrids
}}}



\author{Mohammad Javad Najafirad and Shirantha Welikala 
\thanks{The authors are with the Department of Electrical and Computer Engineering, School of Engineering and Science, Stevens Institute of Technology, Hoboken, NJ 07030, \texttt{{\small \{mnajafir,swelikal\}@stevens.edu}}.}}

\maketitle


\begin{abstract}
This paper presents a novel dissipativity-based distributed droop-free control approach for voltage regulation and current sharing in DC microgrids (MGs) comprised of an interconnected set of distributed generators (DGs), loads, and power lines. First, we describe the closed-loop DC MG as a networked system where the DGs and lines (i.e., subsystems) are interconnected via a static interconnection matrix. This interconnection matrix demonstrates how the inputs, outputs, and disturbances of DGs and lines are connected in a DC MG. Each DG is equipped with a local controller for voltage regulation and a distributed global controller for current sharing, where the local controllers ensure individual voltage tracking while the global controllers coordinate among DGs to achieve proportional current sharing. To design the distributed global controllers, we use the dissipativity properties of the subsystems and formulate a linear matrix inequality (LMI) problem. To support the feasibility of this problem, we identify a set of necessary local and global conditions to enforce in a specifically developed LMI-based local controller design process. 
In contrast to existing DC MG control solutions, our approach proposes a unified framework for co-designing the distributed controller and communication topology. 
As the co-design process is LMI-based, it can be efficiently implemented and evaluated using existing convex optimization tools. The effectiveness of the proposed solution is verified by simulating an islanded DC MG in a MATLAB/Simulink environment under different scenarios, such as load changes and topological constraint changes, and then comparing the performance with the droop control algorithm. 
\end{abstract}

\noindent 
\textbf{Index Terms}—\textbf{DC Microgrid, Power Systems, Voltage Regulation, Distributed Control, Networked Systems, Dissipativity-Based Control, Topology Design.}

\section{Introduction} 
The microgrid (MG) concept has been introduced as a comprehensive framework for the cohesive coordination of distributed generators (DGs), variable loads, and energy storage units within a controllable electrical network to facilitate the efficient integration of renewable energy resources such as wind turbines and photovoltaic systems \cite{liu2023resilient}. DC MGs have gained more attention in recent years due to the growing demand for DC loads such as data centers, electric vehicle chargers, and LED lighting. In addition, DC MGs offer distinct advantages over AC systems by eliminating unnecessary conversion stages and removing frequency regulation \cite{dou2022distributed}.

The two primary control goals in DC MGs are voltage regulation and current sharing. To achieve these goals, centralized \cite{mehdi2020robust}, decentralized \cite{peyghami2019decentralized}, and distributed control \cite{xing2019distributed} are proposed. Although the centralized approach provides controllability and observability, it suffers from a single point of failure \cite{guerrero2010hierarchical}. In decentralized control, only a local controller is required; hence, there is no communication among DGs \cite{khorsandi2014decentralized} that compromises the proportional current sharing. The lack of coordination can be solved by developing distributed control in which the DGs can share their variables with their neighbors through a communication network \cite{dehkordi2016distributed}.

The conventional decentralized control approach is droop control, where each DG operates independently based on local measurements without communication among units. However, due to line impedance mismatch and droop characteristics, pure decentralized droop control cannot simultaneously achieve voltage regulation and current sharing. To address this limitation, distributed control solutions have been developed, which can be categorized into two main approaches. The first approach maintains the droop mechanism but adds distributed secondary control layers. In hierarchical droop-based systems \cite{zhou2020distributed}, a distributed secondary control is applied to achieve average voltage regulation and proper load sharing \cite{nasirian2014distributed}. While these distributed approaches improve performance over pure decentralized methods, they still fundamentally rely on droop mechanisms and require careful tuning of droop coefficients to balance conflicting objectives. The second approach completely eliminates the droop mechanism in favor of droop-free distributed control algorithms. A droop-free optimal feedback control is introduced in \cite{dissanayake2019droop} to replace the droop mechanism entirely. In \cite{zhang2022droop}, a dynamic consensus algorithm is proposed where droop-free cooperative control is constructed. These approaches rely solely on distributed coordination through communication networks to achieve both voltage regulation and current sharing objectives.

\begin{figure}
    \centering
    \includegraphics[width=0.99\columnwidth]{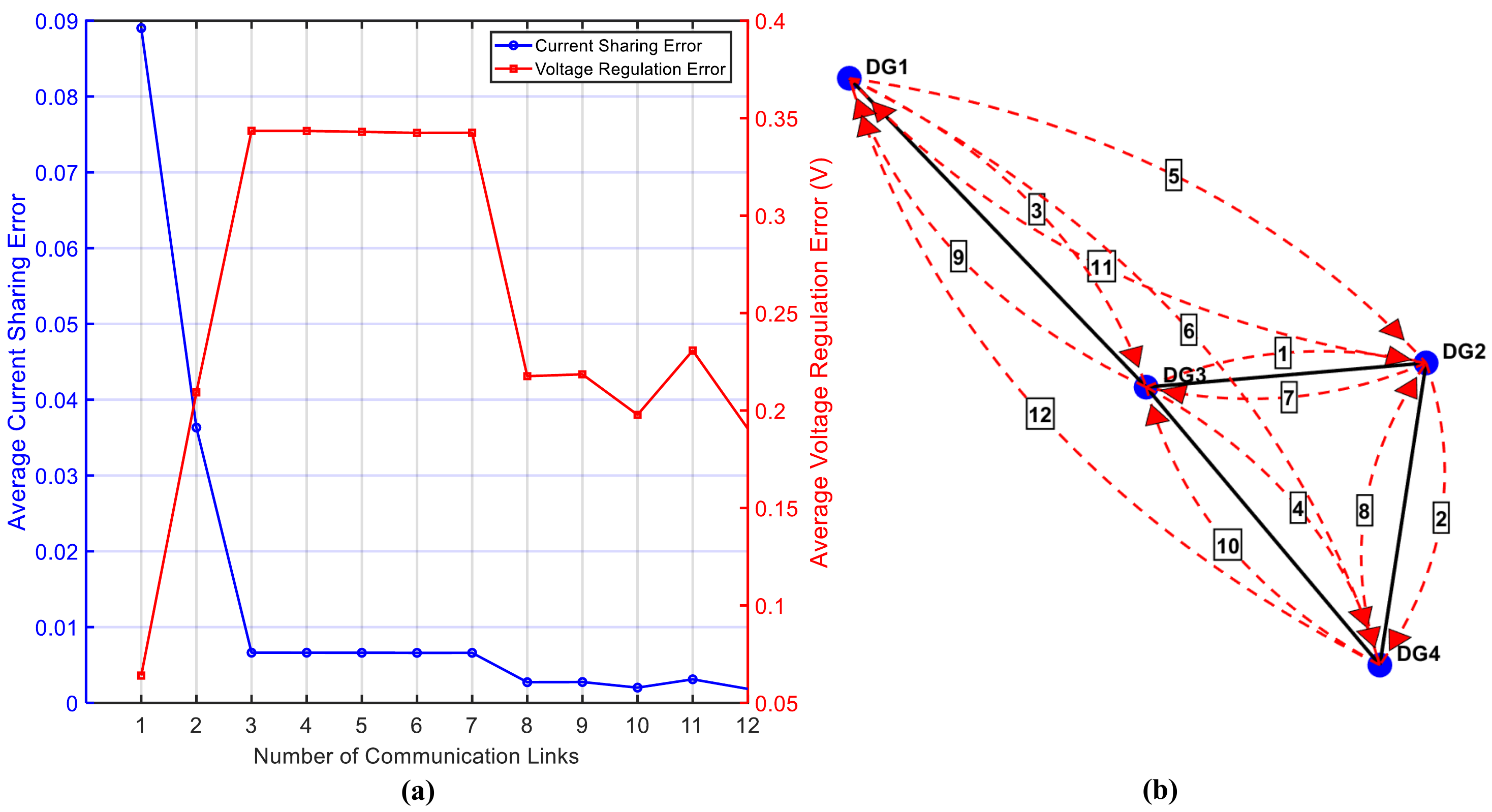}
    \caption{The impact of communication links on DC MG performance: (a) average voltage and current sharing error for different numbers of communication links and (b) corresponding communication network topology.}
    \label{fig.initialresult}
\end{figure}

However, conventional distributed controller design proceeds independently from communication topology considerations, with communication network structures often assumed to be static and predetermined. Many existing approaches simply align the communication topology with the physical electrical connections (i.e., DGs communicate only with electrically connected neighbors), without considering that optimal information exchange patterns may differ from physical connectivity constraints. Recent research has explored communication optimization techniques, such as distributed event-triggered control \cite{najafirad1}, to reduce communication overhead in DC MGs, but these approaches do not account for the practical reality that communication topologies can vary over time due to the inherently intermittent nature of DGs. Moreover, recent advances in communication technologies, particularly software-defined radio (SDR) networks \cite{jin2017toward} and adaptive massive MIMO beamforming techniques \cite{chataut2020massive}, have eliminated the need for fixed communication structures. In fact, communication topology changes are often necessary to overcome large-scale disruptions, minimize disturbances, and address cybersecurity threats, making adaptive and reconfigurable communication networks essential for resilient MG operations.

Figure \ref{fig.initialresult} illustrates that the relationship between communication network complexity (number of links) and control performance in DC MGs exhibits a non-monotonic behavior. As shown in Fig. \ref{fig.initialresult}(a), increasing communication links initially improves both current sharing accuracy and voltage regulation performance but reaches an optimal threshold beyond which additional links provide diminishing returns \cite{zhou2020distributed,lou2018optimal}. The corresponding communication network topology in Fig. \ref{fig.initialresult}(b) demonstrates how the communication links are structured among DGs. This phenomenon occurs because excessive communication links can introduce unnecessary complexity, delays, and potential points of failure without corresponding performance benefits \cite{sheng2022optimal}. Therefore, identifying the minimal communication topology that achieves optimal voltage regulation and current sharing represents a critical design challenge that motivates the need for systematic co-design approaches that can simultaneously optimize both controller parameters and communication network structure.

This flexibility to co-design controllers and topology creates opportunities to develop innovative control strategies that can advantageously utilize customizable and reconfigurable communication topologies, leading to more robust and efficient MG operations. A cost-effective communication network while ensuring the performance of secondary control is addressed in \cite{hu2021cost} for MGs. In \cite{lou2018optimal}, a distributed secondary control is proposed to optimize the communication topology and controller parameters. The joint optimization problem of communication topology and weights is proposed in \cite{sheng2022optimal} to improve attack resilience and dynamic performance. In \cite{huang2024optimal}, the topology and weights of the communication network are designed for distributed control in discrete time to optimize the convergence performance of distributed secondary control in MG. However, their \cite{lou2018optimal,sheng2022optimal,hu2021cost,huang2024optimal} design procedure follows a sequential rather than a co-design strategy. Such a separation of design may result in suboptimal overall system performance compared to what could be achieved through co-design of both communication topology and controller parameters.

Passivity is a fundamental concept in control theory, offering a systematic approach to ensure stability in complex networked systems where traditional Lyapunov methods may be difficult to apply \cite{loranca2021data}. Passive systems maintain inherent stability even when interconnected with other passive systems through parallel or feedback connections \cite{hassan2020constant}. When a converter circuit becomes impassive, Passivity-based control (PBC) adjusts energy dissipation by adding a virtual resistance matrix that reduces oscillations and ensures global stability \cite{hassan2018adaptive}. This property enables a modular approach in which individual passive converters, supported by the PBC discipline and the self-disciplined stabilization nature of the interconnection strategy, ensure system stability without complex integrated network analysis \cite{gu2014passivity}. 

Moreover, dissipativity theory provides a more comprehensive framework that generalizes passivity by considering broader classes of supply rates and storage functions \cite{arcak2022}. A key advantage of dissipativity-based approaches is that they require only knowledge of the input-output energy relationships (in other words, dissipativity properties) of subsystems rather than their complete dynamic models \cite{welikala2023non}. This energy-based dissipativity abstraction significantly simplifies analysis and design, as identifying dissipativity properties is much more convenient, efficient, and reliable than developing detailed mathematical models of complex components \cite{arcak2022}. Furthermore, dissipativity conditions can be formulated as linear matrix inequalities (LMIs), allowing efficient numerical implementation and systematic controller synthesis \cite{welikala2023platoon}. The compositionality property of dissipative systems with compatible supply rates extends the compositionality benefits of passive systems due to greater flexibility to characterize subsystem behavior \cite{welikala2023non}. Due to these advantages, passivity and broader dissipativity concepts have gained popularity in industrial applications, particularly for power electronic converters and MGs \cite{kwasinski2010dynamic}.

In \cite{malan2024passivity}, a novel four-stage distributed controller is proposed that achieves power sharing and voltage regulation in DC MGs with both actuated and unactuated buses by exploiting dissipativity properties to ensure closed-loop stability across varying network topologies and actuation states. In \cite{nahata2020passivity}, a passivity-based approach to voltage stabilization in DC microgrids with ZIP loads provides explicit control gain inequalities that ensure system stability while enabling plug-and-play (PnP) operation. The authors of \cite{tucci2016decentralized} developed a decentralized voltage control method for DC MGs that enables PnP capability through local controllers requiring only knowledge of the corresponding DG's parameters. A distributed controller using PBC is proposed in \cite{cucuzzella2019distributed} to ensure proportional current sharing while maintaining voltage regulation in DC MGs with dynamic line models. In \cite{agarwal2020distributed}, a distributed synthesis method is developed for local controllers applicable to DC MGs where the addition of new components does not require a redesign of existing controllers. The line-independent PnP control approach is introduced in \cite{tucci2017line} for voltage restoration in DC MGs where local controllers can be designed independently of the power line parameters. In \cite{abdolahi2024passivity}, a combined PBC and nonlinear disturbance observer is developed for boost converters with constant power loads (CPLs) in DC MGs, achieving faster response times and better stability than conventional controllers. In \cite{malan2025passivation}, a PBC using LMIs is proposed to stabilize DC MGs with voltage-setting and voltage-following buses, ensuring asymptotic stability despite uncertain and nonmonotone loads.

This paper introduces a novel dissipativity-based distributed control framework for DC MGs that eliminates the need for traditional droop characteristics. Unlike existing droop-free distributed approaches \cite{dissanayake2019droop,zhang2022droop} that focus primarily on controller design with predetermined communication topologies, our method simultaneously optimizes both the distributed controller parameters and the communication network structure. Furthermore, while previous distributed control methods rely on traditional control synthesis techniques, our approach leverages dissipativity theory to provide systematic stability guarantees and enables a unified co-design framework. Our approach views DGs, loads, and transmission lines as interconnected energy systems and focuses on their fundamental energy exchanges. By applying dissipativity principles from system theory, we develop a comprehensive model that ensures both system stability and robust power sharing with disturbance rejection capabilities. Furthermore, we propose an innovative co-design methodology that simultaneously optimizes the distributed controllers and the communication network topology using LMIs. 

The main contributions of this paper can be outlined as:
\begin{enumerate}
\item
For the first time, we formulate the DC MG control problem as a networked system where the DGs, loads, and transmission lines are connected through an interconnection matrix, allowing systematic analysis and design of the entire MG.
\item 
We develop a novel co-design methodology that simultaneously optimizes local controllers, distributed controllers, and communication network topology to overcome sequential design limitations and achieve superior system performance.
\item 
We propose a droop-free control framework that combines dissipativity-based local voltage control with distributed current sharing to eliminate traditional droop mechanisms while ensuring both voltage regulation and proportional current sharing.
\item 
We formulate all design problems as LMI-based convex optimization problems to enable efficient numerical implementation and scalable controller synthesis with systematic stability guarantees.
\end{enumerate}

A preliminary version of this work was presented in \cite{najafi2025distributed}. This journal version extends the conference work by incorporating consensus-based distributed controllers for current sharing, error dynamics formulation for improved tracking performance, and comprehensive disturbance modeling for realistic operating conditions. This paper is structured as follows. Section \ref{Preliminaries} covers the necessary preliminary concepts on dissipativity and networked systems. Section \ref{problemformulation} presents the DC MG modeling and co-design problem formulation. The hierarchical control design is presented in Section \ref{Sec:ControlDesign}. 
The distributed control and topology co-design approach based on dissipativity is detailed in Section \ref{Passivity-based Control}. The simulation results are presented in Section \ref{Simulation}, followed by the conclusions in Section \ref{Conclusion}.

\section{Preliminaries}\label{Preliminaries}

\subsubsection*{\textbf{Notations}}
The notation $\mathbb{R}$ and $\mathbb{N}$ signify the sets of real and natural numbers, respectively. 
For any $N\in\mathbb{N}$, we define $\mathbb{N}_N\triangleq\{1,2,..,N\}$.
An $n \times m$ block matrix $A$ is denoted as $A = [A_{ij}]_{i \in \mathbb{N}_n, j \in \mathbb{N}_m}$. Either subscripts or superscripts are used for indexing purposes, e.g., $A_{ij} \equiv A^{ij}$.
$[A_{ij}]_{j\in\mathbb{N}_m}$ and $\diag([A_{ii}]_{i\in\mathbb{N}_n})$ represent a block row matrix and a block diagonal matrix, respectively.
$\0$ and $\I$, respectively, are the zero and identity matrices (dimensions will be clear from the context). A symmetric positive definite (semi-definite) matrix $A\in\mathbb{R}^{n\times n}$ is denoted by $A>0\ (A\geq0)$. The symbol $\star$ represents conjugate blocks inside block a symmetric matrices, $\mathcal{H}(A)\triangleq A + A^\T$ and $\mb{1}_{\{ \cdot \}}$ is the indicator function.

\subsection{Dissipativity}
Consider a general non-linear dynamic system
\begin{equation}\label{dynamic}
\begin{aligned}
    \dot{x}(t)=f(x(t),u(t)),\\
    y(t)=h(x(t),u(t)),
    \end{aligned}
\end{equation}
where $x(t)\in\mathbb{R}^n$, $u(t)\in\mathbb{R}^q$, $y(t)\in\mathbb{R}^m$, and $f:\mathbb{R}^n\times\mathbb{R}^q\rightarrow\mathbb{R}^n$ and $h:\mathbb{R}^n\times\mathbb{R}^q\rightarrow\mathbb{R}^m$ are continuously differentiable. 
The equilibrium points of (\ref{dynamic}) are such that there is a unique $u^*\in\mathbb{R}^q$ such that $f(x^*,u^*)=0$ for any $x^*\in\mathcal{X}$, where $\mathcal{X}\subset\mathbb{R}^n$ is the set of equilibrium states. And both $u^*$ and $y^*\triangleq h(x^*,u^*)$ are implicit functions of $x^*$.

The \textit{equilibrium-independent-dissipativity} (EID) \cite{arcak2022} is defined next to examine dissipativity of (\ref{dynamic}) without the explicit knowledge of its equilibrium points. 

\begin{definition}
The system \eqref{dynamic} is called EID under supply rate $s:\mathbb{R}^q\times\mathbb{R}^m\rightarrow\mathbb{R}$ if there is a continuously differentiable storage function $V:\mathbb{R}^n\times\mathcal{X}\rightarrow\mathbb{R}$ such that  $V(x,x^*)>0$ when $x\neq x^*$, $V(x^*,x^*)=0$, and 
\begin{center}
    $\dot{V}(x,x^*)=\nabla_xV(x,x^*)f(x,u)\leq s(u-u^*,y-y^*)$,
\end{center}
for all $(x,x^*,u)\in\mathbb{R}^n\times\mathcal{X}\times\mathbb{R}^q$.
\end{definition}

This EID property can be specialized based on the used supply rate $s(.,.)$.
The $X$-EID property, defined in the sequel, uses a quadratic supply rate determined by a coefficient matrix $X=X^\top\triangleq[X^{kl}]_{k,l\in\mathbb{N}_2}\in\mathbb{R}^{q+m}$ \cite{welikala2023non}.

\begin{definition}
The system (\ref{dynamic}) is $X$-EID if it is EID under the quadratic supply rate:
\begin{center}
$
s(u-u^*,y-y^*)\triangleq
\scriptsize
\begin{bmatrix}
    u-u^* \\ y-y^*
\end{bmatrix}^\top
\begin{bmatrix}
    X^{11} & X^{12}\\ X^{21} & X^{22}
\end{bmatrix}
\begin{bmatrix}
    u-u^* \\ y-y^*
\end{bmatrix}.
\normalsize
$
\end{center}
\end{definition}

\begin{remark}\label{Rm:X-DissipativityVersions}
If the system (\ref{dynamic}) is $X$-EID with:\\
1)\ $X = \scriptsize\begin{bmatrix}
    \0 & \frac{1}{2}\I \\ \frac{1}{2}\I & \0
\end{bmatrix}\normalsize$, then it is passive;\\
2)\ $X = \scriptsize\begin{bmatrix}
    -\nu\I & \frac{1}{2}\I \\ \frac{1}{2}\I & -\rho\I
\end{bmatrix}\normalsize$, then it is strictly passive ($\nu$ and $\rho$ are the input feedforward and output feedback passivity indices, denoted as IF-OFP($\nu,\rho$));\\
3)\ $X = \scriptsize\begin{bmatrix}
    \gamma^2\I & \0 \\ \0 & -\I
\end{bmatrix}\normalsize$, then it is $L_2$-stable ($\gamma$ is the $L_2$-gain, denoted as $L2G(\gamma)$);\\
in an equilibrium-independent manner (see also \cite{WelikalaP42022}). 
\end{remark}

If the system (\ref{dynamic}) is linear time-invariant (LTI), a necessary and sufficient condition for $X$-EID is provided in the following proposition as a linear matrix inequality (LMI) problem.

\begin{proposition}\label{Prop:linear_X-EID} \cite{welikala2023platoon}
The LTI system
\begin{equation*}\label{Eq:Prop:linear_X-EID_1}
\begin{aligned}
    \dot{x}(t)&=Ax(t)+Bu(t),\\
    y(t)&=Cx(t)+Du(t),
\end{aligned}
\end{equation*}
is $X$-EID if and only if there exists $P>0$ such that
\begin{equation*}\label{Eq:Prop:linear_X-EID_2}
\scriptsize
\begin{bmatrix}
-\mathcal{H}(PA)+C^\top X^{22}C & -PB+C^\top X^{21}+C^\top X^{22}D\\
\star & X^{11}+\mathcal{H}(X^{12}D)+D^\top X^{22}D
\end{bmatrix}
\normalsize
\geq0.
\end{equation*}
\end{proposition}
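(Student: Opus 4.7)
The plan is to test the standard quadratic candidate $V(x,x^*) = (x-x^*)^\T P(x-x^*)$ for the storage function and reduce the resulting dissipation inequality to a single quadratic form, whose nonnegativity on all of $\R^{n+q}$ translates directly into the claimed LMI. For LTI dynamics with a quadratic supply rate, this characterization is tight in both directions.

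For sufficiency, I would fix an arbitrary equilibrium $(x^*,u^*)$ satisfying $Ax^*+Bu^*=\0$ and work with the shifted signals $\tilde x = x-x^*$, $\tilde u = u-u^*$, $\tilde y = y-y^* = C\tilde x + D\tilde u$. Since $\dot{\tilde x}=A\tilde x + B\tilde u$, the candidate $V$ satisfies $V>0$ off the equilibrium, $V(x^*,x^*)=0$, and
\begin{equation*}
\dot V = \tilde x^\T \H(PA)\tilde x + 2\tilde x^\T PB\tilde u.
\end{equation*}
Substituting $\tilde y = C\tilde x + D\tilde u$ into $s(\tilde u,\tilde y)$ and collecting terms expresses $s - \dot V$ as a quadratic form in $[\tilde x^\T,\tilde u^\T]^\T$. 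A direct bookkeeping calculation shows that the coefficient matrix of this form is exactly the block matrix appearing in the proposition. Its positive semidefiniteness, i.e., the LMI, is therefore equivalent to $\dot V \leq s$ for all admissible $(\tilde x,\tilde u)$, which is the required dissipation inequality.

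For necessity, I would reverse the calculation: the LMI is the matrix form of the requirement $s - \dot V \geq 0$ when $V$ is quadratic, so once the existence of a quadratic storage function is established, the LMI follows. The subtle point is that the proposition allows arbitrary continuously differentiable storage functions, whereas the construction works only with quadratic ones. This gap is closed by a Kalman--Yakubovich--Popov (KYP) type argument: for LTI dynamics with a quadratic supply rate, if any admissible storage function exists then a quadratic one exists as well. Translation-invariance of the LTI dynamics further means that shifting to error coordinates eliminates the dependence on $(x^*,u^*)$, so the equilibrium-independent condition reduces to a single KYP inequality on the shifted triple $(A,B,C,D)$ at the origin, and the standard KYP lemma supplies one $P>0$ that is uniform in the equilibrium.

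The main obstacle is this necessity step, specifically justifying that a single quadratic storage function suffices uniformly over all equilibria. The LTI structure is essential here: without it, one would need a more delicate argument relating a possibly non-quadratic storage function at each equilibrium to a common quadratic surrogate. The sufficiency side, by contrast, is a bookkeeping exercise in quadratic-form manipulation requiring no machinery beyond the definition of $\H(\cdot)$ and the symmetry $X^{21}=(X^{12})^\T$.
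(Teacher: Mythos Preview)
The paper does not prove this proposition: it is stated as a preliminary result and attributed to \cite{welikala2023platoon}, with no argument given in the present paper. There is therefore no in-paper proof to compare against.

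On its own merits, your proposal is correct and follows the standard route. The sufficiency computation is right: with $V(x,x^*)=(x-x^*)^\T P(x-x^*)$ and shifted variables $\tilde x,\tilde u,\tilde y$, expanding $s(\tilde u,\tilde y)-\dot V$ and collecting into a single quadratic form in $[\tilde x^\T,\tilde u^\T]^\T$ yields exactly the block matrix in the statement, so positive semidefiniteness of that matrix is equivalent to the dissipation inequality holding for all $(\tilde x,\tilde u)$. Your handling of necessity is also the right idea: translation invariance of the LTI dynamics collapses the equilibrium-independent requirement to ordinary dissipativity at the origin for the shifted system, and the KYP lemma then guarantees that if any storage function works, a quadratic one with some $P>0$ does, uniformly in the equilibrium. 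The only point worth tightening in a written proof is to cite a KYP statement that covers general symmetric $X$ (not just sign-definite supply rates) so that the ``quadratic storage suffices'' step is fully justified; once that is in place, the argument is complete.
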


The following corollary considers a specific LTI system with a local controller $u(t)=Lx(t)$ (a setup that will be useful later) and formulates an LMI problem for $X$-EID enforcing local controller synthesis.
\begin{corollary}\label{Col.LTI_LocalController_XEID}\cite{welikala2023platoon}
The LTI system 
\begin{equation*}
    \dot{x}(t)=(A+BL)x(t)+\eta(t), \quad y(t) = x(t),
\end{equation*}
is $X$-EID with $X^{22}<0$ if and only if there exists $P>0$ and $K$ such that
\begin{equation*}
\scriptsize
    \begin{bmatrix}
        -(X^{22})^{-1} & P & 0\\
        \star & -\mathcal{H}(AP+BK) & -\I+PX^{21}\\
        \star & \star & X^{11}
    \end{bmatrix}\normalsize\geq0,
\end{equation*}
and $L=KP^{-1}$.
\end{corollary}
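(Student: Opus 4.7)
The plan is to derive the corollary directly from Proposition~\ref{Prop:linear_X-EID} by specializing it to the closed-loop system and then performing a change of variables together with a Schur complement to convert the resulting matrix inequality, which is nonlinear in the product $P_0 B L$, into a convex LMI in new unknowns $P$ and $K$. First, I would identify the closed-loop system with the LTI template of Proposition~\ref{Prop:linear_X-EID} via $A \mapsto A+BL$, $B \mapsto \I$, $C \mapsto \I$, and $D \mapsto \0$. This produces the necessary and sufficient condition that there exists $P_0 > 0$ such that
\begin{equation*}
\begin{bmatrix} -\mathcal{H}(P_0(A+BL)) + X^{22} & -P_0 + X^{21} \\ \star & X^{11} \end{bmatrix} \geq 0.
\end{equation*}

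Next, I would introduce the change of variables $P := P_0^{-1} > 0$ and $K := LP$, which is a bijection with inverse $L = KP^{-1}$ (matching the corollary's claim), and apply the congruence transformation with $\mathrm{diag}(P, \I)$ to both sides of the LMI above. Using $P \cdot P_0(A+BL) \cdot P = (A+BL)P = AP + BK$, the $(1,1)$ block transforms into $-\mathcal{H}(AP+BK) + PX^{22}P$, the $(1,2)$ block into $-\I + PX^{21}$, and the $(2,2)$ block remains $X^{11}$. Since congruence with an invertible matrix preserves semi-definiteness in both directions, this transformed $2\times 2$ LMI in $(P,K)$ is equivalent to the original one from Proposition~\ref{Prop:linear_X-EID}.

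Finally, I would remove the quadratic term $PX^{22}P$ in the $(1,1)$ block via a Schur complement argument. Because $X^{22} < 0$ by hypothesis, $-(X^{22})^{-1}$ is positive definite, so the Schur complement lemma applies in both directions: taking the Schur complement of the $3 \times 3$ LMI in the statement with respect to the pivot block $-(X^{22})^{-1}$ contributes $-[P, \0]^\T(-X^{22})[P, \0] = \mathrm{diag}(PX^{22}P, \0)$, which exactly restores the $(1,1)$ block of the preceding $2\times 2$ LMI to $-\mathcal{H}(AP+BK) + PX^{22}P$ while leaving the other blocks untouched. Chaining the two equivalences (Proposition~\ref{Prop:linear_X-EID} $\Leftrightarrow$ congruence-transformed $2\times 2$ LMI $\Leftrightarrow$ $3\times 3$ LMI) establishes the claimed iff, and the recovery $L = KP^{-1}$ follows from the change of variables.

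I expect the main obstacle to be purely bookkeeping: tracking signs carefully, confirming the directionality of the congruence and Schur complement steps, and verifying that the single hypothesis $X^{22} < 0$ is exactly what is required both to render $X^{22}$ invertible and to guarantee that the Schur pivot $-(X^{22})^{-1}$ is positive definite. No conceptual difficulty arises, since each individual step is an equivalence and the argument runs identically in the reverse direction; in particular, no separate necessity/sufficiency argument is needed.
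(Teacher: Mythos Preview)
Your proposal is correct: specializing Proposition~\ref{Prop:linear_X-EID} to the closed-loop data $(A+BL,\I,\I,\0)$, applying the congruence $\diag(P,\I)$ with $P=P_0^{-1}$ and $K=LP$, and then invoking the Schur complement (Lemma~\ref{Lm:Schur_comp}) with pivot $-(X^{22})^{-1}>0$ yields exactly the stated $3\times 3$ LMI, with each step an equivalence. The paper does not supply its own proof of this corollary (it is quoted from \cite{welikala2023platoon}), so there is nothing to compare against; your derivation is the standard and expected one.
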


\subsection{Networked Systems} \label{SubSec:NetworkedSystemsPreliminaries}

Consider the networked system $\Sigma$ in Fig. \ref{Networked}, consisting of dynamic subsystems $\Sigma_i,i\in\mathbb{N}_N$, $\Bar{\Sigma}_i,i\in\mathbb{N}_{\Bar{N}}$ and a static interconnection matrix $M$ that characterizes interconnections among subsystems, exogenous inputs $w(t)\in\mathbb{R}^r$ (e.g. disturbances) and interested outputs $z(t)\in\mathbb{R}^l$ (e.g. performance). 

\begin{figure}
    \centering
    \includegraphics[width=0.6\columnwidth]{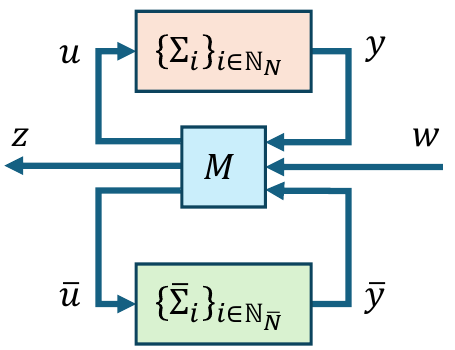}
    \caption{A generic networked system $\Sigma$.}
    \label{Networked}
\end{figure}

The dynamics of each subsystem $\Sigma_i,i\in\mathbb{N}_N$ are given by
\begin{equation}
    \begin{aligned}
        \dot{x}_i(t)&=f_i(x_i(t),u_i(t)),\\
        y_i(t)&=h_i(x_i(t),u_i(t)),
    \end{aligned}
\end{equation}
where $x_i(t)\in\mathbb{R}^{n_i}$, $u_i(t)\in\mathbb{R}^{q_i}$, $y_i(t)\in\mathbb{R}^{m_i}$. Similar to \eqref{dynamic}, each subsystem $\Sigma_i, i\in\N_N$ is considered to have a set $\mathcal{X}_i \subset \R^{n_i}$, where for every $x_i^* \in \mathcal{X}_i$, there exists a unique $u_i^* \in \R^{q_i}$ such that $f_i(x_i^*,u_i^*)=0$, and both $u_i^*$ and $y_i^*\triangleq h_i(x_i^*,u_i^*)$ are implicit function of $x_i^*$. Moreover, each subsystem $\Sigma_i, i\in\N_N$ is assumed to be $X_i$-EID, where $X_i \triangleq [X_i^{kl}]_{k,l\in\N_2}$. Regarding each subsystem $\bar{\Sigma}_i, i\in\N_{\bar{N}}$, we use similar assumptions and notations, but include a bar symbol to distinguish between the two types of subsystems, e.g.,  $\bar{\Sigma}_i$ is assumed to be $\bar{X}_i$-EID where $\bar{X}_i \triangleq [\bar{X}_i^{kl}]_{k,l\in\N_2}$.

Defining $u\triangleq[u_i^\top]^\top_{i\in\mathbb{N}_N}$, $y\triangleq[y_i^\top]^\top_{i\in\mathbb{N}_N}$, $\Bar{u}\triangleq[\Bar{u}_i^\top]^\top_{i\in\mathbb{N}_{\Bar{N}}}$ and $\bar{y}\triangleq[\bar{y}_i^\top]^\top_{i\in\mathbb{N}_{\bar{N}}}$, the interconnection matrix $M$ and the corresponding interconnection relationship are given by
\begin{equation}\label{interconnectionMatrix}
\scriptsize
\begin{bmatrix}
    u \\ \bar{u} \\ z
\end{bmatrix}=M
\normalsize
\scriptsize
\begin{bmatrix}
    y \\ \bar{y} \\ w
\end{bmatrix}
\normalsize
\equiv
\scriptsize
\begin{bmatrix}
    M_{uy} & M_{u\bar{y}} & M_{uw}\\
    M_{\bar{u}y} & M_{\bar{u}\bar{y}} & M_{\bar{u}w}\\
    M_{zy} & M_{z\bar{y}} & M_{zw}
\end{bmatrix}
\begin{bmatrix}
    y \\ \bar{y} \\ w
\end{bmatrix}.
\normalsize
\end{equation}

The following proposition exploits the $X_i$-EID and $\bar{X}_i$-EID properties of the subsystems $\Sigma_i,i\in\mathbb{N}_N$ and $\Bar{\Sigma}_i,i\in\mathbb{N}_{\Bar{N}}$ to formulate an LMI problem for synthesizing the interconnection matrix $M$ (\ref{interconnectionMatrix}), ensuring the networked system $\Sigma$ is $\textbf{Y}$-EID for a prespecified $\textbf{Y}$ under two mild assumptions \cite{welikala2023non}.

\begin{assumption}\label{As:NegativeDissipativity}
    For the networked system $\Sigma$, the provided \textbf{Y}-EID specification is such that $\textbf{Y}^{22}<0$.
\end{assumption}

\begin{remark}
Based on Rm. \ref{Rm:X-DissipativityVersions}, As. \ref{As:NegativeDissipativity} holds if the networked system $\Sigma$ must be either: (i) L2G($\gamma$) or (ii) IF-OFP($\nu,\rho$) with some $\rho>0$, i.e., $L_2$-stable or passive, respectively. Therefore, As. \ref{As:NegativeDissipativity} is mild since it is usually preferable to make the networked system $\Sigma$ either $L_2$-stable or passive.
\end{remark}

\begin{assumption}\label{As:PositiveDissipativity}
    In the networked system $\Sigma$, each subsystem $\Sigma_i$ is $X_i$-EID with $X_i^{11}>0, \forall i\in\mathbb{N}_N$, and similarly, each subsystem $\bar{\Sigma}_i$ is $\bar{X}_i$-EID with $\bar{X}_i^{11}>0, \forall i\in\N_{\bar{N}}$.
\end{assumption}

\begin{remark}
    According to Rm. \ref{Rm:X-DissipativityVersions}, As. \ref{As:PositiveDissipativity} holds if a subsystem $\Sigma_i,i\in\N_N$ is either: (i) L2G($\gamma_i$) or (ii) IF-OFP($\nu_i,\rho_i$) with $\nu_i<0$ (i.e., $L_2$-stable or non-passive). Since in passivity-based control, often the involved subsystems are non-passive (or can be treated as such), As. \ref{As:PositiveDissipativity} is also mild. 
\end{remark}

\begin{figure*}[!hb]
\vspace{-5mm}
\centering
\hrulefill
\begin{equation}\label{NSC3YEID}
\scriptsize \bm{
	\textbf{X}_p^{11} & \0 & L_{uy} & L_{u\bar{y}}\\ 
	\0 & \bar{\textbf{X}}_{\bar{p}}^{11} & L_{\bar{u}y} & L_{\bar{u}\bar{y}}\\
	L_{uy}^\T & L_{\bar{u}y}^\T & -L_{uy}^\T \textbf{X}^{12} - \textbf{X}^{21} L_{uy} -\textbf{X}_p^{22} & -\textbf{X}^{21}L_{u\bar{y}}-L_{\bar{u}y}^\T \bar{\textbf{X}}^{12} \\
	L_{u\bar{y}}^\T & L_{\bar{u}\bar{y}}^\T & -L_{u\bar{y}}^\T\textbf{X}^{12}-\bar{\textbf{X}}^{21} L_{\bar{u}y} & -L_{\bar{u}\bar{y}}^\T \bar{\textbf{X}}^{12} - \bar{\textbf{X}}^{21} L_{\bar{u}\bar{y}} - \bar{\textbf{X}}_{\bar{p}}^{22}
} \normalsize >0
\end{equation}
\end{figure*} 

\begin{proposition}\label{synthesizeM}\cite{welikala2023non}
    Under As. \ref{As:NegativeDissipativity}-\ref{As:PositiveDissipativity} (i.e., $\textbf{Y}^{22}<0$, $X_i^{11}>0$, and $\bar{X}_i^{11}>0$ for all subsystems), the network system $\Sigma$ can be made \textbf{Y}-EID (from $w(t)$ to $z(t)$) by synthesizing the interconnection matrix $M$ \eqref{interconnectionMatrix} via solving the LMI problem:
\begin{equation}
\begin{aligned}
	&\mbox{Find: } 
	L_{uy}, L_{u\bar{y}}, L_{uw}, L_{\bar{u}y}, L_{\bar{u}\bar{y}}, L_{\bar{u}w}, M_{zy}, M_{z\bar{y}}, M_{zw}, \\
	&\mbox{Sub. to: } p_i \geq 0, \forall i\in\N_N, \ \ 
	\bar{p}_l \geq 0, \forall l\in\N_{\bar{N}},\ \text{and} \  \eqref{NSC4YEID},
\end{aligned}
\end{equation}
with
\scriptsize
$\bm{M_{uy} & M_{u\bar{y}} & M_{uw} \\ M_{\bar{u}y} & M_{\bar{u}\bar{y}} & M_{\bar{u}w}} = 
\bm{\textbf{X}_p^{11} & \0 \\ \0 & \bar{\textbf{X}}_{\bar{p}}^{11}}^{-1} \hspace{-1mm} \bm{L_{uy} & L_{u\bar{y}} & L_{uw} \\ L_{\bar{u}y} & L_{\bar{u}\bar{y}} & L_{\bar{u}w}}$.
\normalsize
\end{proposition}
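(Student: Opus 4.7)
My plan is to build a composite storage function for the networked system as a weighted sum of the subsystem storage functions and then translate the resulting dissipation inequality, together with the interconnection relation \eqref{interconnectionMatrix}, into the matrix inequality \eqref{NSC4YEID}. Concretely, I would set
\begin{equation*}
V(x,x^*) = \sum_{i\in\N_N} p_i V_i(x_i,x_i^*) + \sum_{l\in\N_{\bar N}} \bar p_l \bar V_l(\bar x_l,\bar x_l^*),
\end{equation*}
with nonnegative weights $p_i,\bar p_l$. Using the $X_i$-EID property of each $\Sigma_i$ and the $\bar X_l$-EID property of each $\bar\Sigma_l$, I can bound $\dot V$ above by the block-diagonal quadratic form determined by $\textbf{X}_p \triangleq \diag([p_i X_i]_{i\in\N_N})$ and $\bar{\textbf{X}}_{\bar p} \triangleq \diag([\bar p_l \bar X_l]_{l\in\N_{\bar N}})$ evaluated on $(u-u^*,y-y^*)$ and $(\bar u-\bar u^*,\bar y-\bar y^*)$.

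Next, I would substitute the interconnection relation \eqref{interconnectionMatrix} to rewrite $u-u^*$, $\bar u-\bar u^*$, and $z-z^*$ as linear functions of $(y-y^*,\bar y-\bar y^*,w-w^*)$ through the blocks of $M$. Requiring the $\textbf{Y}$-EID supply-rate inequality $\dot V \le s_{\textbf{Y}}(w-w^*,z-z^*)$ to hold for all admissible signals then reduces to a single quadratic-form inequality in the stacked vector $(y-y^*,\bar y-\bar y^*,w-w^*)$, i.e., a matrix inequality of the form $\Gamma(M,\textbf{X}_p,\bar{\textbf{X}}_{\bar p},\textbf{Y}) \ge 0$. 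At this stage the inequality is bilinear: the blocks $M_{uy},M_{u\bar y},M_{\bar u y},M_{\bar u\bar y}$ appear multiplied by the $X^{12},X^{21},X^{22}$ components of the subsystem dissipativity matrices.

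To linearize, I would exploit As.~\ref{As:PositiveDissipativity} to invert $\textbf{X}_p^{11}$ and $\bar{\textbf{X}}_{\bar p}^{11}$ (which are positive definite whenever all $p_i,\bar p_l>0$, and handled by continuity at the boundary), and introduce the change of variables
\begin{equation*}
\begin{bmatrix} L_{uy} & L_{u\bar y} & L_{uw} \\ L_{\bar u y} & L_{\bar u\bar y} & L_{\bar u w}\end{bmatrix}
= \begin{bmatrix}\textbf{X}_p^{11} & \0 \\ \0 & \bar{\textbf{X}}_{\bar p}^{11}\end{bmatrix}
\begin{bmatrix} M_{uy} & M_{u\bar y} & M_{uw} \\ M_{\bar u y} & M_{\bar u\bar y} & M_{\bar u w}\end{bmatrix},
\end{equation*}
while leaving $M_{zy},M_{z\bar y},M_{zw}$ as free decision variables. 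Finally, I would apply a Schur-complement argument, using As.~\ref{As:NegativeDissipativity} ($\textbf{Y}^{22}<0$) to absorb the $z$-channel block, to arrive at the LMI \eqref{NSC4YEID} in the unknowns $L_{\cdot\cdot}$, $p_i$, $\bar p_l$, and the remaining $M_{z\cdot}$ blocks.

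The main obstacle is the bilinearity in $(p_i,\bar p_l,M)$ together with the cross-terms $M^\T\textbf{X}^{12}+\textbf{X}^{21}M$: any naive congruence leaves these bilinear, so the crux is recognizing that the particular weighted structure of $\textbf{X}_p$ and $\bar{\textbf{X}}_{\bar p}$, combined with the positivity of their $(1,1)$-blocks guaranteed by As.~\ref{As:PositiveDissipativity}, makes the change of variables above simultaneously well-defined and linearizing. A secondary subtlety is that the $\textbf{Y}$-EID inequality is required from $w$ to $z$, not between the internal signals, so the Schur reduction against $\textbf{Y}^{22}<0$ must be done in a way that keeps $M_{zy},M_{z\bar y},M_{zw}$ as free unknowns rather than forcing them into the nonlinear change of variables; this is exactly why only the $u,\bar u$ rows of $M$ appear in the substitution formula stated below the proposition.
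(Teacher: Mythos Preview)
The paper does not provide its own proof of this proposition: it is stated with a citation to \cite{welikala2023non} and used as a black-box tool. Your proposal is correct and is essentially the standard argument found in that reference --- composite storage function with nonnegative multipliers, substitution of the interconnection relation to obtain a quadratic form in $(y-y^*,\bar y-\bar y^*,w-w^*)$, the linearizing change of variables $L_{\cdot\cdot}=\textbf{X}_p^{11}M_{\cdot\cdot}$ enabled by As.~\ref{As:PositiveDissipativity}, and a Schur complement against $-\textbf{Y}^{22}>0$ enabled by As.~\ref{As:NegativeDissipativity}.
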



\begin{lemma}\label{Lm:Schur_comp}
\textbf{(Schur Complement)} For matrices $P > 0, Q$ and symmetric $R$, the following statements are equivalent \cite{boyd2004convex}:
\begin{subequations}
\begin{align}
1)\ &\begin{bmatrix} P & Q \\ Q^\T & R \end{bmatrix} \geq 0, \label{Lm:Schur1}\\
2)\ &R-Q^\T P^{-1}Q \geq 0. \label{Lm:Schur2}
\end{align}
\end{subequations}
\end{lemma}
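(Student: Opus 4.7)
The plan is to establish the equivalence through a single congruence transformation that block-diagonalizes the matrix in \eqref{Lm:Schur1}. The hypothesis $P > 0$ is essential because it guarantees that $P^{-1}$ exists, which is needed both to define the congruence and to form the Schur complement $R - Q^\T P^{-1} Q$ appearing in \eqref{Lm:Schur2}.

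First, I will introduce the block lower-triangular matrix $T \triangleq \bm{\I & \0 \\ -Q^\T P^{-1} & \I}$ and compute the congruence $T\bm{P & Q \\ Q^\T & R}T^\T$ by direct block multiplication. A short routine calculation shows that the off-diagonal blocks cancel by construction, producing the block-diagonal matrix $\bm{P & \0 \\ \0 & R - Q^\T P^{-1} Q}$.

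Next, I will invoke the standard fact that congruence by an invertible matrix preserves positive semidefiniteness: because $\det(T) = 1$, the matrix $T$ is invertible, and therefore the original block matrix in \eqref{Lm:Schur1} is PSD if and only if its block-diagonal counterpart is PSD. Combined with the observation that a block-diagonal symmetric matrix is PSD iff each of its diagonal blocks is PSD, and using that $P > 0$ is given by hypothesis, this reduces the PSD condition on the full block matrix to the PSD condition on the lower block $R - Q^\T P^{-1} Q$, which is precisely \eqref{Lm:Schur2}. Both directions of the equivalence follow immediately from the iff structure of the preceding steps.

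I do not anticipate a genuine obstacle since this is a classical textbook result; the only subtlety worth flagging is that the argument crucially relies on strict definiteness $P > 0$ rather than mere semidefiniteness of $P$, so that the similarity transform $T$ and the Schur complement $R - Q^\T P^{-1} Q$ are both well-defined.
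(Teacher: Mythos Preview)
Your proof is correct and is precisely the standard congruence argument for the Schur complement lemma. The paper itself does not supply a proof for this lemma; it is stated as a cited result from \cite{boyd2004convex} and used as a tool elsewhere, so there is nothing further to compare against.
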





\begin{figure*}[!hb]
\vspace{-5mm}
\centering
\begin{equation}\label{NSC4YEID}
\scriptsize
 \bm{
		\textbf{X}_p^{11} & \0 & \0 & L_{uy} & L_{u\bar{y}} & L_{uw} \\
		\0 & \bar{\textbf{X}}_{\bar{p}}^{11} & \0 & L_{\bar{u}y} & L_{\bar{u}\bar{y}} & L_{\bar{u}w}\\
		\0 & \0 & -\textbf{Y}^{22} & -\textbf{Y}^{22} M_{zy} & -\textbf{Y}^{22} M_{z\bar{y}} & \textbf{Y}^{22} M_{zw}\\
		L_{uy}^\T & L_{\bar{u}y}^\T & - M_{zy}^\T\textbf{Y}^{22} & -L_{uy}^\T\textbf{X}^{12}-\textbf{X}^{21}L_{uy}-\textbf{X}_p^{22} & -\textbf{X}^{21}L_{u\bar{y}}-L_{\bar{u}y}^\T \bar{\textbf{X}}^{12} & -\textbf{X}^{21}L_{uw} + M_{zy}^\T \textbf{Y}^{21} \\
		L_{u\bar{y}}^\T & L_{\bar{u}\bar{y}}^\T & - M_{z\bar{y}}^\T\textbf{Y}^{22} & -L_{u\bar{y}}^\T\textbf{X}^{12}-\bar{\textbf{X}}^{21}L_{\bar{u}y} & 		-(L_{\bar{u}\bar{y}}^\T \bar{\textbf{X}}^{12} + \bar{\textbf{X}}^{21}L_{\bar{u}\bar{y}}+\bar{\textbf{X}}_{\bar{p}}^{22}) & -\bar{\textbf{X}}^{21} L_{\bar{u}w} + M_{z\bar{y}}^\T \textbf{Y}^{21} \\ 
		L_{uw}^\T & L_{\bar{u}w}^\T & -M_{zw}^\T \textbf{Y}^{22}& -L_{uw}^\T\textbf{X}^{12}+\textbf{Y}^{12}M_{zy} & -L_{\bar{u}w}^\T\bar{\textbf{X}}^{12}+ \textbf{Y}^{12} M_{z\bar{y}} & M_{zw}^\T\textbf{Y}^{21} + \textbf{Y}^{12}M_{zw} + \textbf{Y}^{11}
	}\normalsize > 0
\end{equation}
\end{figure*}

\section{System Modeling}\label{problemformulation}
This section presents the dynamic modeling details of the DC MG, consisting of multiple DGs, loads, and transmission lines. Specifically, our modeling approach is motivated by \cite{nahata}, which highlights the role and impact of communication and physical topologies in DC MGs.

\subsection{DC MG Physical Interconnection Topology}
The physical interconnection topology of a DC MG is modeled as a directed connected graph $\mathcal{G}^p =(\mathcal{V},\mathcal{E})$ where $\mathcal{V} = \mathcal{D} \cup \mathcal{L}$ is bipartite: $\mathcal{D}=\{\Sigma_i^{DG}, i\in\N_N\}$ (DGs) and $\mathcal{L}=\{\Sigma_l^{line}, l\in\N_L\}$ (transmission lines). A representative diagram of a DC MG is illustrated in Fig. \ref{diagram}. The DGs are interconnected with each other through transmission lines. The interface between each DG and the DC MG is through a point of common coupling (PCC). For simplicity, the loads are assumed to be connected to the DG terminals at the respective PCCs \cite{dorfler2012kron}. Indeed loads can be moved to PCCs using Kron reduction even if they are located elsewhere \cite{dorfler2012kron}.

\begin{figure}
    \centering
    \includegraphics[width=0.5\columnwidth]{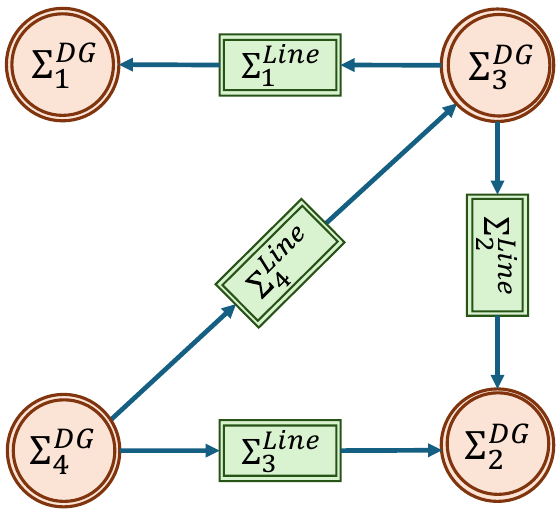}
    \caption{A simplified diagram of a DC MG network.}
    \label{diagram}
\end{figure}


To represent the DC MG's physical topology, we use its adjacency matrix $\mathcal{A} =  \scriptsize 
\begin{bmatrix}
\0 & \mathcal{B} \\
\mathcal{B}^\T & \0
\end{bmatrix},
\normalsize$  
where $\mathcal{B}\in\R^{N \times L}$ is the incident matrix of the DG network (where nodes are just the DGs and edges are just the transmission lines). Note that $\mathcal{B}$ is also known as the ``bi-adjacency'' matrix of $\mathcal{G}^p$ that describes the connectivity between its two types of nodes. In particular, $\mathcal{B}= [\mathcal{B}_{il}]_{i \in \N_N, l \in \N_L}$ with
$\mathcal{B}_{il} \triangleq \mb{1}_{\{l\in\mathcal{E}_i^+\}} - \mb{1}_{\{l\in\mathcal{E}_i^-\}},$ where $\mathcal{E}_i^+$ and $\mathcal{E}_i^-$ represent the out- and in-neighbors of $\Sigma_i^{DG}$. We define $\mathcal{E}_i = \mathcal{E}_i^+ \cup \mathcal{E}_i^-$ as the general set of all transmission lines connected to the $\Sigma_i^{DG}$.


\subsection{Dynamic Model of a Distributed Generator (DG)}
Each DG consists of a DC voltage source, a voltage source converter (VSC), and some RLC components, where each DG $\Sigma_i^{DG},i\in\N_N$ supplies power to a specific load in its PCC (denoted $\text{PCC}_i$). In addition, it interconnects with other DG units via transmission lines $\{\Sigma_l^{line}:l \in \mathcal{E}_i\}$. Figure \ref{DCMG} illustrates the schematic diagram of $\Sigma_i^{DG}$, including the local load, a connected transmission line, and the local and distributed global controllers (will be formally introduced in the sequel).

By applying Kirchhoff's Current Law (KCL) and Kirchhoff's Voltage Law (KVL) at $\text{PCC}_i$ on the DG side, we get the following equations for $\Sigma_i^{DG},i\in\N_N$:
\begin{equation}
\begin{aligned}\label{DGEQ}
\Sigma_i^{DG}:
\begin{cases}
    C_{ti}\frac{dV_i}{dt} &= I_{ti} - I_{Li}(V_i) - I_i + w_{vi}(t), \\
L_{ti}\frac{dI_{ti}}{dt} &= -V_i - R_{ti}I_{ti} + V_{ti} + w_{ci}(t),
\end{cases}
\end{aligned}
\end{equation}
where the parameters $R_{ti}$, $L_{ti}$, and $C_{ti}$ 
represent the internal resistance, internal inductance, and filter capacitance of $\Sigma_i^{DG}$, respectively. The state variables are selected as $V_i$ and $I_{ti}$, where $V_i$ is the $\text{PCC}_i$ voltage and $I_{ti}$ is the internal current. Without loss of generality, $w_{vi}(t)$ and $w_{ci}(t)$ in \eqref{DGEQ} represent unknown disturbances affecting the voltage and current dynamics, assumed to be zero mean Gaussian distributed with variances $\sigma_{vi}^2$ and $\sigma_{ci}^2$, respectively. 
In addition, $V_{ti}$ is the input command signal applied to the VSC, $I_{Li}(V_i)$ is the load current, and $I_i$ is the total current injected into the DC MG by $\Sigma_i^{DG}$, where $V_{ti}$, $I_{Li}(V_i)$, and $I_i$ are respectively determined by the controllers, loads, and lines at $\Sigma_i^{DG}$. The total line current $I_i$ is given by 
\begin{equation}
\label{Eq:DGCurrentNetOut}
I_i
=\sum_{l\in\mathcal{E}_i}\mathcal{B}_{il}I_l,
\end{equation}
where $I_l, l\in\mathcal{E}_i$ are line currents.

\subsection{Dynamic Model of a Transmission Line}
Each transmission line is modeled using the $\pi$-equivalent representation, where we assume that the line capacitances are consolidated with the capacitances of the DG filters. Consequently, as shown in Fig. \ref{DCMG}, the power line $\Sigma_l^{line}$ can be represented as an RL circuit with resistance $R_l$ and inductance $L_l$. By applying KVL to $\Sigma_l^{line}$, we obtain:
\begin{equation}\label{line}
    \Sigma_l^{line}:
        L_l\frac{dI_l}{dt}=-R_lI_l+\Bar{u}_l + \bar{w}_l(t),
\end{equation}
where $I_l$ is the line current state and $\Bar{u}_l=V_i-V_j=\sum_{i\in \mathcal{E}_l}\mathcal{B}_{il}V_i$, and $\bar{w}_l(t)$ represents an unknown disturbance that affects the line dynamics assumed to be a zero-mean Gaussian distributed with variance $\sigma_l^2$. This disturbance term captures the effect of the uncertainty in line resistance. 

\begin{figure*}
    \centering
    \includegraphics[width=2\columnwidth]{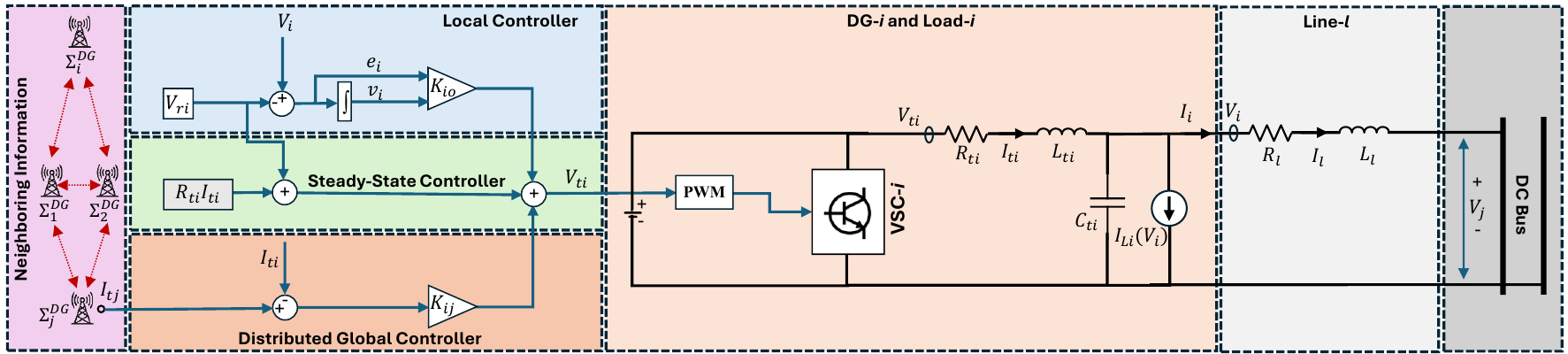}
    \caption{The electrical schematic of DG-$i$, load-$i$, $i\in\N_N$, local controller, steady-state controller, distributed global controller, and line-$l$, $l\in\N_L$.}
    \label{DCMG}
\end{figure*}

\subsection{Dynamic Model of a  Load} 
Recall that $I_{Li}(V_i)$ in \eqref{DGEQ} and Fig. \ref{DCMG} is the current flowing through the load at $\Sigma_i^{DG}, i\in\N_N$. In the most general case \cite{kundur2007power}, the load can be thought of as a ``ZIP'' load, where $I_{Li}(V_i)$ takes the form:
\begin{equation}\label{Eq:LoadModel}
I_{Li}(V_i) = I_{Li}^Z(V_i) + I_{Li}^I(V_i) + I_{Li}^P(V_i).
\end{equation}

Here, the components of the ZIP load are:
(i) a constant impedance load: $I_{Li}^{Z}(V_i)=Y_{Li}V_i$, where  $Y_{Li}=1/R_{Li}$ is the conductance of the load,
(ii) a constant current load: $I_{Li}^{I}(V_i)=\bar{I}_{Li}$ where $\bar{I}_{Li}$ is the current demand of the load, and
(iii) a constant power load: $I_{Li}^{P}(V_i)=V_i^{-1}P_{Li}$, where $P_{Li}$ represents the power demand of the load. The control strategies for regulating the voltage and current of each DG will be detailed in the following section.

\section{Proposed Controller}
The primary objective is to ensure that the $\text{PCC}_i$ voltage $V_i$ at each $\Sigma_i^{DG},i\in\N_N $ closely follows a specified reference voltage $V_{ri}$ while achieving proportional current sharing among DGs. Local PI controllers handle voltage regulation at each $\Sigma_i^{DG}$, while consensus-based distributed global controllers ensures proper current sharing across the MG.

\subsection{Local Voltage Controller}
At each $\Sigma_i^{DG}, i\in\N_N$, to effectively track the assigned reference voltage $V_{ri}$, it is imperative to ensure that the error $e_i(t)\triangleq V_i(t)-V_{ri}$ converges to zero, i.e. $\lim_{t \to \infty} (V_i(t) - V_{ri}) = 0$, which establishes voltage regulation.
To this end, motivated by \cite{tucci2017}, we first include each $\Sigma_i^{DG}, i\in\N_N$ with an integrator state $v_i$ (i.e. $v_i(t) = \int_0^t (V_i(t) - V_{ri}) \ dt$) (see Fig. \ref{DCMG}) that follows the dynamics 
\begin{equation}\label{error}
    \frac{dv_i}{dt}=e_i(t)=V_i(t)-V_{ri}.
\end{equation}
 Then, $\Sigma_i^{DG}$ is equipped with a local state feedback controller:
 
\begin{equation}\label{Controller}
  u_{iL}(t)\triangleq k_{i0}^P e_i(t)+k_{i0}^I \int_0^t e_i(t) \ dt 
  = K_{i0}x_i(t) - k_{i0}^P V_{ri},
\end{equation}
where
\begin{equation}\label{Eq:DGstate}
x_i \triangleq \begin{bmatrix}
    V_i &  I_{ti} & v_i
\end{bmatrix}^\top,
\end{equation}
denotes the augmented state of $\Sigma_i^{DG}$ and $K_{i0}=\begin{bmatrix}
    k_{i0}^P & 0 & k_{i0}^I
\end{bmatrix}\in\mathbb{R}^{1\times3}$ where $K_{i0}$ is the local controller gain, and $k_{i0}^P$ and $k_{i0}^I$ are the proportional and integral local controller gains, respectively.

\subsection{Distributed Global Controller}
We implement a distributed global controller explicitly focused on current sharing that uses a consensus-based approach to ensure proportional current sharing according to each unit's capacity. The objective is to achieve precise current sharing such that:
\begin{equation}\label{Eq:current_sharing}
\frac{I_{ti}(t)}{P_{ni}} = \frac{I_{tj}(t)}{P_{nj}} = I_s, \quad \forall i,j\in\mathbb{N}_N,
\end{equation}
where $P_{ni}$ and $P_{nj}$ represent the power ratings of $\Sigma_i^{DG}$ and $\Sigma_j^{DG}$ respectively, and $I_s$ represents the common current sharing ratio that emerges from balancing the total load demand among DGs according to their power ratings.

The current sharing index $I_s$ is determined by balancing the total system load with the available DG capacities, ensuring that the aggregate power demand does not exceed the collective generation capability of all DGs. The specific value of $I_s$ depends on the system's load conditions, DG power ratings, and desired operating margins. As we will show in Th. \ref{Th:VRegulation_CSharing}, the optimal selection of $I_s$ along with the reference voltage vector $V_r$ can be formulated as an optimization problem that ensures both feasible current sharing and voltage regulation objectives are satisfied.

To address the current sharing motivated by \eqref{Eq:current_sharing}, as shown in Fig. \ref{DCMG}, we employ a consensus-based distributed controller of the form
\begin{equation}\label{ControllerG}
    u_{iG}(t)\triangleq  \sum_{j\in\bar{\mathcal{F}}_i^-} k_{ij}^I\left(\frac{I_{tj}(t)}{P_{nj}} - \frac{I_{ti}(t)}{P_{ni}}\right),
\end{equation} 
where each $k_{ij}^I\in\mathbb{R}$ is a distributed controller gain.

To implement this distributed controller, we denote the communication topology as a directed graph $\mathcal{G}^c =(\mathcal{D},\mathcal{F})$ where $\mathcal{D}\triangleq\{\Sigma_i^{DG}, i\in\N_N\}$ and $\mathcal{F}$ represents the set of communication links among DGs. The notations $\mathcal{F}_i^+$ and $\mathcal{F}_i^-$ are defined as the communication-wise out- and in-neighbors, respectively.

Now, the overall control input $u_i(t)$ applied to the VSC of $\Sigma_i^{DG}$ (see \eqref{DGEQ}) can be expressed as:
\begin{equation}\label{controlinput}
    u_i(t) \triangleq V_{ti}(t) =  u_{iS} + u_{iL}(t) + u_{iG}(t),
\end{equation}
where $u_{iL}$ is given by \eqref{Controller}, $u_{iG}$ is given by \eqref{ControllerG}. The term $u_{iS}$ represents the steady-state control input, which, as we will see in the sequel, plays a crucial role in achieving the desired equilibrium point of the DC MG. In particular, this steady-state component ensures that the system can maintain its operating point while satisfying both voltage regulation and current sharing objectives through the sharing coefficient $I_s$. The specific structure and properties of $u_{iS}$ will be characterized through our stability analysis in Sec. \ref{Sec:Equ_Analysis}. 


\subsection{Closed-Loop Dynamics of the DC MG}

By combining \eqref{DGEQ} and \eqref{error}, the overall dynamics of $\Sigma_i^{DG}, i\in\N_N$ can be written as
\begin{subequations}\label{statespacemodel}
\begin{align}
       \frac{dV_i}{dt}&=\frac{1}{C_{ti}}I_{ti}-\frac{1}{C_{ti}}I_{Li}(V_i)-\frac{1}{C_{ti}}I_i+\frac{1}{C_{ti}}w_{vi}(t), \label{Eq:ss:voltages}\\
        \frac{dI_{ti}}{dt}&=-\frac{1}{L_{ti}}V_i-\frac{R_{ti}}{L_{ti}}I_{ti}+\frac{1}{L_{ti}}u_i + \frac{1}{L_{ti}}w_{ci}(t), \label{Eq:ss:currents}\\
        \frac{dv_i}{dt}&=V_i-V_{ri} \label{Eq:ss:ints},
\end{align}
\end{subequations}
where the terms  $I_i$, $I_{Li}(V_i)$, and $u_i$ can all be substituted from Eqs. \eqref{Eq:DGCurrentNetOut}, \eqref{Eq:LoadModel}, and \eqref{controlinput}, respectively.
Together with the commonly adopted assumption $P_{Li}\triangleq0$ (no constant power load) \cite{tucci2017}, we can restate \eqref{statespacemodel} as
\begin{equation}
\label{Eq:DGCompact}
\dot{x}_i(t)= A_ix_i(t)+B_iu_i(t)+E_id_i(t)+\xi_i(t),
\end{equation}
where $x_i(t)$ is the DG state as defined in \eqref{Eq:DGstate}, $d_i(t)$ is the exogenous input (disturbance) defined as 
\begin{equation}
d_i(t) \triangleq 
\bar{w}_i + w_i(t),
\end{equation}
with 
$\bar{w}_i \triangleq \bm{
    -\Bar{I}_{Li}  & 0 & -V_{ri}
}^\T$ representing the known fixed (mean) disturbance and
$w_i(t) \triangleq \bm{w_{vi}(t) & w_{ci}(t) & 0}^\T$ representing the unknown zero-mean disturbance. 
In \eqref{Eq:DGCompact}, $E_i \triangleq \diag(\bm{C_{ti}^{-1} & L_{ti}^{-1} & 1})$ is the disturbance input matrix, $\xi_i \triangleq \begin{bmatrix}
    -C_{ti}^{-1}\sum_{l\in \mathcal{E}_i} \mathcal{B}_{il}I_l & 0 & 0
\end{bmatrix}^\top$ is the transmission line coupling, and the remaining system matrices $A_i$, $B_i$, respectively, are
\begin{equation}\label{Eq:DG_Matrix_definition}
A_i \triangleq 
\begin{bmatrix}
  -\frac{Y_{Li}}{C_{ti}} & \frac{1}{C_{ti}} & 0\\
-\frac{1}{L_{ti}} & -\frac{R_{ti}}{L_{ti}} & 0 \\
1 & 0 & 0
\end{bmatrix}\ 
\mbox{ and }\ 
B_i \triangleq
\begin{bmatrix}
 0 \\ \frac{1}{L_{ti}} \\ 0
\end{bmatrix}.
\end{equation}
 
Similarly, using \eqref{line}, the state space representation of the transmission line $\Sigma_l^{Line}$ can be written in a compact form:
\begin{equation}\label{Eq:LineCompact}
    \dot{\bar{x}}_l(t) = \bar{A}_l\bar{x}_l(t) + \bar{B}_l\bar{u}_l + \bar{E}_l\bar{w}_l(t),
 \end{equation}
where $\bar{x}_l \triangleq I_l$ is the transmission line state,  $\bar{u}_l \triangleq \sum_{i\in \mathcal{E}_l}\mathcal{B}_{il}V_i$ is the voltage difference across the transmission line, $\bar{w}_l(t)$ is unknown zero-mean disturbance.
The disturbance input matrix is $\bar{E}_l \triangleq \bm{\frac{1}{L_{l}}}$ and the system matrices $\bar{A}_l$ and $\Bar{B}_l$ in \eqref{Eq:LineCompact} respectively, are 
\begin{equation}\label{Eq:LineMatrices}
\bar{A}_l \triangleq 
\begin{bmatrix}
-\frac{R_l}{L_l}
\end{bmatrix}\ 
\mbox{ and }\ 
\Bar{B}_l \triangleq  \begin{bmatrix}
\frac{1}{L_l}
\end{bmatrix}.  
\end{equation}

\subsection{Networked System Model}\label{Networked System Model}
Let us define $u\triangleq[u_i]_{i\in\N_N}$ and $\Bar{u}\triangleq[\Bar{u}_l]_{l\in\N_L}$ as vectorized control inputs, $x\triangleq[x_i]_{i\in\N_N}$ and $\Bar{x}\triangleq[\Bar{x}_l]_{l\in\N_L}$ as the full states, and $w\triangleq[w_i]_{i\in\N_N}$ and $\bar{w}\triangleq[\bar{w}_l]_{l\in\N_L}$ as disturbance inputs of DGs and lines from \eqref{Eq:DGCompact} and \eqref{Eq:LineCompact}, respectively.

Using these notations, we can now represent the DC MG as two sets of subsystems (i.e., DGs and lines) interconnected with disturbance inputs through a static interconnection matrix $M$ as shown in Fig. \ref{netwoked}. From comparing Fig. \ref{netwoked} with Fig. \ref{Networked}, it is clear that the DC MG takes the form of a standard networked system discussed in Sec. \ref{SubSec:NetworkedSystemsPreliminaries}, (besides a few notable differences which we will handle in the sequel). To identify the specific structure of the interconnection matrix $M$, we need to closely observe how the dynamics of DGs and lines are interconnected with each other, and their coupling with disturbance inputs.  

\begin{figure}
    \centering
    \includegraphics[width=0.9\columnwidth]{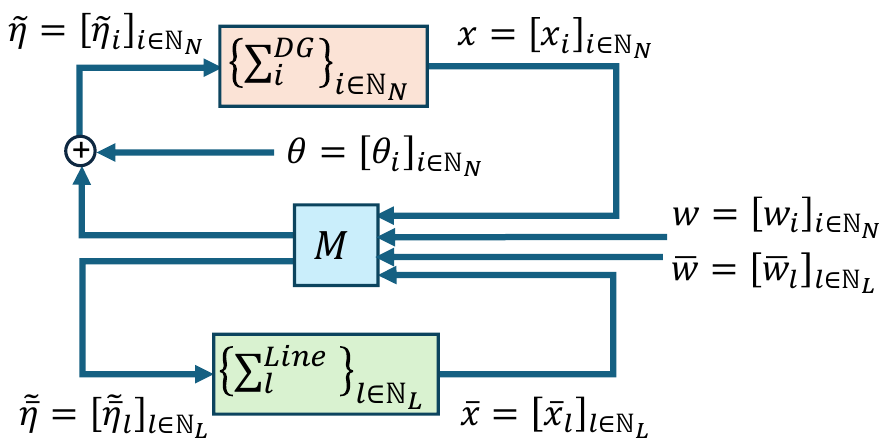}
    \caption{DC MG dynamics as a networked system configuration.}
    \label{netwoked}
\end{figure}

To this end, we first use \eqref{Eq:DGCompact} and \eqref{controlinput} to state the closed-loop dynamics of $\Sigma_i^{DG}$ as (see also Co. \ref{Col.LTI_LocalController_XEID})
\begin{equation}\label{closedloopdynamic}
    \dot{x}_i=(A_i+B_iK_{i0})x_i+\tilde{\eta}_i,
\end{equation}
where 
$\tilde{\eta}_i$ takes the form 
\begin{equation}
\label{Eq:DGClosedLoopDynamics_varphi}
\tilde{\eta}_i \triangleq \sum_{j\in\bar{\mathcal{F}}_i^-}K_{ij}x_j + \sum_{l\in\mathcal{E}_i}\Bar{C}_{il}\Bar{x}_l + E_iw_i(t) + \theta_i,
\end{equation}
with
$\Bar{C}_{il} \triangleq -C_{ti}^{-1}\begin{bmatrix}
\mathcal{B}_{il} & 0 & 0
\end{bmatrix}^\top, \forall l\in\mathcal{E}_i$, $\theta_i \triangleq  E_i\bar{w}_i + B_iu_{iS} - B_ik_{i0}^P V_{ri}$, and
\begin{equation}\label{k_ij}
    K_{ii} \triangleq
    \frac{1}{L_{ti}}
    \bm{
        0 & 0 & 0\\
        0 & \frac{\sum_{j\in\mathcal{F}_i^-} k_{ij}^I}{P_{ni}} & 0 \\
        0 & 0 & 0
    },\
        K_{ij} \triangleq
    \frac{1}{L_{ti}}
    \bm{
        0 & 0 & 0 \\
        0 & \frac{-k_{ij}^I}{P_{nj}} & 0 \\
        0 & 0 & 0    
    }.
\end{equation}

From \eqref{k_ij}, observe that only the (2,2)-th element in each block $K_{ij}$ is non-zero. Let $K_I \in \R^{N \times N}$ denote the matrix containing only these (2,2) block entries, i.e., $K_I = [K_{ij}^{2,2}]_{i,j\in\N_N}$. The controller gain matrix satisfies the weighted Laplacian property:
\begin{equation}
    K_I P_n \textbf{1}_N = 0,
\end{equation}
where $P_n = \diag(\bm{P_{ni}}_{i\in\N_N})$ and $\textbf{1}_N \in \R^N$ is the vector of ones. This ensures that the distributed control vanishes when proportional current sharing is achieved among all DGs.

By vectorizing \eqref{Eq:DGClosedLoopDynamics_varphi} over all $i\in\N_N$, we get 
\begin{equation}\label{Eq:DGClosedLoopInputVector}
    \tilde{\eta} \triangleq  Kx+\Bar{C}\Bar{x}+Ew+\theta,
\end{equation}
where $\tilde{\eta} \triangleq [\tilde{\eta}_i]_{i\in \N_N}$ represents the effective input vector to the DGs, $\Bar{C}\triangleq[\Bar{C}_{il}]_{i\in\mathbb{N}_N,l\in\mathbb{N}_L}$, $K \triangleq [K_{ij}]_{i,j\in\mathbb{N}_N}$, $E \triangleq \diag([E_i]_{i\in\N_N})$, and $\theta\triangleq [\theta_i]_{i\in\N_N}$.

\begin{remark}
The block matrices $K$ and $\Bar{C}$ in \eqref{Eq:DGClosedLoopInputVector} are indicative of the communication and physical topologies of the DC MG, respectively. The coupling gain matrix $K_{ij}$ is designed 
to capture the distributed current sharing objective through the communication network, where its structure reflects that only the current states are coupled via consensus gain $k_{ij}^I$ between $\Sigma_i^{DG}$ and $\Sigma_j^{DG}$. In particular, the $(i,j)$\tsup{th} block in $K$, i.e., $K_{ij}$ indicates a communication link from $\Sigma_j^{DG}$ to $\Sigma_i^{DG}$. Similarly, the $(i,l)$\tsup{th} block in $\bar{C}$ indicates a physical connection between $\Sigma_i^{DG}$ and $\Sigma_l^{Line}$.
\end{remark}


Following similar steps for transmission lines, using \eqref{Eq:LineCompact}, we can write
\begin{equation}
    \dot{\bar{x}}_l = \bar{A}_l\bar{x}_l + \tilde{\bar{\eta}}_l,
\end{equation}
where $\tilde{\bar{\eta}}$ can be stated as
\begin{equation}\label{Eq:linecontroller}
    \tilde{\Bar{\eta}}_l=\sum_{i\in\mathcal{E}_l}C_{il}x_i + \bar{E}_l\bar{w}_l(t),
\end{equation}
with $C_{il}\triangleq \begin{bmatrix}
    \mathcal{B}_{il} & 0 & 0
\end{bmatrix}, \forall l\in\mathcal{E}_i$. 
Note also that $C_{il} = -C_{ti}\bar{C}_{il}^\T$.
By vectorizing \eqref{Eq:linecontroller} over all $l\in\N_L$, we get
\begin{equation}\label{ubar}
    \tilde{\Bar{\eta}}=Cx + \bar{E}\bar{w},
\end{equation}
where $\tilde{\bar{\eta}} \triangleq [\tilde{\bar{\eta}}_l]_{l\in\N_L}$ represents the effective input vector to the lines, $C\triangleq[C_{il}]_{l\in\mathbb{N}_L,i\in\mathbb{N}_N}$, $\bar{E} \triangleq \diag([\bar{E}_l]_{l\in\N_L})$, and $\bar{w} \triangleq [\bar{w}_l]_{l\in\N_L}$. Note also that $C = - \bar{C}^\T C_t$ where $C_t \triangleq \diag([C_{ti}\I_{3}]_{i\in\N_N})$.





Finally, using \eqref{Eq:DGClosedLoopInputVector} and \eqref{ubar}, we can identify the interconnection relationship: 
\begin{equation}\nonumber
\begin{bmatrix}
\tilde{\eta} \\
\tilde{\bar{\eta}}
\end{bmatrix}
= M
\begin{bmatrix}
x \\
\bar{x} \\
w \\
\bar{w}
\end{bmatrix},
\end{equation}
where the interconnection matrix $M$ takes the form:
\begin{equation}\label{Eq:MMatrix}
M \triangleq 
\begin{bmatrix}
    K & \Bar{C} & E & \0 \\
    C & \0 & \0 & \bar{E}\\
\end{bmatrix}.
\end{equation}

When the physical topology $\mathcal{G}^p$ is predefined, so are the block matrices $\Bar{C}$ and $C$ (recall $C = -\bar{C}^\T C_t$). This leaves only the block matrix $K$ inside the block matrix $M$ as a tunable quantity to optimize the desired properties of the closed-loop DC MG system. Note that synthesizing $K$ simultaneously determines the distributed global controllers and the communication topology $\mathcal{G}^c$. In the following two sections, we provide a systematic approach to synthesize this block matrix $K$ to enforce stability and dissipativity, respectively.

\section{Networked Error Dynamics}\label{Sec:ControlDesign}
This section analyzes the equilibrium points and error dynamics of the DC MG system under the proposed hierarchical control structure. We examine the stability challenges where voltage regulation and current sharing objectives may conflict, and establish the mathematical foundation for the dissipativity-based control synthesis. The analysis focuses on deriving the networked error system representation and identifying the conditions necessary for system stability and performance.

\subsection{Equilibrium Point Analysis of the DC MG}\label{Sec:Equ_Analysis}
\begin{lemma}\label{Lm:equilibrium}
Assuming all external disturbances to be zero, i.e., $w_i(t)=0, \forall i\in\N_N$ and $\bar{w}_l(t)=0,\forall l\in\N_L$, for a given reference voltage vector $V_r$, constant current load vector $\bar{I}_L$, under a fixed DG control input $u_E$ defined as
\begin{equation}
    u_E \triangleq [\I + R_t(\mathcal{B}R^{-1}\mathcal{B}^\T + Y_L)]V_r + R_t \bar{I}_L, 
\end{equation}
there exists an equilibrium point for the DC MG given by:
\begin{equation}\label{Eq:Equilibrium}
\begin{aligned}
V_E &= V_{r},\\
I_{tE} &= (\mathcal{B}R^{-1}\mathcal{B}^\T + Y_L)V_r + \bar{I}_L,\\
\bar{I}_E &= R^{-1}\mathcal{B}^\T V_r,
\end{aligned}
\end{equation}
where we define the equilibrium state vectors $V_E\triangleq[V_{iE}]_{i\in\N_N}$,
$I_{tE}\triangleq[I_{tiE}]_{i\in\N_N}$,
$\bar{I}_E\triangleq[\bar{I}_{lE}]_{l\in\N_L}$ with 
$u_E\triangleq[u_{iE}]_{i\in\N_N}$ and the system parameters $C_t\triangleq\diag([C_{ti}]_{i\in\N_N})$,
$Y_L\triangleq\diag([Y_{Li}]_{i\in\N_N})$,
$L_t\triangleq\diag([L_{ti}]_{i\in\N_N})$,
$R_t\triangleq\diag([R_{ti}]_{i\in\N_N})$,
$\bar{I}_L\triangleq[\bar{I}_{Li}]_{i\in\N_N}$,
$V_r\triangleq[V_{ri}]_{i\in\N_N}$,
$R\triangleq\diag([R_l]_{l\in\mathbb{N}_L})$,
$\mathcal{B}\triangleq [\mathcal{B}_{il}]_{i\in\N_N,l\in\mathbb{N}_L}$.
\end{lemma}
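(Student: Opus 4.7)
The plan is to verify the claimed equilibrium by simply setting every time derivative in the closed-loop DC MG dynamics to zero and solving algebraically, showing that the stated $(V_E, I_{tE}, \bar{I}_E)$ together with the input $u_E$ satisfy the resulting algebraic system.

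First I would impose $\dot{x}_i=0$ and $\dot{\bar{x}}_l=0$ with all disturbances removed. From the line equation \eqref{line} at equilibrium, $R_l I_l = \sum_{i\in\mathcal{E}_l}\mathcal{B}_{il}V_i$, which in vectorized form reads $R\bar{I}_E = \mathcal{B}^\T V_E$. Since $R$ is a positive diagonal matrix it is invertible, so $\bar{I}_E = R^{-1}\mathcal{B}^\T V_E$. Setting $V_E = V_r$ gives the third line of \eqref{Eq:Equilibrium}.

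Next, from the voltage (KCL) equation \eqref{Eq:ss:voltages} at equilibrium with ZIP load reduced to $I_{Li}(V_i) = Y_{Li} V_i + \bar{I}_{Li}$ (no CPL), I would solve $I_{ti} = Y_{Li}V_i + \bar{I}_{Li} + I_i$, where $I_i = \sum_{l\in\mathcal{E}_i}\mathcal{B}_{il}I_l$. Vectorizing and substituting $\bar{I}_E = R^{-1}\mathcal{B}^\T V_r$ yields $I_{tE} = Y_L V_r + \bar{I}_L + \mathcal{B} R^{-1}\mathcal{B}^\T V_r = (\mathcal{B}R^{-1}\mathcal{B}^\T + Y_L)V_r + \bar{I}_L$, matching the second line of \eqref{Eq:Equilibrium}. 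Then from the current (KVL) equation \eqref{Eq:ss:currents} at equilibrium, $V_i + R_{ti}I_{ti} = u_i$; vectorizing and substituting the expression just obtained for $I_{tE}$ gives $u_E = V_r + R_t[(\mathcal{B}R^{-1}\mathcal{B}^\T + Y_L)V_r + \bar{I}_L] = [\I + R_t(\mathcal{B}R^{-1}\mathcal{B}^\T + Y_L)]V_r + R_t\bar{I}_L$, which matches the stated $u_E$.

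Finally, I would check the integrator dynamics \eqref{Eq:ss:ints}: the condition $\dot{v}_i = V_i - V_{ri} = 0$ is automatically satisfied by the choice $V_E = V_r$, so the augmented state \eqref{Eq:DGstate} admits a well-defined equilibrium (with $v_{iE}$ free, absorbed into the steady-state control term). There is no real obstacle here since each equilibrium relation is linear in the unknowns and all matrices that need to be inverted (namely $R$) are positive diagonal; the only thing to be careful about is keeping track of the bipartite incidence structure of $\mathcal{B}$ so that the sums over $\mathcal{E}_i$ and $\mathcal{E}_l$ vectorize consistently into $\mathcal{B}$ and $\mathcal{B}^\T$, respectively. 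This bookkeeping, rather than any genuine difficulty, is the one place where a slip could occur.
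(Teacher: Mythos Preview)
Your proposal is correct and follows essentially the same approach as the paper: set all time derivatives to zero, extract the three algebraic relations from the voltage, current, and integrator rows of \eqref{statespacemodel} together with the line equation \eqref{line}, then vectorize using the incidence matrix $\mathcal{B}$. The only cosmetic difference is ordering---the paper first reads off $V_{iE}=V_{ri}$ from the integrator row and then uses the line equilibrium to simplify the voltage row, whereas you start from the line equation and substitute forward---but the logic and the bookkeeping with $\mathcal{B}$ and $\mathcal{B}^\T$ are identical.
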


\begin{proof}
The equilibrium state of the closed-loop $\Sigma_i^{DG}$ dynamics \eqref{Eq:DGCompact} satisfies:
\begin{equation}
\label{Eq:EqDGCompact}
A_ix_{iE}(t)+B_iu_{iE}(t)+E_id_{iE}(t)+\xi_{iE}(t) = 0,
\end{equation}
where $x_{iE} \triangleq \bm{V_{iE} & I_{tiE} & v_{iE}}^\T$ represents the equilibrium state components of DG, and $d_{iE}\triangleq \bar{w}_i$ and $\xi_{iE} \triangleq \begin{bmatrix}
    -C_{ti}^{-1}\sum_{l\in \mathcal{E}_i} \mathcal{B}_{il}\bar{I}_{lE} & 0 & 0
\end{bmatrix}^\top$ represent the equilibrium values of disturbance and interconnection terms, respectively. Thus, we get
\begin{equation}
\begin{aligned}
\begin{bmatrix} -\frac{Y_{Li}}{C_{ti}} & -\frac{1}{C_{ti}} & 0\\ -\frac{1}{L_{ti}} & -\frac{R_{ti}}{L_{ti}} & 0 \\ 1 & 0 & 0\end{bmatrix} \begin{bmatrix} V_{iE} \\ I_{tiE} \\ v_{iE} \end{bmatrix} + \begin{bmatrix} 0 \\ \frac{1}{L_{ti}} \\ 0 \end{bmatrix}u_{iE} + E_i\bar{w}_i + \xi_{iE} = 0.
\end{aligned}
\end{equation}

From the first row of this matrix equation, we get:
\begin{align}
\label{Eq:steadystatevoltage}
&-\frac{Y_{Li}}{C_{ti}}V_{iE} + \frac{1}{C_{ti}}I_{tiE} - \frac{1}{C_{ti}}\bar{I}_{Li} - \frac{1}{C_{ti}}\sum_{l\in \mathcal{E}i} \mathcal{B}_{il}\bar{I}_{lE} = 0.
\end{align}

From the last two rows of this matrix equation, we get
\begin{align}
\label{Eq:controlEquil}
    &u_{iE} = V_{iE} + R_{ti}I_{tiE},\\
 \label{Eq:voltageEquil}
    &V_{iE} = V_{ri}.
\end{align}

To further simplify the first equation, we require an expression for $\bar{I}_{lE}$. Note that the equilibrium state of the $\Sigma_l^{Line}$ \eqref{Eq:LineCompact} satisfies:
\begin{equation}\label{Eq:EqLineCompact}
    \bar{A}_l\bar{x}_{lE}(t) + \bar{B}_l\bar{u}_{lE} + \bar{E}_l\bar{w}_{lE}(t) = 0,
 \end{equation}
where $\bar{x}_{lE} \triangleq \bar{I}_{lE}$ represents the equilibrium state of line. The $\bar{u}_{lE}\triangleq \sum_{i\in \mathcal{E}_l}\mathcal{B}_{il}V_{iE}$ and $\bar{w}_{lE} \triangleq 0$ represent the equilibrium values of control input and disturbance of lines, respectively. Therefore, from \eqref{Eq:EqLineCompact} we get:
\begin{equation}
\label{Eq:LineEquil}
    \bar{I}_{lE} = \frac{1}{R_l}\sum_{i\in \mathcal{E}_l}\mathcal{B}_{il}V_{iE} = \frac{1}{R_l}\sum_{j\in \mathcal{E}_l}\mathcal{B}_{jl}V_{jE},
\end{equation}
which now can be applied in \eqref{Eq:steadystatevoltage} (together with \eqref{Eq:voltageEquil}) to obtain
\begin{equation}\label{Eq:currentEquil}
    -Y_{Li}V_{ri} + I_{tiE} - \sum_{l\in\mathcal{E}_i}\mathcal{B}_{il}\bigg(\frac{1}{R_l}\sum_{j\in \mathcal{E}_l}\mathcal{B}_{jl}V_{rj}\bigg)- \bar{I}_{Li} = 0.
\end{equation}

We next vectorize these equilibrium conditions \eqref{Eq:controlEquil}, \eqref{Eq:LineEquil}, \eqref{Eq:currentEquil}. Vectorizing \eqref{Eq:currentEquil} over all $i\in\N_N$ gives:
\begin{equation}\nonumber
    -Y_L V_r + I_{tE} - \mathcal{B}R^{-1}\mathcal{B}^\T V_r - \bar{I}_L = 0,
\end{equation}
leading to
\begin{equation}\nonumber
    I_{tE} = (\mathcal{B}R^{-1}\mathcal{B}^\T + Y_L)V_r + \bar{I}_L.
\end{equation}

For the equilibrium line currents, by vectorizing \eqref{Eq:LineEquil}, we get 
\begin{equation}\nonumber
    \bar{I}_E = R^{-1}\mathcal{B}^\T V_r.
\end{equation}

Therefore, the vectorized control equilibrium equation \eqref{Eq:controlEquil} can be expressed as:
\begin{equation}
\begin{aligned}\nonumber
    u_E &= V_r + R_t I_{tE} \\
    &= V_r + R_t((\mathcal{B}R^{-1}\mathcal{B}^\T + Y_L)V_r + \bar{I}_L\\
    &= [\I + R_t(\mathcal{B}R^{-1}\mathcal{B}^\T + Y_L)]V_r + R_t\bar{I}_L.
    \end{aligned}
\end{equation}

This completes the proof, as we have proved all the equilibrium conditions given in the Lemma statement. 
\end{proof}

\begin{lemma}\label{Lm:existence_uniqueness}
For any given reference voltage vector $V_r \in \mathbb{R}^N$ and constant current load vector $\bar{I}_L \in \mathbb{R}^N$, the equilibrium state variables $(V_E, I_{tE}, \bar{I}_E)$ and the control input $u_E$ in Lm. \ref{Lm:equilibrium} exist and are unique if and only if the system matrix in the equilibrium equation is invertible.
\end{lemma}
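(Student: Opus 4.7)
The plan is to recast the equilibrium conditions derived inside the proof of Lm.~\ref{Lm:equilibrium} as a single aggregate linear algebraic system, and then invoke the standard linear-algebra criterion that a square system $\mathbf{A}z = b$ admits a unique solution for every $b$ if and only if $\mathbf{A}$ is invertible. Concretely, I would assemble the unknowns into $z \triangleq [V_E^\T, I_{tE}^\T, \bar{I}_E^\T, u_E^\T]^\T$ (omitting the integrator components $v_{iE}$, which are free at equilibrium once $V_E = V_r$, so that the coefficient matrix stays square) and stack the vectorized relations \eqref{Eq:controlEquil}, \eqref{Eq:LineEquil}, \eqref{Eq:currentEquil} together with the voltage-tracking requirement \eqref{Eq:voltageEquil} into a block form
\[
\mathbf{A}_{\text{eq}}\, z \;=\; \bm{V_r \\ \bar{I}_L \\ \mathbf{0} \\ \mathbf{0}},
\]
so that $\mathbf{A}_{\text{eq}}$ is precisely the ``system matrix in the equilibrium equation'' referenced by the lemma. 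The block rows can be read off directly: the line-current row contributes $R\bar{I}_E - \mathcal{B}^\T V_E = 0$, the DG KCL row contributes $I_{tE} - Y_L V_E - \mathcal{B}\bar{I}_E = \bar{I}_L$, the DG KVL/control row contributes $u_E - V_E - R_t I_{tE} = 0$, and the voltage-tracking row contributes $V_E = V_r$.

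The ``if'' direction then reduces to applying $\mathbf{A}_{\text{eq}}^{-1}$ to the right-hand side: the closed-form expressions \eqref{Eq:Equilibrium} produced in Lm.~\ref{Lm:equilibrium} are precisely this unique solution, and a direct substitution verifies that they satisfy every block row. For the ``only if'' direction I would argue by contraposition: if $\mathbf{A}_{\text{eq}}$ is singular, then either $\ker \mathbf{A}_{\text{eq}} \neq \{0\}$, so any candidate equilibrium can be perturbed by a kernel element and uniqueness is lost, or the right-hand side leaves $\mathrm{range}(\mathbf{A}_{\text{eq}})$ for some admissible $(V_r, \bar{I}_L)$, so existence fails for some reference. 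Either alternative contradicts the ``for any'' quantifier in the statement, giving the desired equivalence.

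I do not anticipate serious technical difficulty here; the main obstacle is cosmetic, namely fixing a block partition of $\mathbf{A}_{\text{eq}}$ compatible with the reductions already carried out in Lm.~\ref{Lm:equilibrium} and being careful to exclude the integrator component $v_{iE}$ from the unknowns so that $\mathbf{A}_{\text{eq}}$ is genuinely square (otherwise the statement would need to be recast in terms of, e.g., full column rank rather than invertibility). Once that convention is fixed, the lemma collapses to the textbook invertibility criterion, and its content amounts to recording that the closed-form expressions in \eqref{Eq:Equilibrium} are well-defined precisely under the stated rank condition on $\mathbf{A}_{\text{eq}}$.
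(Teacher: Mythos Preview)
Your argument is logically sound: stacking the equilibrium relations into a square system $\mathbf{A}_{\text{eq}}z=b$ and invoking the standard invertibility criterion handles both directions of the biconditional cleanly, and your care in excluding the integrator state $v_{iE}$ to keep the system square is exactly the right bookkeeping move.

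The paper, however, reads ``the system matrix in the equilibrium equation'' differently. Rather than the full $(3N{+}L)\times(3N{+}L)$ aggregate you assemble, it singles out the reduced $N\times N$ block $\mathcal{B}R^{-1}\mathcal{B}^\T + Y_L$ that surfaces in the $I_{tE}$ expression once $\bar{I}_E$ has been eliminated. Its argument is then purely structural: since $R^{-1}>0$ and $Y_L>0$ are diagonal positive definite and $\mathcal{B}R^{-1}\mathcal{B}^\T\succeq 0$, the sum is positive definite, hence invertible, hence the closed-form expressions of Lm.~\ref{Lm:equilibrium} are well defined and unique; $u_E$ then follows. The paper does not argue the ``only if'' direction separately.

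The two routes therefore differ in what they call the critical matrix. Your $\mathbf{A}_{\text{eq}}$ is block-triangular (after reordering) with diagonal blocks $\I,\I,R,\I$, so its invertibility reduces to $R$ being nonsingular---automatic for positive line resistances---and you get a genuine equivalence. The paper's reduced matrix carries the graph-Laplacian structure but, as you can verify, is never actually inverted in \eqref{Eq:Equilibrium}; the paper is really showing that a natural sufficient condition is always met under the standing physical assumptions rather than proving a sharp biconditional. Your formulation buys a cleaner logical treatment of the ``if and only if''; the paper's buys a concrete, physically interpretable positivity argument that avoids assembling the full block system.
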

\begin{proof}
From the dynamics of the DC MG, at equilibrium $\dot{x} = 0$, the system matrix corresponds to $(\mathcal{B}R^{-1}\mathcal{B}^{\top} + Y_L)$ in our formulation. The equilibrium state variables are given by:
\begin{align*}
V_E &= V_r, \\
I_{tE} &= (\mathcal{B}R^{-1}\mathcal{B}^{\top} + Y_L)V_r + \bar{I}_L, \\
\bar{I}_E &= R^{-1}\mathcal{B}^{\top}V_r.
\end{align*}
Since $R^{-1} > 0$ and $Y_L > 0$ are diagonal positive definite matrices, and $\mathcal{B}R^{-1}\mathcal{B}^{\top}$ is positive semidefinite for any connected DC MG topology, the matrix $(\mathcal{B}R^{-1}\mathcal{B}^{\top} + Y_L)$ is positive definite and hence invertible. Therefore, $(V_E, I_{tE}, \bar{I}_E)$ exist and are unique for any given $V_r$ and $\bar{I}_L$. The control input $u_E$ is then uniquely determined by $u_E = [I + R_t(\mathcal{B}R^{-1}\mathcal{B}^{\top} + Y_L)]V_r + R_t\bar{I}_L$.
\end{proof}


\begin{remark}\label{Lm:currentsharing}
At the equilibrium, to acheive proportional current sharing among DG units we require:
\begin{equation}\label{Eq:currentsharing}
\frac{I_{tiE}}{P_{ni}} = I_s \Longleftrightarrow I_{tiE} = P_{ni}I_s, \ \forall i\in\mathbb{N}_N,
\end{equation}
where $P_{ni}$ is the power rating of $\Sigma_i^{DG}$ and $I_s$ represents the common current sharing ratio for the entire DC MG. This can be expressed in a vectorized form as:
\begin{equation}\label{Eq:vec_currentsharing}
I_{tE} = P_{n} \mathbf{1}_N I_s,
\end{equation}
where $\mathbf{1}_N=\begin{bmatrix}1, 1 , \ldots , 1\end{bmatrix}^\top\in\mathbb{R}^{N}$ and $P_n\triangleq\diag([P_{ni}]_{i\in\mathbb{N}_N})$. This relationship ensures that, at the equilibrium, the DG currents are distributed proportionally to their power ratings $P_n$. Therefore, in light of \eqref{Eq:vec_currentsharing}, the steady control input $u_{iS}$ is determined from \eqref{Eq:controlEquil}, in the form:
\begin{equation}\label{Eq:Rm:SteadyStateControl}
    u_{iS} = V_{ri} + R_{ti}P_{ni}I_s.
\end{equation}
\end{remark}

In conclusion, using Lm. \ref{Lm:equilibrium} and Rm. \ref{Lm:currentsharing}, for the equilibrium of DC MG, we require:
\begin{equation}
    \begin{aligned}
        V_E &= V_r, \\
        I_{tE} &= (\mathcal{B}R^{-1}\mathcal{B}^\T+Y_L)V_r + \bar{I}_L = P_n \textbf{1}_N I_s,\\
        \bar{I}_E &= R^{-1} \mathcal{B}^\T V_r.
    \end{aligned}
\end{equation}

These equations establish the existence of an equilibrium point.
The current sharing objective imposes additional constraints on the selection of the reference voltage vector $V_r$ and the sharing coefficient $I_s$. These design variables must satisfy the equilibrium constraint from Lm. \ref{Lm:equilibrium} while respecting DC MG constraints.

To achieve the desired equilibrium point, the steady-state control input $u_{iS}$ from equation \eqref{Eq:Rm:SteadyStateControl} must equal the equilibrium control input $u_{E}$, i.e., $u_{iS} = u_E$. This ensures that the proposed hierarchical control strategy can maintain the system at the desired operating point. Therefore, the overall equilibrium control input is:
\begin{equation}
    u_E = [\I + R_t(\mathcal{B}R^{-1}\mathcal{B}^\T + Y_L)] V_r + R_t\bar{I}_L = V_r + R_t I_{tE}.
\end{equation}

The current sharing objective imposes additional constraints on the selection of the reference voltage vector $V_r$ and the sharing coefficient $I_s$. These design variables must satisfy the equilibrium constraint from Lm. \ref{Lm:equilibrium} while respecting DC MG constraints.



\begin{theorem}\label{Th:VRegulation_CSharing}
The optimal reference voltage vector $V_r$ and current sharing coefficient $I_s$ can be determined by solving the constrained optimization problem:


\begin{equation}
\begin{aligned}
\mbox{Find: } \quad & \alpha_V\Vert V_r - \bar{V}_r\Vert^2 + \alpha_I I_s,\\
\mbox{Sub. to:} \quad & P_n \mathbf{1}_N I_s - (\mathcal{B}R^{-1}\mathcal{B}^\T + Y_L)V_r = \bar{I}_L, \\
& V_{\min} \leq V_r \leq V_{\max}, \\
& 0 \leq I_s \leq 1, \\
\end{aligned}
\end{equation}
where $V_{\min}$ and $V_{\max}$ represent the acceptable voltage bounds for the DC MG, $\bar{V}_r$ is the desired reference voltage vector, $\alpha_V > 0$ and $\alpha_I > 0$ are normalizing weights.
\end{theorem}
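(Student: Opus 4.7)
The plan is to derive the optimization problem by combining the equilibrium characterization from Lm.~\ref{Lm:equilibrium} with the proportional current sharing requirement from Rm.~\ref{Lm:currentsharing}, and then to justify each constraint and each term of the objective in turn. First I would recall that by Lm.~\ref{Lm:equilibrium} the equilibrium DG currents must satisfy $I_{tE}=(\mathcal{B}R^{-1}\mathcal{B}^\T+Y_L)V_r+\bar{I}_L$, and by Rm.~\ref{Lm:currentsharing} proportional current sharing enforces $I_{tE}=P_n\mathbf{1}_N I_s$. Equating these two expressions immediately yields the equality constraint
\begin{equation}\nonumber
P_n\mathbf{1}_N I_s-(\mathcal{B}R^{-1}\mathcal{B}^\T+Y_L)V_r=\bar{I}_L,
\end{equation}
which is therefore a necessary and sufficient algebraic condition for a DC~MG equilibrium that simultaneously realizes $V_E=V_r$ and the proportional current sharing ratio $I_s$.

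Next I would justify the inequality constraints on physical grounds: the bound $V_{\min}\le V_r\le V_{\max}$ reflects the allowable PCC voltage operating band dictated by converter ratings and load tolerance, while $0\le I_s\le 1$ encodes a normalized loading level (with $I_s=1$ corresponding to rated capacity of every DG, since $I_{tiE}=P_{ni}I_s$). Any $V_r$ and $I_s$ that fail either bound would either violate equipment limits or require DGs to exceed their power ratings, so these constraints are inherent to safe operation rather than design choices.

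I would then interpret the objective function as the natural scalarization of the two competing control goals of the paper. The term $\alpha_V\Vert V_r-\bar{V}_r\Vert^2$ penalizes deviation from the nominal reference $\bar{V}_r$, thus quantifying voltage regulation quality; the term $\alpha_I I_s$ penalizes the magnitude of the common sharing coefficient (equivalently, total DG loading), thus giving preference to operating points with lower stressed currents among those that still satisfy the equilibrium equation. The positive weights $\alpha_V,\alpha_I>0$ tune the trade-off between these two objectives. I would close the argument by observing that Lm.~\ref{Lm:existence_uniqueness} guarantees that the matrix $\mathcal{B}R^{-1}\mathcal{B}^\T+Y_L$ is positive definite, so the equality constraint defines a nonempty affine set in $(V_r,I_s)$; the objective is convex quadratic and the admissible region is a (closed, bounded) polytope, hence the problem is a convex QP and admits a (generically unique) optimal solution computable by standard solvers.

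The main obstacle I anticipate is not the algebraic derivation, which is essentially assembling Lm.~\ref{Lm:equilibrium} and Rm.~\ref{Lm:currentsharing}, but rather ensuring feasibility: one must argue that for realistic loads $\bar{I}_L$ and voltage bounds $[V_{\min},V_{\max}]$ there exists at least one pair $(V_r,I_s)$ satisfying the equality constraint together with both box constraints. A clean way to handle this is to use the invertibility of $\mathcal{B}R^{-1}\mathcal{B}^\T+Y_L$ to parametrize $V_r=(\mathcal{B}R^{-1}\mathcal{B}^\T+Y_L)^{-1}(P_n\mathbf{1}_N I_s-\bar{I}_L)$ as an affine function of the scalar $I_s$, and then observe that the map from $I_s\in[0,1]$ into the voltage space is continuous and monotone in each component, so feasibility reduces to a one-dimensional interval-intersection check that can either be verified a~priori or enforced by slight relaxation of $\bar{V}_r$.
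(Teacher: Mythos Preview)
Your proposal is correct and follows essentially the same route as the paper: combine Lm.~\ref{Lm:equilibrium} and Rm.~\ref{Lm:currentsharing} to obtain the equality constraint, invoke Lm.~\ref{Lm:existence_uniqueness} for well-posedness, and argue convexity of the resulting problem. In fact your treatment is more thorough than the paper's own proof, which is a brief paragraph that merely cites these results and asserts convexity and uniqueness ``when feasible,'' whereas you additionally outline a concrete feasibility argument via the one-dimensional parametrization in~$I_s$.
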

\begin{proof}
The proof follows directly from Lm. \ref{Lm:existence_uniqueness} and Rm. \ref{Lm:currentsharing}. The Lm. \ref{Lm:existence_uniqueness} establishes the existence and uniqueness of equilibrium variables for any given $V_r$ and $\bar{I}_L$. Rm. \ref{Lm:currentsharing} provides the current sharing constraint that must be satisfied at equilibrium. The optimization formulation combines these results: the equality constraint ensures compatibility between voltage regulation and current sharing objectives, while the bounds ensure practical operating limits. Since both the objective function and constraints are convex, the optimization problem has a unique solution when feasible.
\end{proof}

This formulation ensures proper system operation through multiple aspects. The equality constraint ensures that the equilibrium point is compatible with the current sharing objective across all DG units. The bounds on $V_r$ ensure system operating point to be within safe and efficient limits. Furthermore, the constraint on $I_s$ ensures the sharing coefficient remains within feasible bounds for practical implementation. The feasible solution set of this optimization problem provides valid combinations of $V_r$ and $I_s$ that establish equilibrium points with desired voltage regulation and current sharing characteristics while respecting system constraints. Convergence to these equilibrium points will be ensured by the subsequent controller design.




\subsection{Error Dynamics Networked System}\label{Sec:ErrorDynamics}
The network system representation described in Sec. \ref{Networked System Model} can be analyzed by considering the error dynamics around the identified equilibrium point in Lm. \ref{Lm:equilibrium}. As we will see in the sequel, the resulting error dynamics can be seen as a networked system (called the networked error system) comprised of DG error subsystems, line error subsystems, external disturbance inputs, and performance outputs.


We first define error variables that capture deviations from the identified equilibrium:
\begin{subequations}
\begin{align}
    \tilde{V}_i &\triangleq V_i - V_{iE} = V_i - V_{ri}, \label{Eq:errorvariable1}\\
    \tilde{I}_{ti} &\triangleq I_{ti} - I_{tiE} = {I_{ti} - P_{ni}I_s}, \label{Eq:errorvariable2}\\
     \tilde{v}_i &\triangleq v_i - v_{iE}, \label{Eq:errorvariable3}\\
    \tilde{I}_l &\triangleq I_{l} - \bar{I}_{lE} = I_l - \frac{1}{R_l}\sum_{i\in\E_l}\mathcal{B}_{il}V_{ri},
\end{align}
\end{subequations}
where $v_{iE}$ represents the equilibrium value of the integrator state $v_i$, which is a constant determined by the initial conditions and the equilibrium operating point. Now, considering the dynamics \eqref{line}, \eqref{Eq:ss:voltages}-\eqref{Eq:ss:ints}, equilibrium point established in Lm. \ref{Lm:equilibrium}, and the proposed a hierarchical control strategy $u_i(t)$ \eqref{controlinput}, the error dynamics can then be derived as follows. The voltage error dynamic can be achieved using \eqref{Eq:ss:voltages} and \eqref{Eq:errorvariable1}:

\begin{equation*}\label{Eq:currenterrordynamic}
\begin{aligned}
    \dot{\tilde{V}}_i = &-\frac{Y_{Li}}{C_{ti}}(\tilde{V}_i+V_{ri}) + \frac{1}{C_{ti}}(\tilde{I}_{ti} +P_{ni}I_s) +\frac{1}{C_{ti}}w_{vi}\\
    &- \frac{1}{C_{ti}}\bar{I}_{Li} - \frac{1}{C_{ti}}\sum_{l\in \mathcal{E}i} \mathcal{B}_{il}(\tilde{I}_l + \frac{1}{R_l}\sum_{j\in\E_l}\mathcal{B}_{jl}V_{rj}),\\
    \equiv& \frac{1}{C_{ti}}\Big(\phi_V +  \psi_V +w_{vi}\Big),
\end{aligned}
\end{equation*}
where 
\begin{subequations}
\begin{align}
   \Phi_V =& -Y_{Li}\tilde{V}_i + \tilde{I}_{ti} - \sum_{l\in \mathcal{E}i} \mathcal{B}_{il}\tilde{I}_l,\\
    \Psi_V =& -Y_{Li}V_{ri}+P_{ni}I_s - \bar{I}_{Li} - \sum_{l\in \mathcal{E}i} \frac{\mathcal{B}_{il}}{R_l}\sum_{j\in\E_l}\mathcal{B}_{jl}V_{rj}. \label{Eqvoltageerror}
\end{align}
\end{subequations}

The current error dynamic can be achieved b using \eqref{Eq:ss:currents} and \eqref{Eq:errorvariable2}:
\begin{equation*}\label{Eq:voltageerrordynamic}
\begin{aligned}
    \dot{\tilde{I}}_{ti} &= -\frac{1}{L_{ti}}(\tilde{V}_i+V_{ri}) - \frac{R_{ti}}{L_{ti}}(\tilde{I}_{ti} +P_{ni}I_s) + \frac{1}{L_{ti}}w_{ci} \\
    &+ \frac{1}{L_{ti}}(u_{iS}+k_{i0}^P\tilde{V}_i+k_{io}^I\tilde{v}_i+\sum_{j\in\bar{\mathcal{F}}_i^-} k_{ij}(\frac{\tilde{I}_{tj}}{P_{nj}} - \frac{\tilde{I}_{ti}}{P_{ni}})) ,\\
    \equiv& \frac{1}{L_{ti}}\Big(\Phi_I + \Psi_I+w_{ci}\Big),
\end{aligned}
\end{equation*}
where
\begin{subequations}
\begin{align}
    \Phi_I =&-\tilde{V}_i - R_{ti}\tilde{I}_{ti} + k_{io}^p\tilde{V}_i + k_{io}^I\tilde{v}_i
    +\sum_{j\in\bar{\mathcal{F}}_i^-} k_{ij}(\frac{\tilde{I}_{tj}}{P_{nj}} - \frac{\tilde{I}_{ti}}{P_{ni}})),\\
     \Psi_I =& -V_{ri}-R_{ti}P_{ni}I_s + u_{iS}. \label{Eqcurrenteerror}
\end{align}
\end{subequations}

The integral error dynamics can be achieved by using \eqref{Eq:ss:ints} and \eqref{Eq:errorvariable3}:
\begin{equation*}\label{Eq:integralerrordynamic}
    \dot{\tilde{v}}_i = \tilde{V}_i.
\end{equation*}



It is worth noting that, as a consequence of the equilibrium analysis and the steady state control input selection (see \eqref{Eq:currentsharing} and \eqref{Eq:Rm:SteadyStateControl}), the terms $\Psi_V=0$ \eqref{Eqvoltageerror} and $\Psi_I=0$ \eqref{Eqcurrenteerror}. Therefore, for each DG error subsystem $\tilde{\Sigma}_i^{DG}, i\in\N_N$, we have an error state vector $\tilde{x}_i = \begin{bmatrix} \tilde{V}_i, \tilde{I}_{ti}, \tilde{v}_i \end{bmatrix}^\T$ with the dynamics: 
\begin{equation}\label{Eq:DG_error_dynamic}
    \dot{\tilde{x}}_i = (A_i + B_i K_{i0})\tilde{x}_i  + \tilde{u}_i, 
\end{equation}
where $\tilde{u}_i$ represents the interconnection input combining the effects of both line currents and other DG states, defined as
\begin{equation}
    \tilde{u}_i \triangleq 
    \begin{bmatrix}
    \sum_{l\in\mathcal{E}_i}\Bar{C}_{il}\tilde{\Bar{x}}_l \\
       \sum_{j\in\bar{\mathcal{F}}_i^-}K_{ij}\tilde{x}_j\\
       0
    \end{bmatrix} + E_iw_i ,
\end{equation}
with system and disturbance matrices, $A_i$, $B_i$ and $E_i$ being the same as before in \eqref{Eq:DGCompact}.


Following similar steps, we can obtain the dynamics of the transmission line error subsystem $\tilde{\Sigma}_l^{Line}, l\in\N_L$ as:
\begin{equation}\label{Eq:Line_error_dynamic}
    \dot{\tilde{\bar{x}}}_l = \bar{A}_l\tilde{\bar{x}}_l + \tilde{\bar{u}}_l,  
\end{equation}
where $\tilde{\bar{x}}_l=\tilde{I}_l$ and $\tilde{\bar{u}}_l$ represents the line interconnection input influenced by other DG voltages and disturbances, defined as:
\begin{equation}
    \tilde{\bar{u}}_l  \triangleq \sum_{i\in\mathcal{E}_l}B_{il}\tilde{V}_i  + \bar{E}_l \bar{w}_l,
\end{equation}
where the disturbance matrix $\bar{E}_l$ is defined in \eqref{Eq:LineCompact}. To ensure robust stability (dissipativity) of this networked error system, we define performance outputs as follows. For each DG error subsystem $\tilde{\Sigma}_i^{DG}, i\in\N_N$, we define the performance output as:
\begin{equation}
z_i(t) = H_i\tilde{x}_i(t),
\end{equation}
where $H_i$ is selected as $H_i = \I$. 
Similarly, for each line error subsystem $\tilde{\Sigma}_l^{Line}, l\in\N_L$, we define the performance output as:
\begin{equation}
\bar{z}_l(t) = \bar{H}_l\tilde{\bar{x}}_l(t),
\end{equation}
where $\bar{H}_l$ is selected as $\bar{H}_l = \I$. 

Upon respectively vectorizing these performance outputs over all $i\in\N_N$ and $l\in\N_L$, we obtain: 
\begin{equation}\label{z}
\begin{aligned}
     z = H\tilde{x}\quad \mbox{ and } \quad 
     \bar{z} = \bar{H}\tilde{\bar{x}},
\end{aligned}
\end{equation} 
where $H \triangleq \diag([H_i]_{i\in\N_N})$ and $\bar{H} \triangleq \diag([\bar{H}_l]_{l\in\N_L})$.

To represent the networked error dynamics in a compressed manner, we consolidate the performance outputs and disturbance input vectors, respectively, as:
\begin{equation}
    \begin{aligned}
        z_c \triangleq \bm{z \\ \bar{z}}\quad \mbox{ and } \quad 
        w_c \triangleq \bm{w \\ \bar{w}}.
    \end{aligned}
\end{equation}
The consolidated disturbance vector $w_c$ affects the networked error dynamics, particularly the DG error subsystems and the line error subsystems, respectively, through the consolidated disturbance matrices $E_c$ and $\bar{E}_c$, defined as:
\begin{equation}
    \begin{aligned}
        E_c \triangleq \bm{E & \0}\quad \mbox{ and }\quad 
        \bar{E}_c \triangleq \bm{\0 & \bar{E}},
    \end{aligned}
\end{equation}
where recall that $E \triangleq \diag([E_i]_{i \in \N_N})$ and $\bar{E} \triangleq \diag(\bar{E}_l: l \in \N_L)$. The zero blocks in the $E_c$ and $\bar{E}_c$ indicate that line disturbances do not directly affect DG error subsystem inputs and vice versa. Analogously, the dependence of consolidated performance outputs on the networked error system states can be described using consolidated performance matrices 
\begin{equation}
    \begin{aligned}
        H_c \triangleq \bm{H \\ \0}\quad \mbox{ and }\quad 
        \bar{H}_c \triangleq \bm{\0 \\ \bar{H}}.
    \end{aligned}
\end{equation}

With these definitions and the derived error subsystem dynamics \eqref{Eq:DG_error_dynamic} and \eqref{Eq:Line_error_dynamic}, it is easy to see that the closed-loop error dynamics of the DC MG can be modeled as a networked error system as shown in Fig. \ref{Fig.DissNetError}. In there, the interconnection relationship between the error subsystems, disturbance inputs and performance outputs is described by:
\begin{equation}
    \begin{bmatrix}
        \tilde{u} \\ \tilde{\bar{u}} \\ z_c
    \end{bmatrix} = M \begin{bmatrix}
        \tilde{x} \\ \tilde{\bar{x}} \\ w_c
    \end{bmatrix},
\end{equation}
where the interconnection matrix $M$ takes the form:
\begin{equation}\label{Eq:NetErrSysMMat}
    M \triangleq \bm{M_{\tilde{u}x} & M_{\tilde{u}\bar{x}} & M_{\tilde{u}w_c}\\
    M_{\tilde{\bar{u}}x} &  M_{\tilde{\bar{u}}\bar{x}} &  M_{\tilde{\bar{u}}w_c}\\
    M_{z_cx} &  M_{z_c\bar{x}} &  M_{z_cw_c}
        } \equiv 
        \bm{K & \bar{C} & E_c\\
        C & \0 & \bar{E}_c\\
        H_{c} & \bar{H}_{c} & \0
        }.
\end{equation}

\begin{figure}
    \centering
    \includegraphics[width=0.9\columnwidth]{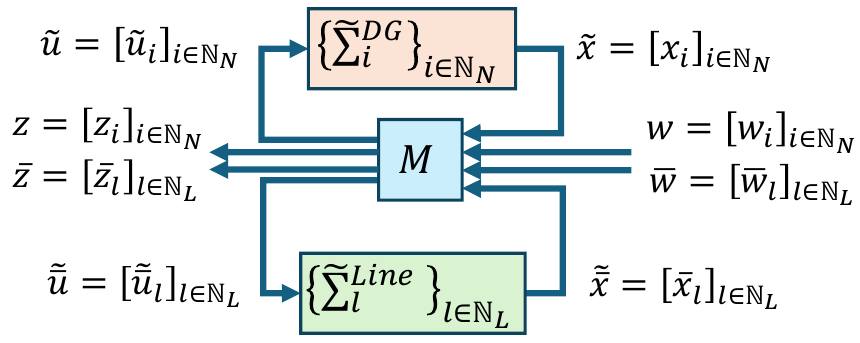}
    \caption{DC MG  error dynamics as a networked system with disturbance inputs and performance outputs.}
    \label{Fig.DissNetError}
\end{figure}

\section{Dissipativity-Based Control and Topology Co-Design}\label{Passivity-based Control}
In this section, we first analyze the dissipativity properties of the DG and line error subsystems. We then formulate the global control and topology co-design problem using the networked error dynamics from Sec. \ref{Sec:ErrorDynamics}. To support the feasibility of this global co-design, we identify necessary conditions on subsystem passivity indices and embed them into a local controller synthesis problem.

\subsection{Error Subsystem Dissipativity Properties}
Consider the DG error subsystem $\tilde{\Sigma}_i^{DG},i\in\mathbb{N}_N$ \eqref{Eq:DG_error_dynamic} to be $X_i$-EID with
\begin{equation}\label{Eq:XEID_DG}
    X_i=\begin{bmatrix}
        X_i^{11} & X_i^{12} \\ X_i^{21} & X_i^{22}
    \end{bmatrix}\triangleq
    \begin{bmatrix}
        -\nu_i\mathbf{I} & \frac{1}{2}\mathbf{I} \\ \frac{1}{2}\mathbf{I} & -\rho_i\mathbf{I}
    \end{bmatrix},
\end{equation}
where $\rho_i$ and $\nu_i$ are the passivity indices of $\tilde{\Sigma}_i^{DG}$. In other words, consider $\tilde{\Sigma}_i^{DG}, i\in\mathbb{N}_N$ as IF-OFP($\nu_i,\rho_i$).

Similarly, consider the line error subsystem $\tilde{\Sigma}_l^{Line},l\in\mathbb{N}_L$ \eqref{Eq:Line_error_dynamic} to be $\bar{X}_l$-EID with
\begin{equation}\label{Eq:XEID_Line}
    \bar{X}_l=\begin{bmatrix}
        \bar{X}_l^{11} & \bar{X}_l^{12} \\ \bar{X}_l^{21} & \bar{X}_l^{22}
    \end{bmatrix}\triangleq
    \begin{bmatrix}
        -\bar{\nu}_l\mathbf{I} & \frac{1}{2}\mathbf{I} \\ \frac{1}{2}\mathbf{I} & -\bar{\rho}_l\mathbf{I}
    \end{bmatrix},
\end{equation}
where $\bar{\rho}_l$ and $\bar{\nu}_l$ are the passivity indices of $\tilde{\Sigma}_l^{Line}$.  The following lemma characterizes these passivity indices.

\begin{lemma}\label{Lm:LineDissipativityStep}
For each line $\tilde{\Sigma}_l^{Line}, l\in\N_L$ \eqref{Eq:LineCompact}, its passivity indices $\bar{\nu}_l$,\,$\bar{\rho}_l$ assumed in (\ref{Eq:XEID_Line}) are such that the LMI problem: 
\begin{equation}\label{Eq:Lm:LineDissipativityStep1}
\begin{aligned}
\mbox{Find: }\ &\bar{P}_l, \bar{\nu}_l, \bar{\rho}_l\\
\mbox{Sub. to:}\ &\bar{P}_l > 0, \ 
\begin{bmatrix}
    \frac{2\bar{P}_lR_l}{L_l}-\bar{\rho}_l & -\frac{\bar{P}_l}{L_l}+\frac{1}{2}\\
    \star & -\bar{\nu}_l
\end{bmatrix}
\normalsize
\geq0, 
\end{aligned}
\end{equation}
is feasible. The maximum feasible values for $\bar{\nu}_l$ and $\bar{\rho}_l$ are $\bar{\nu}_l^{\max}=0$ and $\bar{\rho}_l^{\max}=R_l$ respectively, when $\bar{P}_l =  \frac{L_l}{2}$. 
\end{lemma}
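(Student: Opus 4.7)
The plan is to specialize Proposition \ref{Prop:linear_X-EID} to the scalar line error subsystem \eqref{Eq:Line_error_dynamic} with the IF-OFP supply rate in \eqref{Eq:XEID_Line}, then directly inspect the resulting $2\times 2$ LMI to extract the attainable passivity indices. First I would identify the system data: $A = \bar{A}_l = -R_l/L_l$, $B = \bar{B}_l = 1/L_l$, $C = \bar{H}_l = 1$ and $D = 0$, with supply-rate coefficients $X^{11} = -\bar{\nu}_l$, $X^{12} = X^{21} = 1/2$, $X^{22} = -\bar{\rho}_l$. Substituting into the block inequality of Proposition \ref{Prop:linear_X-EID} gives
\begin{equation*}
\begin{bmatrix}
-\mathcal{H}(\bar{P}_l \bar{A}_l) - \bar{\rho}_l & -\bar{P}_l \bar{B}_l + \tfrac{1}{2}\\
\star & -\bar{\nu}_l
\end{bmatrix}
=
\begin{bmatrix}
\tfrac{2\bar{P}_l R_l}{L_l} - \bar{\rho}_l & -\tfrac{\bar{P}_l}{L_l} + \tfrac{1}{2}\\
\star & -\bar{\nu}_l
\end{bmatrix}\geq 0,
\end{equation*}
which is precisely the LMI \eqref{Eq:Lm:LineDissipativityStep1}. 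Feasibility is immediate by, e.g., taking $\bar{P}_l = L_l/2$, $\bar{\nu}_l = \bar{\rho}_l = 0$, so the first part of the claim follows.

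For the maximality statement I would read off necessary conditions from positive semidefiniteness. The $(2,2)$ entry forces $\bar{\nu}_l \leq 0$, giving $\bar{\nu}_l^{\max} = 0$ as an upper bound. Pushing $\bar{\nu}_l$ up to this bound, however, collapses the $(2,2)$ diagonal to zero, and any PSD matrix with a zero diagonal entry must have the corresponding row and column zero; therefore the off-diagonal term must vanish, i.e.\ $-\bar{P}_l/L_l + 1/2 = 0$, which pins down $\bar{P}_l = L_l/2$. Substituting this value back into the $(1,1)$ entry leaves the residual condition $R_l - \bar{\rho}_l \geq 0$, yielding $\bar{\rho}_l^{\max} = R_l$. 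A brief check confirms that the triple $(\bar{P}_l,\bar{\nu}_l,\bar{\rho}_l) = (L_l/2, 0, R_l)$ indeed renders the LMI PSD (the matrix becomes the zero matrix), closing the argument.

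The only subtlety I anticipate is the coupling between the three free variables $\bar{P}_l,\bar{\nu}_l,\bar{\rho}_l$: one cannot maximize $\bar{\nu}_l$ and $\bar{\rho}_l$ independently since the off-diagonal term ties $\bar{P}_l$ to $L_l$ as soon as $\bar{\nu}_l$ approaches zero. Handling this carefully via the Schur complement (Lemma \ref{Lm:Schur_comp}) or the zero-diagonal argument above is the only non-mechanical part of the proof; everything else is routine substitution.
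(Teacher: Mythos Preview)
Your proposal is correct and follows essentially the same route as the paper: specialize Proposition~\ref{Prop:linear_X-EID} to the scalar line system with the IF-OFP supply rate \eqref{Eq:XEID_Line}, read off $\bar{\nu}_l\leq 0$ from the $(2,2)$ entry, force the off-diagonal to vanish when $\bar{\nu}_l=0$ to pin $\bar{P}_l=L_l/2$, and then bound $\bar{\rho}_l$ from the $(1,1)$ entry. The paper phrases the middle step via the Schur complement (Lemma~\ref{Lm:Schur_comp}) rather than your zero-diagonal PSD argument, but the two are equivalent here.
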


\begin{proof}
For each $\tilde{\Sigma}^{Line}_l$, $l \in \mathcal{N}_L$ from \eqref{Eq:Line_error_dynamic}, we must verify that it satisfies $\bar{X}_l$-EID with the passivity indices given by \eqref{Eq:XEID_Line}. For this, we can apply Prop. \ref{Prop:linear_X-EID} with the given system matrices from \eqref{Eq:LineMatrices} and the specified dissipativity supply rate in \eqref{Eq:XEID_Line}, leading to the LMI condition:
\begin{equation*}
\begin{bmatrix} 2\bar{P}_l\frac{R_l}{L_l} - \bar{\rho}_l & -\bar{P}_l\frac{1}{L_l} + \frac{1}{2} \\ \star & -\bar{\nu}_l \end{bmatrix} \geq 0.
\end{equation*}

Using Lm. \ref{Lm:Schur_comp}, this is positive semidefinite if and only if:
\begin{equation*} 
\begin{aligned} 
\bar{\nu}_l \leq 0 \quad \text{and} \quad  2\bar{P}_l\frac{R_l}{L_l} - \bar{\rho}_l - \frac{(-\bar{P}_l\frac{1}{L_l} + \frac{1}{2})^2}{-\bar{\nu}_l} \geq 0.
\end{aligned}
\end{equation*}

To obtain maximum passivity indices, we choose $\bar{\nu}_l = 0$, which requires:
\begin{equation*}
   \begin{aligned}
2\bar{P}_l\frac{R_l}{L_l} - \bar{\rho}_l \geq 0 \quad \text{and} \quad
-\bar{P}_l\frac{1}{L_l} + \frac{1}{2} = 0, 
   \end{aligned} 
\end{equation*}
which gives $\bar{P}_l = \frac{L_l}{2}$. Substituting $\bar{P}_l = \frac{L_l}{2}$ into the first constraint yields $R_l - \bar{\rho}_l \geq 0$, implying $\bar{\rho}_l \leq R_l$. Therefore, the maximum feasible values are $\bar{\nu}_l^{\max} = 0$ and $\bar{\rho}_l^{\max} = R_l$, when $\bar{P}_l = \frac{L_l}{2}$.
\end{proof}

Although we could identify the conditions required for the passivity indices of the line error dynamics \eqref{Eq:Line_error_dynamic}, achieving a similar feat for the DG error dynamics is not straightforward due to the involved CPL nonlinearities (see \eqref{Eq:DG_error_dynamic}). This challenge is addressed in the following subsection.

\subsection{Global Control and Topology Co-Design}

The interconnection matrix $M$ (\ref{Eq:MMatrix}), particularly its block $M_{\tilde{u}x}=K$, can be synthesized by applying our subsystem EID properties to Prop. \ref{synthesizeM}. By synthesizing $K=[K_{ij}]_{i,j\in\N_N}$, we can uniquely compute the distributed global controller gains $\{k_{ij}^I:i,j\in\mathbb{N}_N\}$ (\ref{k_ij}) and the required communication topology $\mathcal{G}^c$.
The following theorem formulates this distributed global controller and communication topology co-design problem.

\begin{theorem}\label{Th:CentralizedTopologyDesign}
The closed-loop dynamics of the DC MG are illustrated in Fig. \ref{Fig.DissNetError}, can be made finite-gain $L_2$-stable with an $L_2$-gain $\gamma$ (where $\Tilde{\gamma}\triangleq \gamma^2<\bar{\gamma}$ and $\bar{\gamma}$ is prespecified) from unknown disturbances $w_c(t)$ to performance output $z_c(t)$, by synthesizing the interconnection matrix block $M_{\tilde{u}x}=K$ (\ref{Eq:MMatrix}) via solving the LMI problem:
\begin{equation}
\label{Eq:Th:CentralizedTopologyDesign0}
\begin{aligned}
&\min_{\substack{Q,\{p_i: i\in\N_N\},\\
\{\bar{p}_l: l\in\N_L\}, \tilde{\gamma}}} &&\sum_{i,j\in\N_N} c_{ij} \Vert Q_{ij} \Vert_1 +c_1 \tilde{\gamma} + \alpha\text{tr}(s_W),\\
&
\mbox{ Sub. to:}  
&&p_i > 0,\ \forall i\in\N_N,\ 
\bar{p}_l > 0,\ \forall l\in\N_L,\\   
& &&0 < \tilde{\gamma} < \bar{\gamma},  \\
& && W + s_W > 0, \ s_W \geq 0, \\
& &&\text{tr}(s_W) \leq \eta
\mbox{ and \eqref{globalcontrollertheorem}},\\
& && Q_I P_n \textbf{1}_N = 0,
\end{aligned}
\end{equation}
as $K = (\textbf{X}_p^{11})^{-1} Q$ and $Q_I = \bm{Q_{ij}^{2,2}}_{i,j\in\N_N}$, where 
$\textbf{X}^{12} \triangleq 
\diag([-\frac{1}{2\nu_i}\I]_{i\in\N_N})$, 
$\textbf{X}^{21} \triangleq (\textbf{X}^{12})^\T$,
$\Bar{\textbf{X}}^{12} \triangleq 
\diag([-\frac{1}{2\Bar{\nu}_l}\I]_{l\in\N_L})$,
$\Bar{\textbf{X}}^{21} \triangleq (\Bar{\textbf{X}}^{12})^\T$, 
$\textbf{X}_p^{11} \triangleq 
\diag([-p_i\nu_i\I]_{i\in\N_N})$, 
$\textbf{X}_p^{22} \triangleq 
\diag([-p_i\rho_i\I]_{i\in\N_N})$, 
$\Bar{\textbf{X}}_{\bar{p}}^{11} 
\triangleq \diag([-\bar{p}_l\bar{\nu}_l\I]_{l\in\N_L})$, 
$\Bar{\textbf{X}}_{\bar{p}}^{22} 
\triangleq \diag([-\bar{p}_l\bar{\rho}_l\I]_{l\in\N_L})$ and $\tilde{\Gamma} \triangleq \tilde{\gamma}\I$. 
The structure of $Q\triangleq[Q_{ij}]_{i,j\in\N_N}$ mirrors that of $K\triangleq[K_{ij}]_{i,j\in\N_N}$ (i.e., the first and third rows are zeros in each block $Q_{ij}$, see \eqref{k_ij}). 
The coefficients $c_1>0$ and $c_{ij}>0,\forall i,j\in\N_N$ are the cost coefficients corresponding to the $L_2$-gain from unknown disturbances and communication links, respectively. Note that $W$ represents the LMI matrix from \eqref{globalcontrollertheorem}, $s_W$ is a symmetric slack matrix, $\alpha > 0$ is the slack penalty weight, and $\eta > 0$ is a bound on the total slack magnitude. Here, $\bar{\gamma}$ is the prespecified positive constant that represents the upper bounds on the $L_2$-gain for disturbance attenuation.
\end{theorem}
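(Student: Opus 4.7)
The plan is to specialize the generic networked-system synthesis result (Prop. \ref{synthesizeM}) to the specific networked error system of Fig. \ref{Fig.DissNetError}, with the dissipativity target chosen to be the $L_2$-gain supply rate, and then extract from the resulting LMI the structural and topological constraints that appear in \eqref{Eq:Th:CentralizedTopologyDesign0}.

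First I would verify the two standing hypotheses of Prop. \ref{synthesizeM}. For As. \ref{As:NegativeDissipativity}, per Rm. \ref{Rm:X-DissipativityVersions} item (3), I set $\textbf{Y} \triangleq \diag(\tilde{\gamma}\I, -\I)$ with $\tilde{\gamma}=\gamma^2$, so $\textbf{Y}^{22} = -\I < 0$. For As. \ref{As:PositiveDissipativity}, I would use the subsystem properties \eqref{Eq:XEID_DG} and \eqref{Eq:XEID_Line}: the DG error subsystems are IF-OFP$(\nu_i,\rho_i)$ with $\nu_i<0$ (these indices are guaranteed by the local controller design of the next section), and the line error subsystems inherit a strict $\bar{\nu}_l<0$ from Lm. \ref{Lm:LineDissipativityStep} (by picking any value strictly below the limiting $\bar{\nu}_l^{\max}=0$). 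Consequently $X_i^{11}=-\nu_i\I>0$ and $\bar{X}_l^{11}=-\bar{\nu}_l\I>0$, and the scaled multiplier blocks $\textbf{X}_p^{11}$, $\bar{\textbf{X}}_{\bar{p}}^{11}$ with $p_i,\bar{p}_l>0$ stay positive definite, which is what is needed for the change of variable in Prop. \ref{synthesizeM}.

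Next I would substitute the specific interconnection structure \eqref{Eq:NetErrSysMMat} into the generic LMI \eqref{NSC4YEID}. The blocks $M_{\tilde{\bar{u}}x}=C$, $M_{\tilde{u}\bar{x}}=\bar{C}$, $M_{\tilde{u}w_c}=E_c$, $M_{\tilde{\bar{u}}w_c}=\bar{E}_c$, $M_{z_cx}=H_c$, $M_{z_c\bar{x}}=\bar{H}_c$ and $M_{z_cw_c}=\0$ are fixed by the physical topology and by the performance/disturbance definitions in Sec. \ref{Sec:ErrorDynamics}. The only free block is $M_{\tilde{u}x}=K$, which simultaneously encodes the distributed controller gains $\{k_{ij}^I\}$ and the communication topology $\mathcal{G}^c$ via \eqref{k_ij}. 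Performing the affine change of variable $Q \triangleq \textbf{X}_p^{11} K$ (the role played by $L_{uy}$ in Prop. \ref{synthesizeM}) linearizes the bilinear product and converts \eqref{NSC4YEID} into the LMI \eqref{globalcontrollertheorem} featured in the theorem. Since $\textbf{X}_p^{11}$ is block-diagonal and multiplies $K$ only on the row side, the sparsity mask of \eqref{k_ij} (only the $(2,2)$-entry of each block nonzero) is preserved, so it can be transferred to $Q$ as a linear constraint; likewise the weighted-Laplacian condition $K_I P_n \mathbf{1}_N=0$ transfers directly to the stated $Q_I P_n \mathbf{1}_N=0$.

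Finally, to assemble the full problem \eqref{Eq:Th:CentralizedTopologyDesign0}, I would augment the feasibility LMI with: (i) the $L_2$-gain bound $0<\tilde{\gamma}<\bar{\gamma}$ coming from the chosen $\textbf{Y}$; (ii) the sparsity-promoting $\ell_1$-cost $\sum c_{ij}\|Q_{ij}\|_1$, which, after recovering $K=(\textbf{X}_p^{11})^{-1}Q$, prunes unnecessary edges from $\mathcal{G}^c$ and hence jointly designs the topology; and (iii) the symmetric slack term $s_W\geq 0$ with $W+s_W>0$ and $\text{tr}(s_W)\leq \eta$, which regularizes the strict LMI to guarantee numerical feasibility while bounding the relaxation gap through $\alpha$ and $\eta$. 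Once a feasible tuple $(Q,\{p_i\},\{\bar{p}_l\},\tilde{\gamma})$ is obtained, setting $K=(\textbf{X}_p^{11})^{-1}Q$ returns, by Prop. \ref{synthesizeM}, an interconnection matrix that makes the networked error system $\textbf{Y}$-EID, i.e., finite-gain $L_2$-stable with gain $\gamma=\sqrt{\tilde{\gamma}}$. The main obstacle I anticipate is reconciling three simultaneous demands on $Q$---the block-sparsity mask from \eqref{k_ij}, the equality $Q_I P_n \mathbf{1}_N=0$, and the large LMI \eqref{globalcontrollertheorem}---because each cuts out a proper subset of the decision space and, taken together, they may rule out an exact feasible point; this is exactly what motivates the introduction of the slack $s_W$, and it also explains why the passivity indices $\nu_i,\rho_i$ that feed into $\textbf{X}_p^{11},\textbf{X}_p^{22}$ must be delivered with care by the local-controller design step.
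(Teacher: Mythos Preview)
Your proposal is correct and follows essentially the same route as the paper: both reduce the claim to an application of Prop.~\ref{synthesizeM} to the networked error system of Fig.~\ref{Fig.DissNetError}, with the $\textbf{Y}$ supply rate chosen as in Rm.~\ref{Rm:X-DissipativityVersions}(3), the IF-OFP indices \eqref{Eq:XEID_DG}--\eqref{Eq:XEID_Line} supplying the subsystem dissipativity data, and the change of variable $Q=\textbf{X}_p^{11}K$ turning \eqref{NSC4YEID} into \eqref{globalcontrollertheorem}. Your write-up is in fact more explicit than the paper's own proof sketch---particularly in checking As.~\ref{As:NegativeDissipativity}--\ref{As:PositiveDissipativity}, in arguing that the block-diagonal structure of $\textbf{X}_p^{11}$ preserves the sparsity mask of $K$ under the change of variable, and in motivating the slack relaxation---so no gap remains.
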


\begin{proof}
The proof follows by treating the closed-loop DC MG (shown in Fig. \ref{Fig.DissNetError}) as a networked error system and applying the subsystem dissipativity properties from \eqref{Eq:XEID_DG} and \eqref{Eq:XEID_Line} to the interconnection topology synthesis framework of Prop. \ref{synthesizeM}. We modeled the DG error subsystems as IF-OFP($\nu_i,\rho_i$) and line error subsystems as  IF-OFP($\Bar{\nu}_l,\Bar{\rho}_l$). The LMI problem \eqref{Eq:Th:CentralizedTopologyDesign0} is formulated to ensure that the networked error system is \textbf{Y}-EID, thereby ensuring finite-gain $L_2$-stability with gain $\gamma$ from disturbances $w_c$ to performance outputs $z_c$. The objective function in \eqref{Eq:Th:CentralizedTopologyDesign0} consists of three terms: communication cost ($\sum_{i,j\in\N_N} c_{ij} \Vert Q_{ij} \Vert_1$), control cost $c_1 \tilde{\gamma}$, and a numerical stability term ($\alpha\text{tr}(s_W)$). Minimizing this function while satisfying LMI constraints simultaneously optimizes the communication topology (by synthesizing $K = (\textbf{X}_p^{11})^{-1} Q$) and robust stability (by minimizing $\tilde{\gamma}$) while ensuring the given specification $\gamma^2<\bar{\gamma}$. Note that the proposed cost function in \eqref{Eq:Th:CentralizedTopologyDesign0} jointly optimizes the communication topology (while inducing sparsity \cite{WelikalaJ22022}) and robust stability (i.e., $L_2$-gain $\gamma$) of the DC MG. The resulting controller and topology achieve voltage regulation and current sharing in the presence of disturbances. 
\end{proof}

The slack matrix $s_W$ improves the numerical conditioning of the LMI constraints by providing additional degrees of freedom in optimization. The penalty term $\alpha \text{tr}(s_W)$ prevents the excessive use of slack variables while preserving the feasibility of the problem. The bound $\text{tr}(s_W) \leq \eta$ limits how much the constraints can be relaxed, which ensures that the solution remains feasible. This approach achieves a balance between computational tractability and robust performance guarantees for the DC MG.



\begin{figure*}[!hb]
\vspace{-5mm}
\centering
\hrulefill
\begin{equation}\label{globalcontrollertheorem}
\scriptsize
	W = \bm{
		\textbf{X}_p^{11} & \0 & \0 & Q & \textbf{X}_p^{11}\Bar{C} &  \textbf{X}_p^{11}E_c \\
		\0 & \bar{\textbf{X}}_{\bar{p}}^{11} & \0 & \Bar{\textbf{X}}_{\Bar{p}}^{11}C & \0 & \bar{\textbf{X}}_{\bar{p}}^{11}\bar{E}_c \\
		\0 & \0 & \I & H_c & \bar{H}_c & \0 \\
		Q^\T & C^\T\Bar{\textbf{X}}_{\Bar{p}}^{11} & H_c^\T & -Q^\T\textbf{X}^{12}-\textbf{X}^{21}Q-\textbf{X}_p^{22} & -\textbf{X}^{21}\textbf{X}_{p}^{11}\bar{C}-C^\T\bar{\textbf{X}}_{\bar{p}}^{11}\bar{\textbf{X}}^{12} & -\textbf{X}^{21}\textbf{X}_p^{11}E_c \\
		\Bar{C}^\T\textbf{X}_p^{11} & \0 & \bar{H}_c^\T & -\Bar{C}^\T\textbf{X}_p^{11}\textbf{X}^{12}-\bar{\textbf{X}}^{21}\Bar{\textbf{X}}_{\Bar{p}}^{11}C & -\bar{\textbf{X}}_{\bar{p}}^{22} & -\bar{\textbf{X}}^{21}\Bar{\textbf{X}}_{\Bar{p}}^{11}\bar{E}_c \\ 
		E_c^\T\textbf{X}_p^{11} & \bar{E}_c^\T\Bar{\textbf{X}}_{\Bar{p}}^{11} & \0 & -E_c^\T\textbf{X}_p^{11}\textbf{X}^{12} & -\bar{E}_c^\T\Bar{\textbf{X}}_{\Bar{p}}^{11}\bar{\textbf{X}}^{12} & \tilde{\Gamma} \\
	}\normalsize 
 >0 
\end{equation}
\end{figure*}

\begin{figure*}[!hb]
\vspace{-5mm}
\centering
\begin{equation}\label{Eq:Neccessary_condition}
\scriptsize
	\bm{
		-p_i\nu_i & 0 & 0 & 0 & -p_i\nu_i\bar{C}_{il} & -p_i\nu_i\\
		0 & -\bar{p}_l\bar{\nu}_l & 0 & -\bar{p}_l\bar{\nu}_lC_{il} & 0 & -\bar{p}_l\bar{\nu}_l \\
		0 & 0 & 1 & 1 & 1 & 0 \\
		0 & -C_{il}\bar{\nu}_l\bar{p}_l & 1 & p_i\rho_i & -\frac{1}{2}p_i\bar{C}_{il}-\frac{1}{2}C_{il}\bar{p}_l & -\frac{1}{2}p_i \\
		-\bar{C}_{il}\nu_ip_i & 0 & 1 & -\frac{1}{2}\bar{C}_{il}p_i-\frac{1}{2}\bar{p}_lC_{il} & \bar{p}_l\bar{\rho}_l & -\frac{1}{2}\bar{p}_l \\ 
		-\nu_ip_i & -\bar{p}_l\bar{\nu}_l & 0 & -\frac{1}{2}p_i & -\frac{1}{2}\bar{p}_l & \tilde{\gamma}_i \\
	}\normalsize 
 >0,\ \forall l\in \mathcal{E}_i, \forall i\in\N_N
\end{equation}
\end{figure*}

\subsection{Necessary Conditions on Subsystem Passivity Indices}
The feasibility and effectiveness of the global co-design problem \eqref{Eq:Th:CentralizedTopologyDesign0} directly depend on the passivity indices $\{\nu_i,\rho_i:i\in\mathbb{N}_N\}$ \eqref{Eq:XEID_DG} and $\{\bar{\nu}_l,\bar{\rho}_l:l\in\mathbb{N}_L\}$ \eqref{Eq:XEID_Line} chosen for the DG \eqref{closedloopdynamic} and line \eqref{Eq:LineCompact} subsystems, respectively, as these parameters appear in the LMI constraint matrices. 

The local controller design using Co. \ref{Col.LTI_LocalController_XEID} determines specific passivity indices for the DGs \eqref{closedloopdynamic}, while the passivity analysis in  Lm. \ref{Lm:LineDissipativityStep} established passivity indices for the transmission lines \eqref{Eq:LineCompact}. Since these subsystem-level designs directly influence the global co-design problem, poor choices of local controllers or passivity indices can render the overall co-design infeasible and/or ineffective. 

Therefore, the local controller design and passivity analysis must account for conditions that ensure feasibility and effectiveness of the global co-design. The following lemma identifies these necessary conditions based on the global optimization problem \eqref{Eq:Th:CentralizedTopologyDesign0} in Th. \ref{Th:CentralizedTopologyDesign}. 

\begin{lemma}\label{Lm:CodesignConditions}
For the LMI conditions \eqref{Eq:Th:CentralizedTopologyDesign0} in Th. \ref{Th:CentralizedTopologyDesign} to hold, it is necessary that the DG and line passivity indices 
$\{\nu_i,\rho_i:i\in\mathbb{N}_N\}$ \eqref{Eq:XEID_DG} and  $\{\bar{\nu}_l,\bar{\rho}_l:l\in\mathbb{N}_L\}$ \eqref{Eq:XEID_Line} are such that the LMI problem: 
\begin{equation}\label{Eq:Lm:CodesignConditions}
\begin{aligned}
&\mbox{Find: }\ \ &&\{(\nu_i,\rho_i,\tilde{\gamma}_i):i\in\N_N\},\{(\Bar{\nu}_l,\bar{\rho}_l):l\in\N_L\},\\
&\mbox{Sub. to: }\
&&p_i > 0,\ \forall i\in\N_N, \  \bar{p}_l>0,\ \forall l\in\N_L,\\
& &&\mbox{and } \ \eqref{Eq:Neccessary_condition},
\end{aligned}
\end{equation} 
is feasible.
\end{lemma}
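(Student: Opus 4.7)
The plan is to use the elementary fact that every principal submatrix of a positive-definite matrix is itself positive definite, and apply it to the global LMI $W>0$ from \eqref{globalcontrollertheorem}. Since any feasible point of \eqref{Eq:Th:CentralizedTopologyDesign0} in Th.~\ref{Th:CentralizedTopologyDesign} supplies $Q$, $\{p_i\}$, $\{\bar{p}_l\}$, and $\tilde{\gamma}$ for which $W>0$ holds, it will suffice, for each pair $(i,l)$ with $l\in\mathcal{E}_i$, to exhibit a principal submatrix of $W$ whose entries reproduce the matrix in \eqref{Eq:Neccessary_condition} exactly. Re-using the same $p_i$ and $\bar{p}_l$ and setting $\tilde{\gamma}_i=\tilde{\gamma}$ then yields a feasible point of \eqref{Eq:Lm:CodesignConditions}, establishing necessity.

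The extraction I would use is as follows. Among the six block rows/columns of $W$, retain only the $i$-th sub-block from the DG-state and DG-dual groups (block rows/columns~1 and~4), the $l$-th sub-block from the line-state and line-dual groups (block rows/columns~2 and~5), and the sub-blocks of rows/columns~3 and~6 that correspond to DG~$i$ and line~$l$. Within each DG $3\times 3$ sub-block, I would further restrict to the voltage-state scalar entry (position~1 of the DG state). With this selection, the surviving entries reduce, block by block, to $-p_i\nu_i$, $-\bar{p}_l\bar{\nu}_l$, $p_i\rho_i$, $\bar{p}_l\bar{\rho}_l$, $\tilde{\gamma}_i$, and the physical-coupling scalars carried by $\bar{C}_{il}$ and $C_{il}$, which is exactly the layout of \eqref{Eq:Neccessary_condition}.

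The crucial observation that makes this extraction clean is the sparsity pattern of $Q$ imposed by \eqref{k_ij}: each block $Q_{ij}$ is nonzero only at its $(2,2)$ (current-state) scalar position. Consequently, when one projects $W$ onto the voltage-state scalar of DG~$i$ rather than its current-state scalar, all $Q$-dependent entries of $W$ — namely the $(1,4)$, $(4,1)$, and $(4,4)$ block positions — vanish identically at the retained indices. The extracted scalar principal submatrix is therefore free of the global variable $Q$, and its positive definiteness is a direct consequence of $W>0$.

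The main obstacle, and the step needing careful bookkeeping rather than cleverness, is the entry-by-entry verification that each cross term in \eqref{Eq:Neccessary_condition} — especially those at the $(1,5)$, $(2,4)$, $(4,5)$ positions and their transposes — is reproduced with the correct sign, placement of $\mathcal{B}_{il}$, and the factor arising from $C_{ti}^{-1}$ and the passivity indices when the voltage-row projection is applied to the block expressions $\textbf{X}_p^{11}\bar{C}$, $\bar{\textbf{X}}_{\bar{p}}^{11}C$, and the disturbance blocks $\textbf{X}_p^{11}E_c$, $\bar{\textbf{X}}_{\bar{p}}^{11}\bar{E}_c$. Once this componentwise alignment is confirmed, \eqref{Eq:Neccessary_condition} follows immediately from the positivity of the principal submatrix, completing the proof.
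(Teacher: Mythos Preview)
Your approach—extracting a principal submatrix of $W$ for each pair $(i,l)$—is exactly the paper's route; its proof dresses the same idea in a ``block element-wise'' (BEW) permutation of $W$ followed by reading off diagonal blocks, which is just a relabelling of the principal-submatrix step. Your explicit use of the $(2,2)$-sparsity of $Q$ to kill the $Q$-dependent entries at the voltage-state projection is a helpful observation that the paper's proof leaves implicit.

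There is one concrete place where the six scalar indices you describe will not reproduce \eqref{Eq:Neccessary_condition} as written. Block rows/columns~3 and~6 of $W$ carry the \emph{consolidated} vectors $z_c,w_c\in\R^{3N+L}$; in particular $H_c=\bm{\I_{3N}\\ \0}$ and $\bar H_c=\bm{\0\\ \I_L}$. To obtain the entry $(3,4)=1$ you must select the DG-$i$ voltage row from block~3, whereas $(3,5)=1$ forces the line-$l$ row of block~3—no single scalar index achieves both. The same conflict occurs in block~6 through $E_c=\bm{E & \0}$ and $\bar E_c=\bm{\0 & \bar E}$ (and note that $E_i$ also injects a factor $C_{ti}^{-1}$ that the paper's displayed $6\times6$ matrix appears to absorb). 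The clean fix is to keep \emph{two} scalar indices from each of blocks~3 and~6 (one for DG~$i$, one for line~$l$), yielding an $8\times8$ principal submatrix as the literal necessary condition; the paper's $6\times6$ form should then be read as the condensed object produced by its BEW bookkeeping rather than a direct scalar principal submatrix. Aside from this indexing subtlety, your plan is sound and aligned with the paper.
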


\begin{proof}
For the global co-design problem \eqref{Eq:Th:CentralizedTopologyDesign0} to be feasible, the matrix $W$ from \eqref{globalcontrollertheorem} must satisfy $W > 0$. Let $W = [W_{rs}]_{r,s\in\N_6}$ where each block $W_{rs}$ can be a block matrix of block dimensions $(N\times N)$, $(N\times L)$ or $(L \times L)$ depending on its location in $W$ (e.g., see blocks $W_{11}, W_{15}$ and $W_{22}$, respectively). Without loss of generality, let us denote $W_{rs} \triangleq [W_{rs}^{jm}]_{j\in\bar{J}(r),m\in\bar{M}(s)}$ where $\bar{J}(r),\bar{M}(s) \in \{N,L\}$. Inspired by \cite[Lm. 1]{WelikalaJ22022}, we can obtain an equivalent condition for $W>0$ as $\bar{W} \triangleq \text{BEW}(W) > 0$ where $\text{BEW}(W)$ is the ``block-elementwise'' form of $W$, created by combining appropriate inner-block elements of each of the blocks $W_{rs}$ to create a $6\times 6$ block-block matrix. 
Simply, $\bar{W} = [[W_{rs}^{j,m}]_{r,s \in \N_6}]_{j \in \N_{\bar{J}},m\in\N_{\bar{M}}}$. Considering only the diagonal blocks in  $\bar{W}$ and the implication $\bar{W} > 0 \implies [[W_{rs}^{j,m}]_{r,s \in \N_6}]_{j \in \N_{\bar{J}(r)},m\in\N_{\bar{M}(r)}} > 0 \iff$\eqref{Eq:Lm:CodesignConditions} (also recall the notations $C_{il} \triangleq -C_{ti}\bar{C}_{il}^\T$, $\Bar{C}_{il} \triangleq -C_{ti}^{-1}$). Therefore,  \eqref{globalcontrollertheorem} $\implies$ \eqref{Eq:Lm:CodesignConditions}, in other words, \eqref{Eq:Lm:CodesignConditions} is a set of necessary conditions for the feasibility of the global co-design constraint \eqref{globalcontrollertheorem}. Therefore, the necessary conditions in \eqref{Eq:Lm:CodesignConditions} are derived from the LMI constraints of the global co-design problem \eqref{Eq:Th:CentralizedTopologyDesign0}.
\end{proof}

Besides merely supporting the feasibility of the global co-design \eqref{globalcontrollertheorem}, the LMI problem \eqref{Eq:Lm:CodesignConditions}, through its inclusion of the constraint $0 \leq \tilde{\gamma}_i \leq \bar{\gamma}$ (which can also be embedded in the objective function), aims to improve the effectiveness (performance) of the global co-design \eqref{globalcontrollertheorem}.

\subsection{Local Controller Synthesis}
To enforce the identified necessary LMI conditions in Lm. \ref{Lm:CodesignConditions} on DG and line passivity indices, a local controller synthesis problem is formulated. This ensures numerical feasibility while guaranteeing system stability.

\begin{theorem}\label{Th:LocalControllerDesign}
Under the predefined DG parameters \eqref{Eq:DGCompact}, transmission line parameters \eqref{Eq:LineCompact} and design parameters $\{p_i: i\in\N_N\}$, $\{\bar{p}_l:l\in\N_L\}$, the necessary conditions in  \eqref{Eq:Th:CentralizedTopologyDesign0} hold if the local controller gains $\{K_{i0}, i\in\N_N\}$ (\ref{Controller}) and 
DG and line passivity indices 
$\{\nu_i,\rho_i:i\in\mathbb{N}_N\}$ \eqref{Eq:XEID_DG} and  $\{\bar{\nu}_l,\bar{\rho}_l:l\in\mathbb{N}_L\}$ \eqref{Eq:XEID_Line} are determined by solving the LMI problem:
\begin{equation}
\begin{aligned}\nonumber
&\mbox{Find: }\ \{(\tilde{K}_{i0}, P_i, \nu_i, \tilde{\rho}_i, \tilde{\gamma}_i):i\in\mathbb{N}_N\}, \{(\bar{P}_l, \bar{\nu}_l,\bar{\rho}_l):l\in\mathbb{N}_L\}, \\
&\mbox{Sub. to: }\  \\
&\
P_i > 0,\ 
\scriptsize
\bm{\tilde{\rho}_i\I & P_i & \0 \\
P_i &-\mathcal{H}(A_iP_i + B_i\Tilde{K}_{i0})& -\I + \frac{1}{2}P_i\\
\0 & -\I + \frac{1}{2}P_i & -\nu_i\I}
\normalsize
> 0,\ \forall i\in\mathbb{N}_N, \label{GetPassivityIndicesDGs}\\
&\ 
\bar{P}_l > 0,\ 
\scriptsize
\bm{
\frac{2\bar{P}_lR_l}{L_l}-\bar{\rho}_l & -\frac{\bar{P}_l}{L_l}+\frac{1}{2}\\
    \star & -\bar{\nu}_l
} 
\geq0,\ \forall l\in\mathbb{N}_L, \\
&\mbox{and } \ \eqref{Eq:Transformed_Necessary_condition},
\end{aligned}
\end{equation}
where $K_{i0} \triangleq \tilde{K}_{i0}P_i^{-1}$ and $\rho_i\triangleq\frac{1}{\tilde{\rho}_i}$. 
\end{theorem}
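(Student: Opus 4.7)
The plan is to decompose the proof into three parallel obligations, each matched to one of the LMI blocks in the local synthesis problem, and then to verify that satisfying all three is sufficient to guarantee the hypotheses of Lemma \ref{Lm:CodesignConditions}.

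First, I would argue that the DG-side LMI is a direct instantiation of Corollary \ref{Col.LTI_LocalController_XEID}. The closed-loop DG error dynamics \eqref{Eq:DG_error_dynamic} have exactly the form $\dot{\tilde{x}}_i = (A_i + B_i K_{i0}) \tilde{x}_i + \tilde{u}_i$ with output equal to the state, so specializing that corollary to the IF-OFP supply rate \eqref{Eq:XEID_DG} (i.e.\ $X_i^{11} = -\nu_i \I$, $X_i^{12} = X_i^{21} = \tfrac{1}{2}\I$, $X_i^{22} = -\rho_i \I$, with $X_i^{22} < 0$ whenever $\rho_i > 0$) yields an LMI in $P_i, K_{i0}, \rho_i$. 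The bilinear term $B_i K_{i0} \cdot P_i$ and the reciprocal $\rho_i^{-1}$ in that LMI are linearized via the standard substitutions $\tilde{K}_{i0} \triangleq K_{i0} P_i$ and $\tilde{\rho}_i \triangleq 1/\rho_i$, producing precisely the first LMI block in the theorem statement. Feasibility therefore certifies that each $\tilde{\Sigma}_i^{DG}$ is IF-OFP($\nu_i, \rho_i$), and the recovered gain is $K_{i0} = \tilde{K}_{i0} P_i^{-1}$.

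Second, the line-side LMI is a verbatim restatement of the condition established in Lemma \ref{Lm:LineDissipativityStep}, so its feasibility guarantees that each $\tilde{\Sigma}_l^{Line}$ is IF-OFP($\bar{\nu}_l, \bar{\rho}_l$) with the returned indices. Third, I would show that the constraint \eqref{Eq:Transformed_Necessary_condition}, together with the passivity indices delivered by the first two steps, implies the block necessary condition \eqref{Eq:Neccessary_condition} from Lemma \ref{Lm:CodesignConditions}. The crucial observation is that $p_i$ and $\bar{p}_l$ are treated as prespecified design constants, so the entries $p_i \nu_i$ and $\bar{p}_l \bar{\nu}_l$ appear linearly in the remaining decision variables. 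The only genuinely nonlinear entry in \eqref{Eq:Neccessary_condition} is the diagonal term $p_i \rho_i = p_i / \tilde{\rho}_i$, which arises because the DG synthesis is parameterized through the reciprocal $\tilde{\rho}_i$. The passage from \eqref{Eq:Neccessary_condition} to \eqref{Eq:Transformed_Necessary_condition} is accomplished by isolating this reciprocal entry in an augmented block and invoking the Schur complement (Lemma \ref{Lm:Schur_comp}), producing an LMI that is jointly linear in $(\tilde{K}_{i0}, P_i, \nu_i, \tilde{\rho}_i, \tilde{\gamma}_i, \bar{P}_l, \bar{\nu}_l, \bar{\rho}_l)$ yet equivalent to the original block condition once $\rho_i = 1/\tilde{\rho}_i$ is substituted back.

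The main obstacle will be executing this last Schur-complement linearization carefully, since each of the $|\mathcal{E}_i|$ inequalities indexed by the physically-connected pair $(i,l)$ must be transformed consistently, the sign conventions on $\nu_i$ and $\bar{\nu}_l$ (which are typically nonpositive for non-passive subsystems) must be tracked to ensure the direction of the Schur complement is correct, and the off-diagonal coupling entries involving $\bar{C}_{il}$ and $C_{il} = -C_{ti}\bar{C}_{il}^\T$ must be preserved verbatim. Once this equivalence is verified pair-wise, the conjunction of the three feasibility certificates exactly reproduces the hypotheses of Lemma \ref{Lm:CodesignConditions}, which in turn are the necessary conditions for the global co-design LMI \eqref{Eq:Th:CentralizedTopologyDesign0} of Theorem \ref{Th:CentralizedTopologyDesign} to admit a solution, completing the proof.
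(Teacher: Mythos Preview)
Your high-level decomposition matches the paper exactly: step one invokes Corollary~\ref{Col.LTI_LocalController_XEID} with the IF-OFP supply rate \eqref{Eq:XEID_DG} and the linearizing substitutions $\tilde{K}_{i0}=K_{i0}P_i$, $\tilde{\rho}_i=1/\rho_i$; step two is verbatim Lemma~\ref{Lm:LineDissipativityStep}; step three derives \eqref{Eq:Transformed_Necessary_condition} from the block condition \eqref{Eq:Neccessary_condition} of Lemma~\ref{Lm:CodesignConditions}. On the first two parts your account is correct and essentially identical to the paper's.

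The gap is in step three. You assert that $p_i\rho_i = p_i/\tilde{\rho}_i$ is the \emph{only} nonlinear entry and that a single Schur-complement augmentation linearizes it while remaining equivalent. That is not what happens, and your mechanism would not produce \eqref{Eq:Transformed_Necessary_condition} as written. The paper instead applies a \emph{congruence transformation} to \eqref{Eq:Neccessary_condition} with $T=\diag(1,1,1,1/\rho_i,1,1)=\diag(1,1,1,\tilde{\rho}_i,1,1)$, which sends the $(4,4)$ entry $p_i\rho_i$ to $p_i\tilde{\rho}_i$ but simultaneously scales every off-diagonal entry in row/column~4 by $\tilde{\rho}_i$. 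This is why the $(3,4)$, $(4,5)$, $(4,6)$ entries in \eqref{Eq:Transformed_Necessary_condition} all carry an extra factor $\tilde{\rho}_i$. Crucially, the $(2,4)$ and $(4,2)$ entries, originally $-\bar{p}_l\bar{\nu}_lC_{il}$, become $-\bar{p}_l\bar{\nu}_l\tilde{\rho}_iC_{il}$, which is now \emph{bilinear} in the decision variables $\bar{\nu}_l$ and $\tilde{\rho}_i$. The paper handles this by introducing an auxiliary variable $\xi_{il}$ to replace each product $\bar{\nu}_l\tilde{\rho}_i$, together with the separate $3\times 3$ semidefinite constraint (with slacks $s_1,s_2$) that appears alongside the $6\times 6$ block in \eqref{Eq:Transformed_Necessary_condition}. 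Your variable list $(\tilde{K}_{i0},P_i,\nu_i,\tilde{\rho}_i,\tilde{\gamma}_i,\bar{P}_l,\bar{\nu}_l,\bar{\rho}_l)$ omits $\xi_{il},s_1,s_2$ entirely, and your claim of exact equivalence is too strong: the auxiliary-variable step is a relaxation, not an identity. Without recognizing the congruence (rather than Schur) move and the induced bilinearity it creates, your proposed step three cannot be completed as stated.
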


\begin{proof}
The proof proceeds as follows: (i) We start by considering the dynamic models of $\tilde{\Sigma}_i^{DG}$ and $\tilde{\Sigma}_i^{Line}$ as described in \eqref{Eq:DG_error_dynamic} and \eqref{Eq:Line_error_dynamic}, respectively. We then apply the LMI-based controller synthesis and analysis techniques from Co. \ref{Col.LTI_LocalController_XEID} and Lm. \ref{Lm:LineDissipativityStep} to enforce and identify the passivity indices of subsystems assumed in \eqref{Eq:XEID_DG} and \eqref{Eq:XEID_Line}, respectively. (ii) Next, we apply the LMI formulations from Co. \ref{Col.LTI_LocalController_XEID} to obtain the local controller gains $K_{i0}$ and the passivity indices $(\nu_i,\rho_i)$ for $\tilde{\Sigma}_i^{DG}$. Similarly, we apply Lm. \ref{Lm:LineDissipativityStep} to identify the passivity indices $(\bar{\nu}_l,\bar{\rho}_l)$ for $\tilde{\Sigma}_i^{Line}$. (iii) To handle the transformation from $\rho_i$ to $\tilde{\rho}_i$ where $\tilde{\rho}_i = \rho_i^{-1}$, we apply a congruence transformation to equation \eqref{Eq:Neccessary_condition}. Using the transformation matrix $T = \diag(1, 1, 1, 1/\rho_i, 1, 1)$, the term $p_i\rho_i$ at position (4,4)  of \eqref{Eq:Neccessary_condition} becomes $p_i\tilde{\rho}_i$. This transformation also affects other elements in position (4,4), resulting in the transformed equation \eqref{Eq:Transformed_Necessary_condition}. The bilinear terms involving $\bar{\nu}_l\tilde{\rho}_i$ in positions (2,4), (4,2), and other locations in position (4,4) require special handling. For each bilinear term $\bar{\nu}_l\tilde{\rho}_i$, we introduce an auxiliary variable $\xi_{il}$ and add the Schur complement constraint:
\begin{equation}
\begin{bmatrix} 
1 & \bar{\nu}_l & \tilde{\rho}_i \\
\bar{\nu}_l & s_1 & \xi_{il} \\
\tilde{\rho}_i & \xi_{il} & s_2
\end{bmatrix} \geq 0,
\end{equation}
where $s_1$ and $s_2$ are semidefinite variables determined during optimization. This constraint enforces $\xi_{il} \geq \bar{\nu}_l\tilde{\rho}_i$, allowing us to replace each bilinear term with $\xi_{il}$ in the transformed equation. (iv) Finally, we impose the necessary conditions on the passivity indices of subsystems identified in Lm. \ref{Lm:CodesignConditions} to support the feasibility and effectiveness of the global control and communication topology co-design approach presented in Th. \ref{Th:CentralizedTopologyDesign}. The constraint \eqref{Eq:Transformed_Necessary_condition} is derived from the necessary condition established in Lm. \ref{Lm:CodesignConditions}, ensuring that the local controllers and their passivity indices are compatible with the global co-design problem. By formulating this unified LMI problem, we obtain a one-shot approach to simultaneously design local controllers and determine passivity indices that guarantee the feasibility of the global co-design problem.
\end{proof}

\begin{figure*}[!hb]
\vspace{-5mm}
\centering
\hrulefill
\begin{equation}\label{Eq:Transformed_Necessary_condition}
\begin{bmatrix} 
1 & \bar{\nu}_l & \tilde{\rho}_i \\
\bar{\nu}_l & s_1 & \xi_{il} \\
\tilde{\rho}_i & \xi_{il} & s_2
\end{bmatrix} \geq 0,\
\scriptsize
	\bm{
		-p_i\nu_i & 0 & 0 & 0 & -p_i\nu_i\bar{C}_{il} & -p_i\nu_i\\
		0 & -\bar{p}_l\bar{\nu}_l & 0 & -\bar{p}_l\xi_{il}C_{il} & 0 & -\bar{p}_l\bar{\nu}_l \\
		0 & 0 & 1 & \tilde{\rho}_i & 1 & 0 \\
		0 & -C_{il}\xi_{il}\bar{p}_l & \tilde{\rho}_i & p_i\tilde{\rho}_i & -\frac{1}{2}p_i\bar{C}_{il}\tilde{\rho}_i-\frac{1}{2}C_{il}\bar{p}_l\tilde{\rho}_i & -\frac{1}{2}p_i\tilde{\rho}_i \\
		-\bar{C}_{il}\nu_ip_i & 0 & 1 & -\frac{1}{2}\bar{C}_{il}p_i\tilde{\rho}_i-\frac{1}{2}\bar{p}_lC_{il}\tilde{\rho}_i & \bar{p}_l\bar{\rho}_l & -\frac{1}{2}\bar{p}_l \\ 
		-\nu_ip_i & -\bar{p}_l\bar{\nu}_l & 0 & -\frac{1}{2}p_i\tilde{\rho}_i & -\frac{1}{2}\bar{p}_l & \tilde{\gamma}_i \\
	}\normalsize 
 >0,\ \forall l\in \mathcal{E}_i, \forall i\in\N_N
\end{equation}
\end{figure*}

\subsection{Overview}

In the proposed co-design process, we first select the design parameters $p_i, \forall i \in \mathbb{N}_N$ and $\bar{p}_l, \forall l \in \mathbb{N}_L$ (for more details, see \cite{Najafirad2024Ax1}). Next, we synthesize the local controllers using Th. \ref{Th:LocalControllerDesign}, to obtain DG and line passivity indices $\{\rho_i, \nu_i : \forall i \in \mathbb{N}_N\}$ \eqref{Eq:XEID_DG} and $\{\bar{\rho}_l, \bar{\nu}_l : \forall l \in \mathbb{N}_L\}$ \eqref{Eq:XEID_Line}, respectively. Finally, we synthesize the distributed global controller and the communication topology of the DC MG using Th. \ref{Th:CentralizedTopologyDesign}.

\section{Simulation Results}\label{Simulation}
In this section, we have evaluated the effectiveness of the proposed dissipativity-based controller in an islanded DC MG using SimPowerSystem/Simulink MATLAB environment. The DC MG test system consists of 4 DGs and 4 transmission lines, as illustrated in Fig. \ref{fig.physicalcommunicationtopology}, where each DG is equipped with its corresponding load consisting of local constant impedance $Y_L$ and constant current $\bar{I}_L$ loads. DGs are modeled as DC-DC converters with a nominal voltage of 120 V, and the reference voltage amplitude was set to $V_r=48$ V. Table I lists the mean values of the DG and line parameters. To create a heterogeneous testing environment that better reflects real-world conditions, we introduced random variations $\pm20\%$ from these mean values for each component. Simply, each DG, load and transmission line parameter was selected by adding a random variation to the corresponding mean nominal parameter value given in Table I. The physical topology of the DC MG was randomly generated using a random geometric graph generation technique with a connectivity parameter set to 0.6, following the approach in \cite{zhang2018robustness}. Fig. \ref{fig.physicalcommunicationtopology} shows the physical topology of the islanded DC MG used in our simulations.

\begin{figure}
    \centering
\includegraphics[width=0.8\columnwidth]{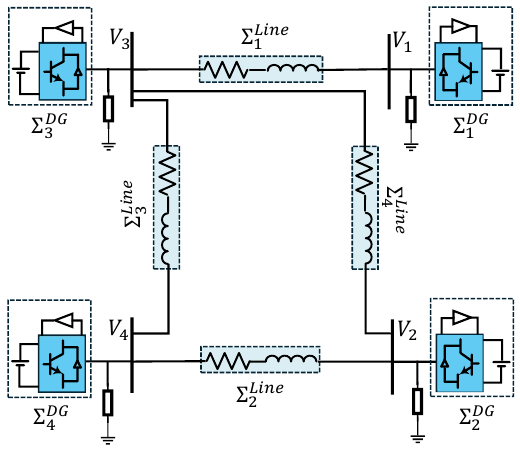}
    \caption{The physical topology of DC MG with 4 DGs and 4 lines.}
    \label{fig.physicalcommunicationtopology}
\end{figure}

Our implementation is carried out according to the systematic co-design process. First, we addressed the equilibrium design problem formulated in Th. \ref{Th:VRegulation_CSharing}, selecting balanced weighting parameters $\alpha_V = 1$ and $\alpha_I = 0.1$ to achieve an optimal trade-off between voltage regulation and current sharing objectives. This optimization yielded the reference voltage $V_r$ and the corresponding current sharing index $I_s$, which were subsequently used to calculate the steady-state control inputs according to \eqref{Eq:Rm:SteadyStateControl}. For the synthesis of the controller, we selected the design parameters $p_i = 0.1,\forall i\in\N_N$, and $\bar{p}_l = 0.01,\forall l\in\N_L$ to obtain the feasible solution. Using Th. \ref{Th:LocalControllerDesign}, we synthesized the local controllers, which yielded the DG and line passivity indices. These indices were then employed in the distributed global controller and communication topology co-design problem formulated in Th. \ref{Th:CentralizedTopologyDesign}. All LMI problems were efficiently solved using the MOSEK solver with default tolerance settings, demonstrating the computational practicality of our approach. 
\begin{table}
\caption{Parameters of the islanded DC MG test system\label{parameters}}
\centering
\renewcommand{\arraystretch}{1}

\makeatletter
\def\thickhline{%
  \noalign{\ifnum0=`}\fi\hrule \@height \thickarrayrulewidth \futurelet
   \reserved@a\@xthickhline}
\def\@xthickhline{\ifx\reserved@a\thickhline
               \vskip\doublerulesep
               \vskip-\thickarrayrulewidth
             \fi
      \ifnum0=`{\fi}}
\makeatother

\newlength{\thickarrayrulewidth}
\setlength{\thickarrayrulewidth}{2.5\arrayrulewidth}

\begin{tabular}{c | c c c c}
  \thickhline
  Characteristics & DG 1 & DG 2 & DG 3 & DG 4 \\
   \hline \hline
  Power rating & 800 W & 700 W & 700 W & 900 W \\
  \hline
  \multicolumn{1}{c|}{Internal resistance} & \multicolumn{4}{|c}{$R_t$ = 50 m$\Omega$} \\
  \hline
  \multicolumn{1}{c|}{Internal inductance} & \multicolumn{4}{|c}{$L_t$ = 10 mH} \\
  \hline
  \multicolumn{1}{c|}{Filter capacitance} & \multicolumn{4}{|c}{$C_t$ = 2.2 mF} \\
  \hline
  \multicolumn{1}{c|}{Constant impedance load} & \multicolumn{4}{|c}{$R_L = 1/Y_L$ = 5 $\Omega$} \\
  \hline
  \multicolumn{1}{c|}{Constant current load} & \multicolumn{4}{|c}{$\bar{I}_{L}$ = 3 A} \\
  \hline
  \multicolumn{1}{c|}{Line resistance} & \multicolumn{4}{|c}{$R_l$ = 20 m$\Omega$} \\
  \hline
  \multicolumn{1}{c|}{Line inductance} & \multicolumn{4}{|c}{$L_l$ = 10 mH} \\
  \hline
  \multicolumn{1}{c|}{Reference voltage} & \multicolumn{4}{|c}{ $V_{r}$ = 48 V} \\
  \hline
  \thickhline
\end{tabular}
\end{table}

\subsection{Control Layer Activation}
In this section, we analyze the hierarchical activation of different control layers and demonstrate their effectiveness in achieving voltage regulation and current sharing in DC MGs. Figures \ref{fig.outputvoltage} and \ref{fig.outputperunitcurrent} show the voltages and per-unit currents for each DG, respectively, as the control layers are progressively activated.
During the initial phase $t\in[0,1)$ s, no control input is applied to the PWM modules of the DC-DC converters, resulting in unregulated output voltages. Next, as shown in Fig. \ref{fig.outputvoltage}, steady-state control is activated at $t=1$ s, which causes a voltage deviation from the reference value and cannot fully address voltage regulation. Therefore, the local control layer is introduced at $t=3$ s to compensate for this deviation to achieve precise voltage regulation. The implementation of local controllers successfully restores the voltages to the desired reference voltage $V_r$, demonstrating the effectiveness of the local control layer for voltage regulation. 

\begin{figure}
    \centering
    \includegraphics[width=0.9\columnwidth]{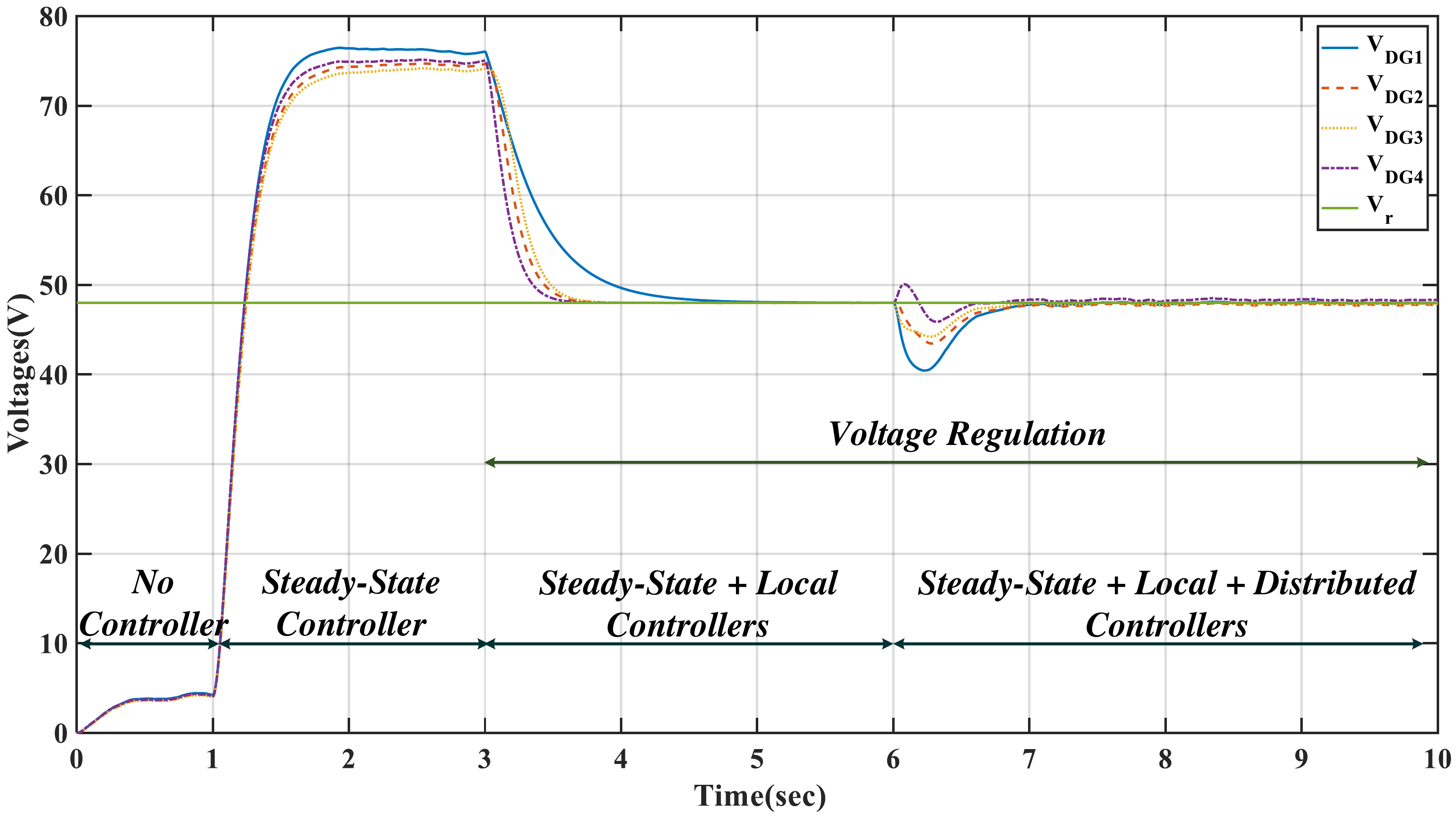}
    \caption{The output voltage magnitudes of DGs using proposed dissipativity-based control.}
    \label{fig.outputvoltage}
\end{figure}

\begin{figure}
    \centering
    \includegraphics[width=0.9\columnwidth]{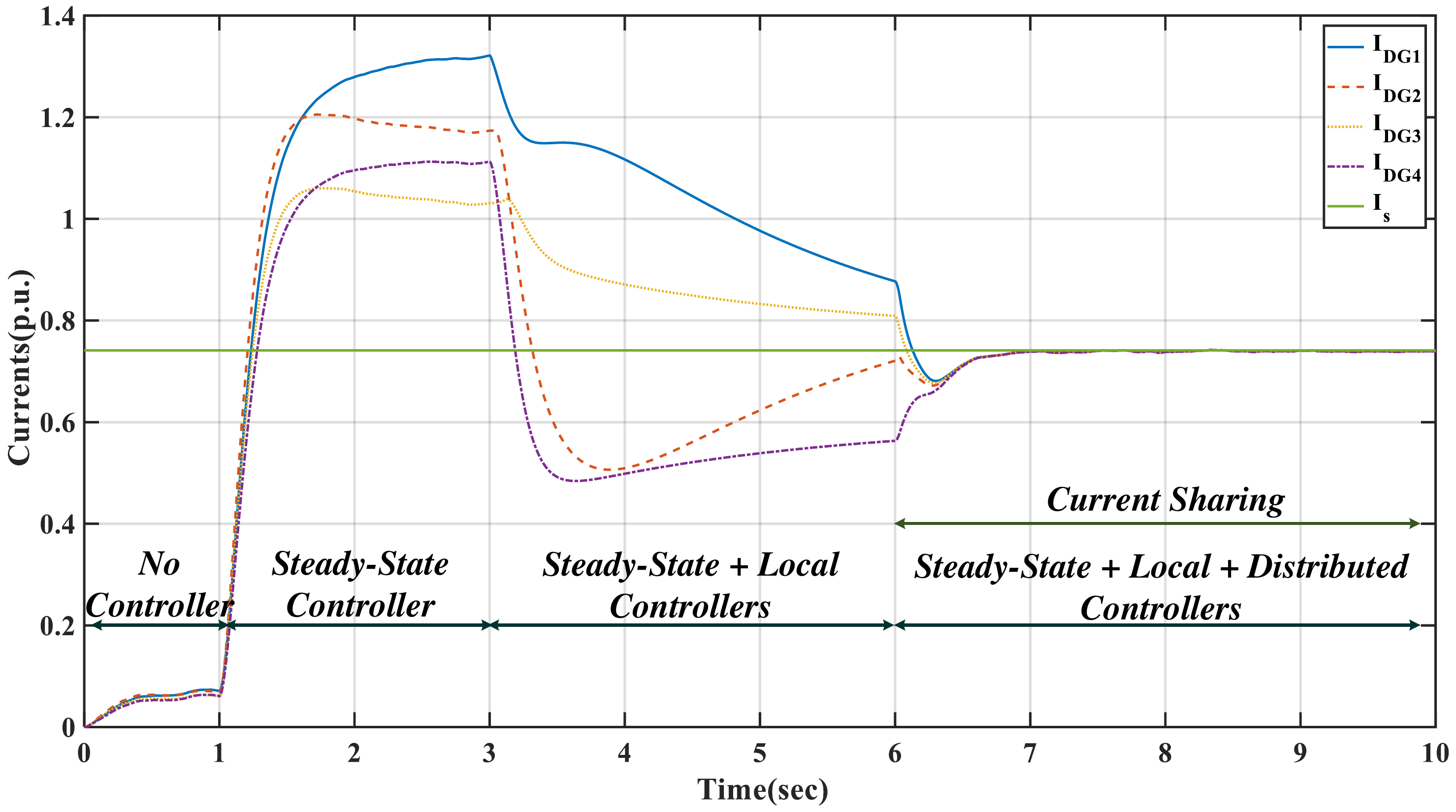}
    \caption{The output per-unit current magnitudes of DGs using proposed dissipativity-based control.}
    \label{fig.outputperunitcurrent}
\end{figure}

To ensure proportional current sharing among the DGs, the distributed control layer is activated at $t=6$ s. Figure \ref{fig.outputperunitcurrent} shows that after the introduction of distributed global control, the per-unit currents converge to an index value of $I_s=0.735$, thus ensuring proper current sharing among DGs. This convergence value closely matches the optimized value $I_s=0.740$ obtained from Th. \ref{Th:VRegulation_CSharing}, validating the proposed optimization framework.

These sequential activation results demonstrate the contribution of our hierarchical control strategy. The local control layer successfully achieves voltage regulation, bringing voltages to their reference values. Subsequently, the distributed control layer ensures proper current sharing without compromising voltage regulation performance. This hierarchical approach effectively coordinates local and distributed control objectives to meet the dual requirements of voltage regulation and current sharing in DC MGs.

\subsection{Load Changes}
To evaluate the robustness of the proposed controller, each DG was subjected to a series of load variations. Specifically, at $t=2$ s, a constant current load $\bar{I}_L$ was added to all DGs, followed by an increase in the existing constant impedance load $Y_L$ at $t=4$ s, which was subsequently restored to its original value at $t=8$ s. 

Fig. \ref{fig.loadchange}(a) shows that throughout these load variations, the DG output voltages consistently follow the reference value $V_r$ of 48 V with minimal transient deviations. The output voltages exhibit rapid restoration to the reference value after each disturbance, with settling times typically under 0.5 s. This fast voltage regulation performance underscores the effectiveness of the local control layer in maintaining voltage stability despite significant load perturbations.

Fig. \ref{fig.loadchange}(b) illustrates the current sharing performance of the proposed controller. Despite abrupt load changes, the per-unit currents of all DGs maintain proportional sharing according to their rated power. This balanced current sharing is particularly notable during the transient periods following load changes, where the controller quickly redistributes the load currents among DGs without compromising voltage regulation.

\begin{figure}
    \centering
    \includegraphics[width=0.9\columnwidth]{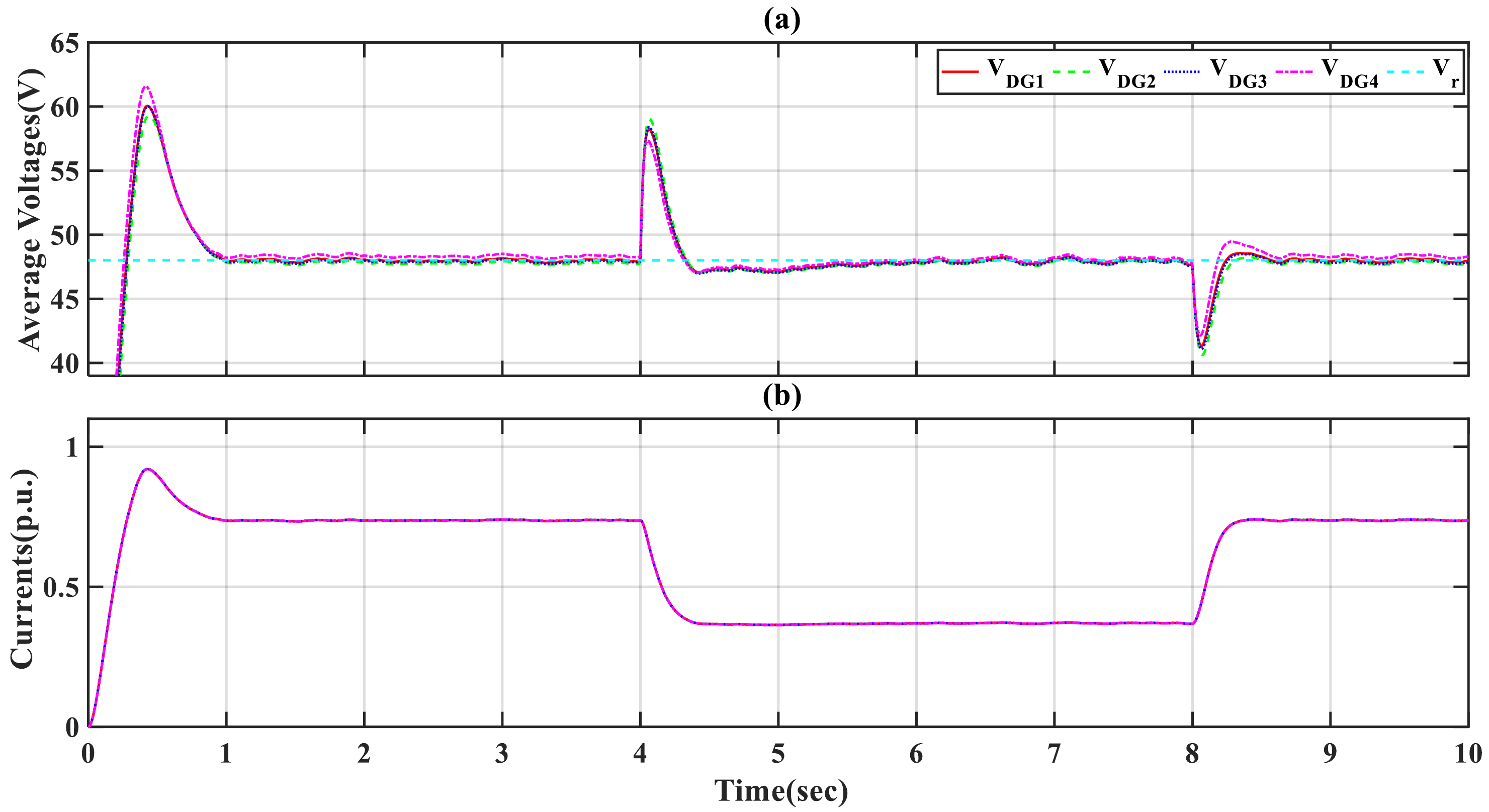}
    \caption{The DG output (a) voltages and (b) per-unit currents under load changes using dissipativity-based controllers in DC MG.}
    \label{fig.loadchange}
\end{figure}

\subsection{Communication Topology Co-design}

In this part, we considered two co-design variants with (i) a hard graph constraint that aligns the communication topology $\mathcal{G}^c$ with the physical topology $\mathcal{G}^p$, and (ii) a soft graph constraint that only penalizes the use of communication links that do not align with the physical topology $\mathcal{G}^p$. Therefore, the proposed co-design framework (Th. \ref{Th:CentralizedTopologyDesign}) was applied using two different approaches to shape the resulting communication topology. In the first scenario, we include an additional constraint in the co-design process to restrict the resulting communication topology to exactly match the physical topology (but with bi-directional links). Therefore, this scenario designs the communication topology under a ``hard graph constraint.'' In the second scenario, we omit such hard graph constraints but penalize the use of communication links. Therefore, this scenario designs the communication topology under ``soft graph constraints.'' 

\begin{figure}
    \centering
    \includegraphics[width=0.9\columnwidth]{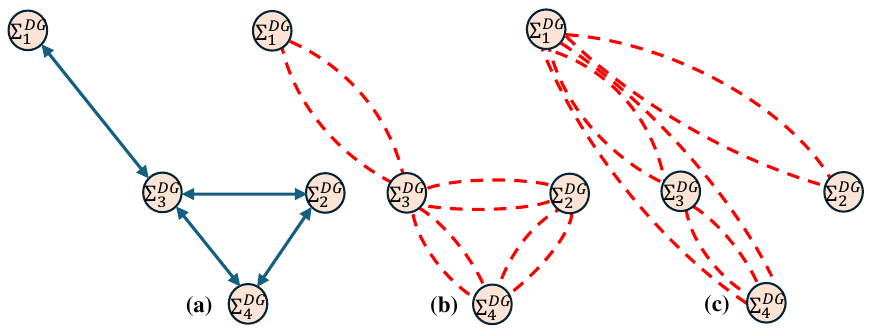}
    \caption{The (a) physical topology, (b) hard, and (c) soft communication graph constraints for DC MGs.}
    \label{fig.communicationtopology}
\end{figure}

Figure \ref{fig.communicationtopology} shows the physical topology and communication topologies under hard and soft graph constraints. The hard constraint approach in Fig. \ref{fig.communicationtopology}(b) limits communication links to physically adjacent DGs only. The soft constraint implementation in Fig. \ref{fig.communicationtopology}(c) creates optimized communication links between distant DGs (such as DG1-DG2 and DG1-DG4). This shows that the controller can identify the most beneficial communication connections for MG performance, even when the DGs are physically far apart and would normally have poor coordination.

The resulting controller gain matrices reveal substantial differences between the two approaches. As illustrated in Fig. \ref{fig.compare_Com}, the hard constraint approach generates significantly higher gain values, with maximum gains approximately 5 times higher than the soft constraint case. The quantitative comparison shows maximum gains of 1268 for hard constraints versus 270 for soft constraints, and average gains of 826 versus 117, respectively. Lower gain magnitudes in the soft constraint approach offer significant advantages by reducing control effort, reducing energy consumption, and decreasing sensitivity to measurement noise.

\begin{figure}
    \centering
    \includegraphics[width=0.9\columnwidth]{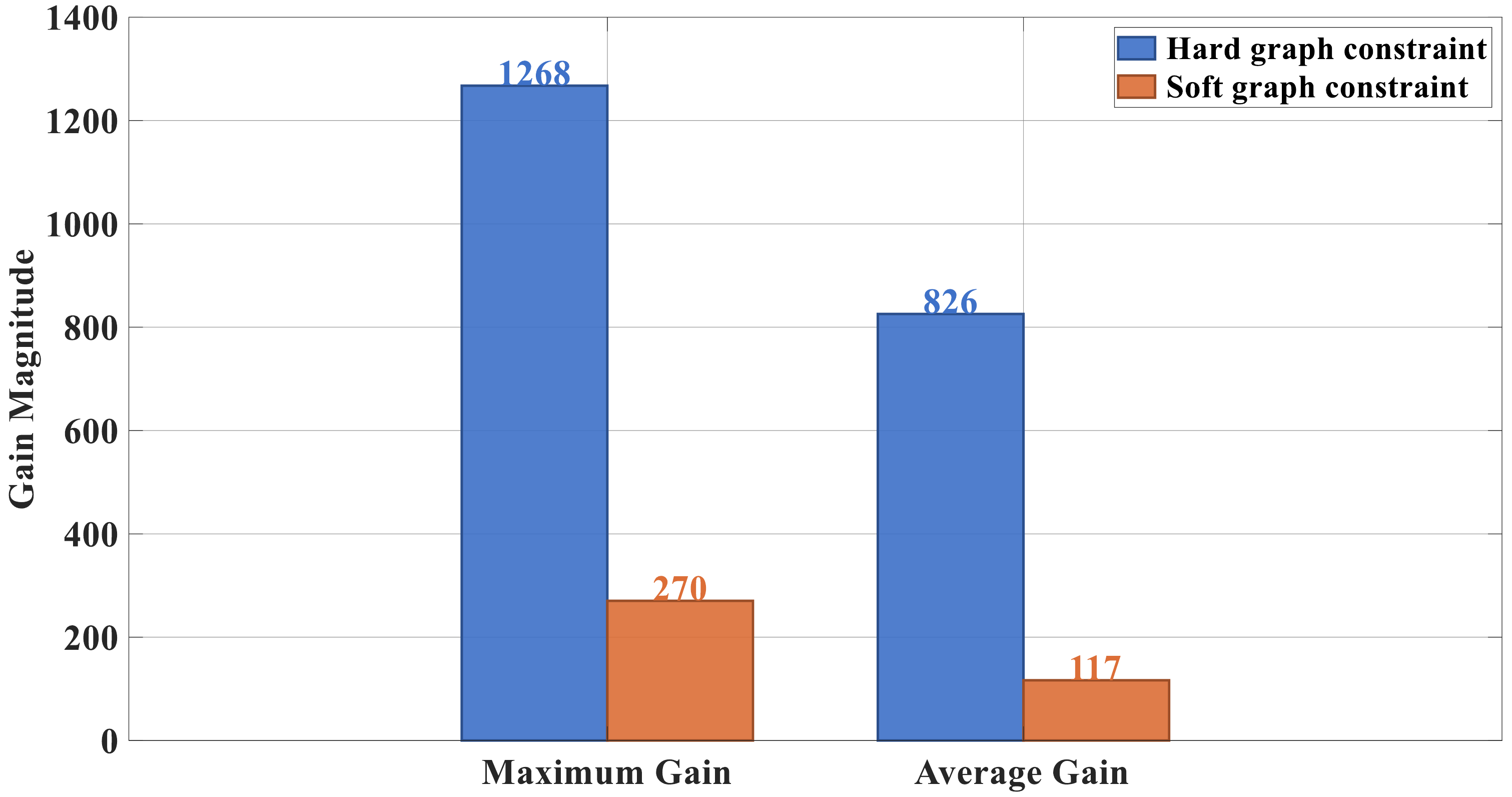}    
    \caption{Comparison of controller gain metrics between hard and soft constraint approaches.}
    \label{fig.compare_Com}
\end{figure}

Dynamic performance during load changes demonstrates that both approaches successfully regulate voltages to the reference value of 48 V. However, as shown in Fig. \ref{fig.ComTop_Perf}, the implementation of soft constraint exhibits superior disturbance rejection capabilities, achieving reduced voltage deviations and faster settling times. The soft constraint approach delivers better dynamic performance while utilizing substantially lower controller gains. This improvement highlights the advantages of an optimized communication topology that establishes strategic links between physically distant DGs. By connecting DGs that benefit the most from coordination rather than following the physical topology, the control system achieves better information exchange with less aggressive control actions.

\begin{figure}
    \centering
    \includegraphics[width=0.9\columnwidth]{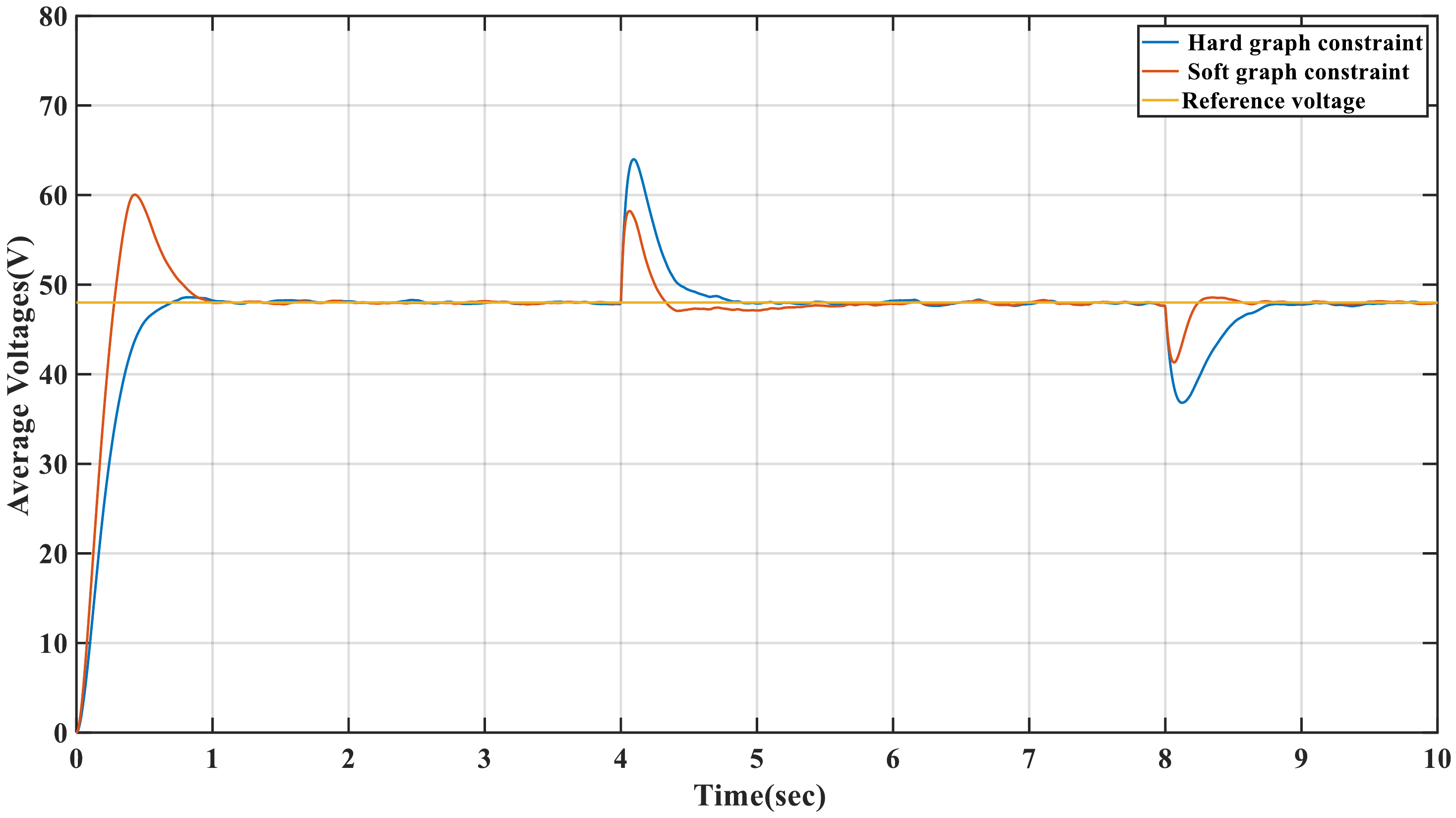}
    \caption{The DG output average voltages under load changes with hard and soft constraint controllers.}
    \label{fig.ComTop_Perf}
\end{figure}

\subsection{Disturbance Rejection}
To evaluate the robustness of our dissipativity-based control, we incorporated various disturbances into the DC MG. For each DG, we include voltage and current disturbances as described in \eqref{DGEQ}, modeled as zero-mean Gaussian noises with variances of $\sigma_{vi}^2 = 0.5$ and $\sigma_{ci}^2 = 0.5$, respectively.
Furthermore, we introduced line disturbances as discussed in \eqref{line}, where the uncertainty of line resistance $\Delta R_l(t)$ manifests itself as the disturbance term $\bar{w}_l(t)$ with variance $\sigma_l^2 = 0.5$. 

The simulation sequence was designed to evaluate the performance of the proposed controller with all disturbances active throughout the simulation period, including during load changes at $t=4$ s and $t=8$ s. Fig. \ref{fig.dist_Rej}(a) illustrates the comparison between the average voltages with and without disturbances, and Fig. \ref{fig.dist_Rej}(b) shows the corresponding comparison for current sharing performance. Despite the relatively high magnitude of disturbance, the results demonstrate that the proposed controller achieves robust performance with only minor steady-state oscillations of approximately $\pm1$ V.

\begin{figure}
    \centering
    \includegraphics[width=0.9\columnwidth]{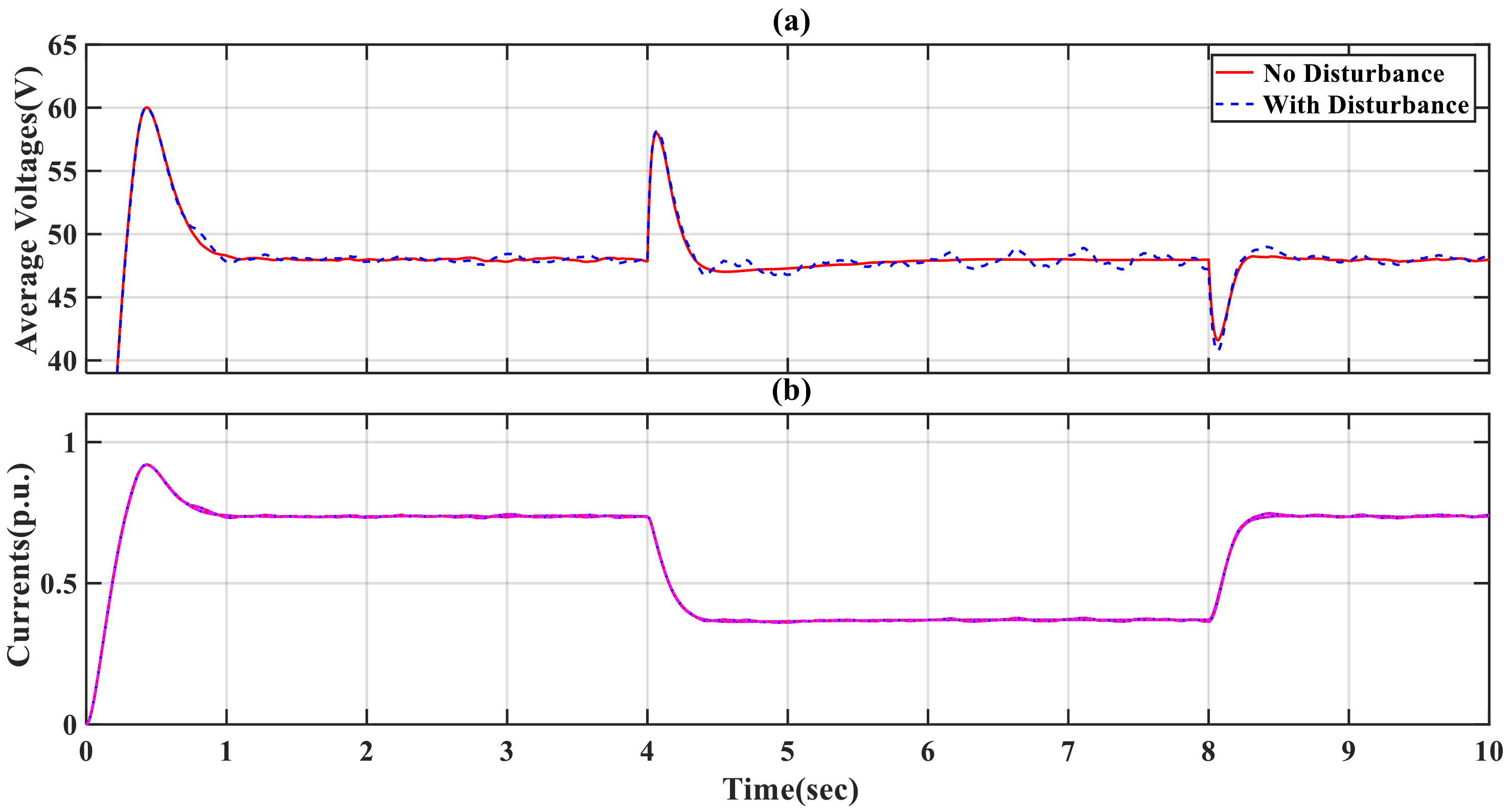}
    \caption{Disturbance rejection performance of the proposed controller. (a) Average DG voltages and (b) current sharing performance with and without disturbances.}
    \label{fig.dist_Rej}
\end{figure}

The dissipativity-based control framework successfully maintains both the voltage regulation and current sharing objectives even in the presence of significant disturbances. While some initial oscillations are observed during transient periods, the controller quickly dampens these effects and recovers to the reference values. This robust performance can be attributed to the inherent disturbance rejection capabilities of the dissipativity-based approach, which explicitly accounts for system uncertainties in the control design process.

\subsection{Comparison with Traditional Droop Control}
We conducted a comprehensive performance evaluation of the proposed dissipativity-based controller and conventional droop-based control for voltage regulation in the DC MG. To ensure a fair comparison, the dissipativity-based co-design process employed hard graph constraints, which resulted in a communication topology identical to the physical topology. All DGs and transmission line parameters remained consistent in both methods. The gains of the droop controller were tuned to optimize performance, using the droop controller from a well-established paper in \cite{guo2018distributed}.

Figure \ref{fig.droop_control} presents a comparison of the average voltage performance between the droop control and the proposed droop-free dissipativity-based control. Both the proposed dissipativity-based controller and the droop control demonstrate voltage overshoots during load changes, but with significant differences in magnitude. The droop control exhibits significantly larger voltage peaks compared to the more moderate overshoot of the dissipativity-based control. These substantial voltage variations demonstrate the limited ability of the droop control to respond effectively to dynamic load changes. Furthermore, droop control initially exhibits a voltage drop that requires the activation of a secondary control mechanism \cite{Najafirad2} at $t=1$ s to restore the voltage to the reference value.

This comparison underscores the limitations of the droop control approach, which requires additional control layers, i.e.,  secondary or distributed control, to compensate for both voltage deviations and overshoot. The proposed dissipativity-based controller maintains stable voltage regulation without additional control layers and provides smoother voltage restoration with better performance during load changes. These results demonstrate the superiority of the proposed controller over conventional droop controllers and establish it as a more effective solution for DC MG applications.

\begin{figure}
    \centering
    \includegraphics[width=0.9\columnwidth]{Figures/Droop_control.pdf}
    \caption{Comparison of average voltage regulation between the proposed dissipativity-based controller and droop controllers for DC MG.}
    \label{fig.droop_control}
\end{figure}

\subsection{Scalability}
In this section, we examine a system with significantly increased complexity compared to the 4-DG case, providing a realistic evaluation of performance in large-scale applications. To evaluate scalability, we applied the proposed controller to a larger DC MG with 10 DGs and 24 transmission lines. Figure \ref{fig.scalability} shows the electrical configuration of this expanded DC MG. This comprehensive test covers all previous scenarios, including load changes, disturbances, and communication topology co-design.

Figure \ref{fig.scalablity_communication} compares communication topologies under hard and soft graph constraints for the 10-DG DC MG. The hard constraint approach in Fig. \ref{fig.scalablity_communication}(a) restricts communication links to follow the physical network structure. However, the soft constraint method in Fig. \ref{fig.scalablity_communication}(b) creates optimized links between distant DGs. This optimization identifies the communication links that are the most beneficial for system performance and enhances overall system stability.

Figure \ref{fig.10DGs} confirms robust performance in large-scale DC MG. Fig. \ref{fig.10DGs}(a) shows that all out voltages follow the 48 V reference with fast restoration after disturbances. Fig. \ref{fig.10DGs}(b) demonstrates proper current sharing where the per-unit currents maintain a proportional distribution based on the power ratings. These results confirm that the controller successfully handles increased system complexity while achieving both voltage regulation and balanced current sharing objectives.

\begin{figure}
    \centering
    \includegraphics[width=0.9\columnwidth]{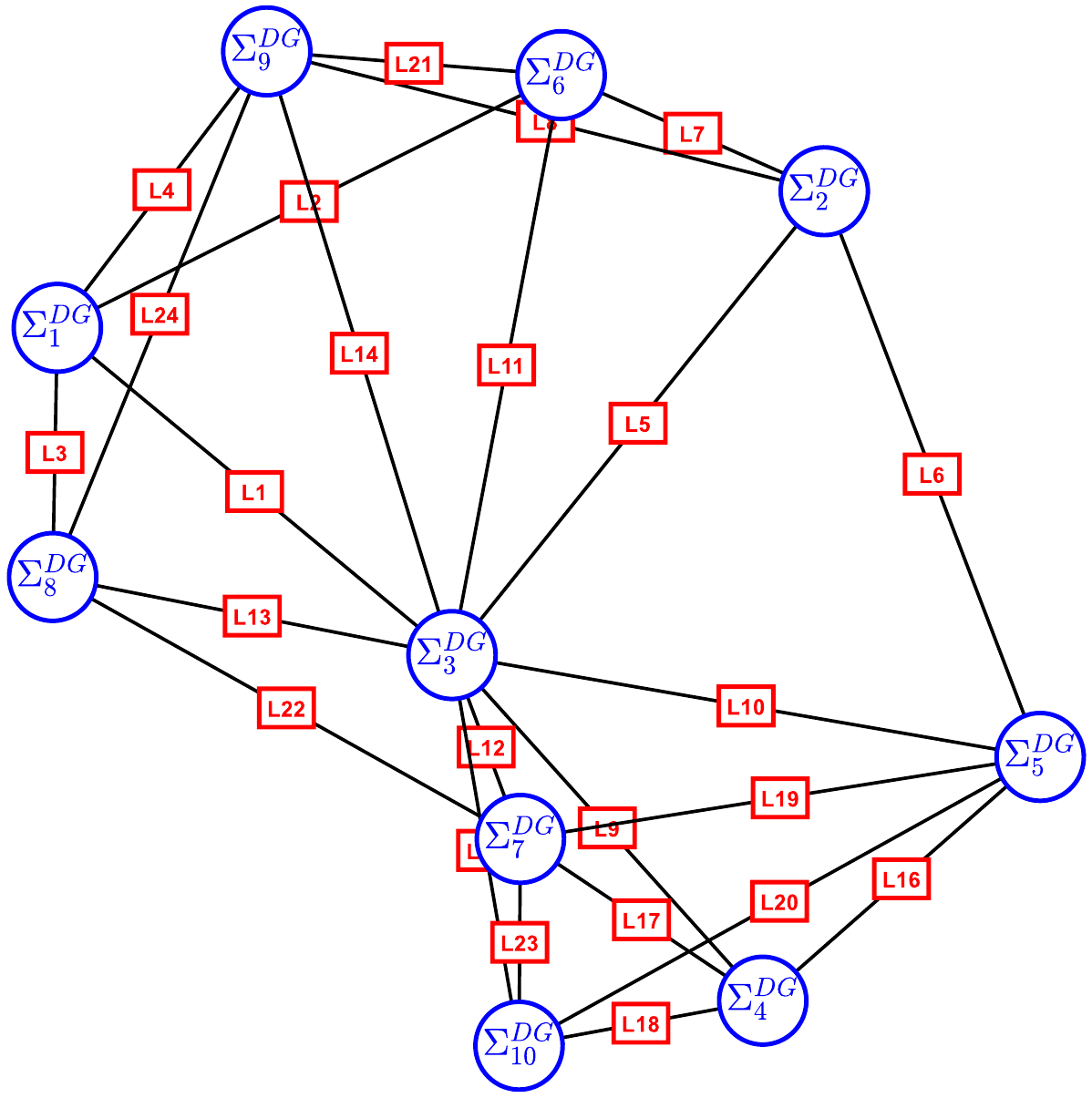}
    \caption{The physical topology of DC MG with 10 DGs and 24 transmission lines.}
    \label{fig.scalability}
\end{figure}

\begin{figure}
    \centering
    \includegraphics[width=1\columnwidth]{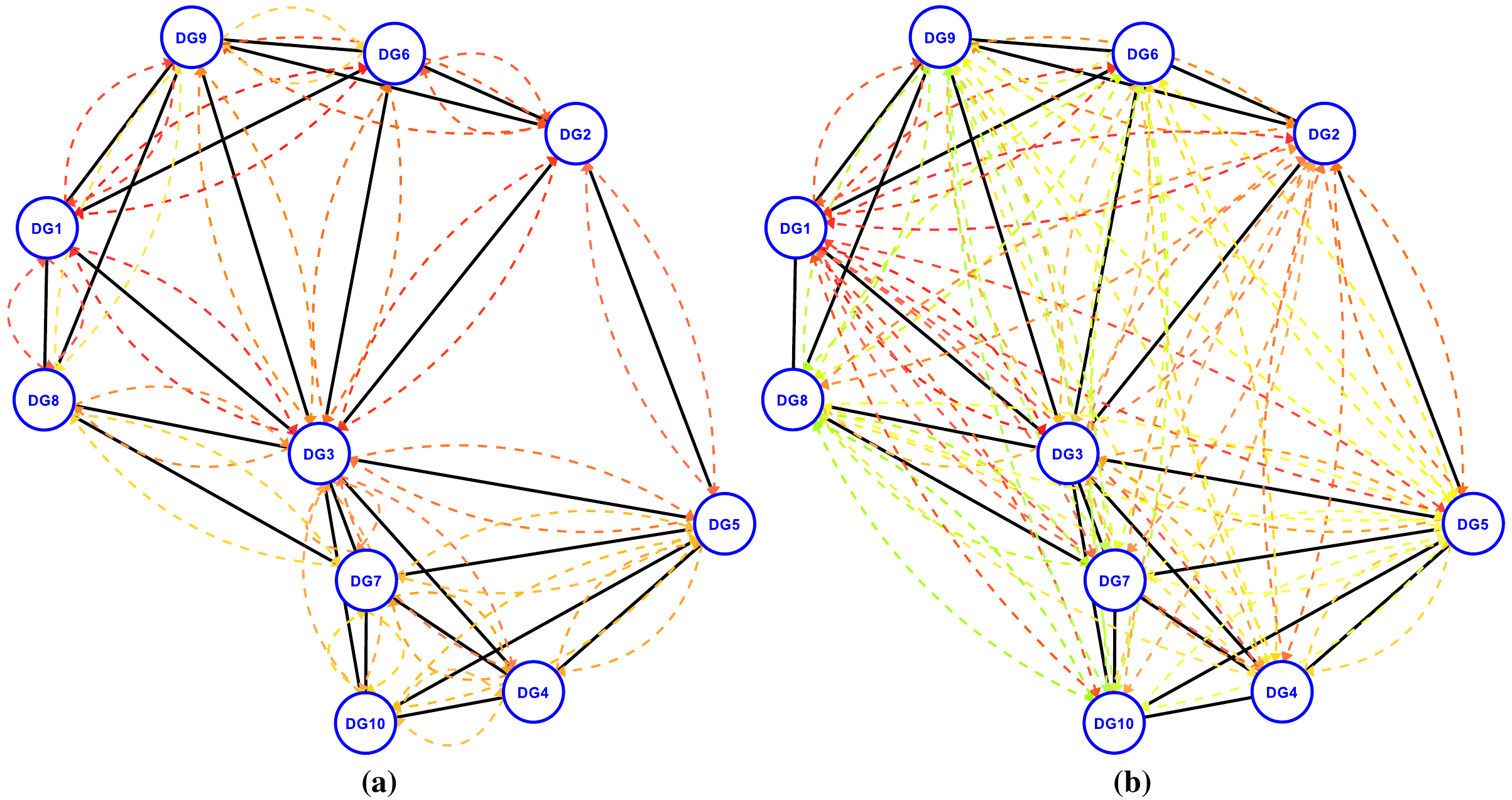}
    \caption{The communication topology for DC MG shown in Fig. \ref{fig.scalability} under (a) hard and (b) soft graph constraints.}
    \label{fig.scalablity_communication}
\end{figure}

\begin{figure}
    \centering
    \includegraphics[width=1\columnwidth]{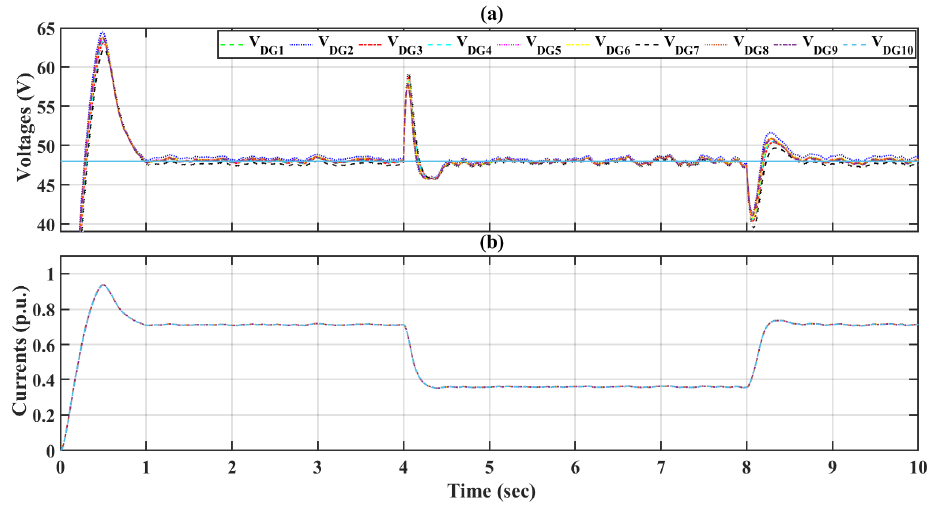}
    \caption{The DG output (a) voltages and (b) per-unit currents using the proposed dissipativity-based controllers in DC MG with 10 DGs and 24 transmission lines.}
    \label{fig.10DGs}
\end{figure}

\section{Conclusion}\label{Conclusion}
This paper proposes a dissipativity-based distributed control and communication topology co-design approach for DC MGs. By applying dissipativity theory, we develop a unified framework that simultaneously addresses distributed global controller and communication topology design problems in networked DC MG. To ensure the feasibility of this global co-design process, we employ carefully designed local controllers at each subsystem that guarantee local dissipativity properties and support the overall network coordination.
We formulate all design problems as LMI-based convex optimization problems to enable efficient and scalable evaluations. 
The co-design process establishes an optimized communication network that enables effective information exchange between DGs. The proposed droop-free controller operates on this optimized topology to ensure robust voltage regulation and current sharing under various disturbances. This approach shows significant advantages compared to conventional droop control methods. Future work will focus on developing a decentralized and compositional co-design framework that enables plug-and-play operation.

\bibliographystyle{IEEEtran}
\bibliography{References}

\end{document}